\numberwithin{equation}{section}
\newcommand{\Q}{
\mathbb{Q}
}
\newcommand{\N}{
\mathbb{N}
}
\newcommand{\Z}{
\mathbb{Z}
}
\newcommand{\R}{
\mathbb{R}
}
\newcommand{\C}{
\mathbb{C}
}
\newcommand{\RP}{
\mathbb{RP}
}
\newcommand{\tp}{
\otimes
}
\newlength{\wurelwidth}
\newcommand{\wurel}[2][=]{\mathrel{\mathop{#1}_{\!\scalebox{0.5}{\makebox[\the\wurelwidth]{#2}}\!}}}
\newcommand{\loops}[1]{\abs{#1}}
\newcommand{\Dim}{D}
\newcommand{\Graph}[2][1.0]{%
\vcenter{\hbox{\includegraphics[scale=#1]{Graphs/#2}}}%
}
\newcommand{\phipol}{\varphi}
\newcommand{\psipol}{\psi}
\newcommand{\hide}[1]{}
\newcommand{\bigo}[1]{\mathcal{O}\left(#1\right)}
\newcommand{\conjugate}[1]{#1^{\ast}}
\newcommand{\dd}[1][]{\mathrm{d}^{#1}}
\newcommand{\restrict}[2]{%
{\left. #1 \right|}_{#2}%
}
\newcommand{\defas}{
\mathrel{\mathop:}=
}
\newcommand{\set}[1]{
\left\{ #1 \right\}
}
\newcommand{\setexp}[2]{
\left\{ #1\!:\ #2 \right\}
}
\newcommand{\abs}[1]{
\left\lvert #1 \right\rvert
}
\newcommand{\SP}{\alpha}
\newcommand{\ep}{a}
\newcommand{\epe}{\nu}
\newcommand{\Hyper}{L}
\DeclareMathOperator{\Li}{Li}
\renewcommand{\L}{L}
\newcommand{\mzv}[2][]{\zeta_{#2}^{#1}}
\newcommand{\cupdot}{\mathbin{\dot{\cup}}}
\DeclareMathOperator{\sdd}{sdd}
\newcommand{\SDD}[2]{\omega_{#1}^{#2}}
\newcommand{\anapartial}[2]{\mathcal{D}_{#1}^{#2}}
\newcommand{\kinematics}{\Theta}
\newcommand{\resultant}[3][]{\left[#2,#3\right]_{#1}}
\newcommand{\discriminant}[2][]{D_{#1}\left( #2 \right)}
\newcommand{\BarObjects}{\mathscr{B}}
\DeclareMathOperator{\ldeg}{deg}
\DeclareMathOperator{\vw}{vw}
\newtheorem{theorem}{Theorem}[section]
\newtheorem{definition}[theorem]{Definition}
\newtheorem{lemma}[theorem]{Lemma}
\newtheorem{corollary}[theorem]{Corollary}
\newtheorem{example}[theorem]{Example}
\newtheorem{remark}[theorem]{Remark}
\title{On hyperlogarithms and Feynman integrals with divergences and many scales}
\author{Erik Panzer}
\emailAdd{panzer@mathematik.hu-berlin.de}
\affiliation{%
Institutes of Physics and Mathematics,
Humboldt-Universit\"{a}t zu Berlin\\
Unter den Linden 6,
10099 Berlin, Germany%
}
\date{\today}
\abstract{
		Hyperlogarithms provide a tool to carry out Feynman integrals in Schwinger parameters. So far, this method has been applied successfully mostly to finite single-scale processes. However, it can be employed in more general situations. 
		
		We give examples of integrations of three- and four-point integrals in Schwinger parameters with non-trivial kinematic dependence, including setups with off-shell external momenta and differently massive internal propagators.
		The full set of Feynman graphs admissible to parametric integration is not yet understood and we discuss some counterexamples to the crucial property of \emph{linear reducibility}. In special cases we observe how a change of variables can restore this prerequisite for direct integration and thereby enlarge the set of accessible graphs.
		
		Working in dimensional regularization, we furthermore clarify how a simple application of partial integration can be used to convert divergent parametric integrands to convergent ones. In contrast to the subtraction of counterterms, this scheme is ideally suited for our method of integration.
}
\keywords{multiloop Feynman integrals, dimensional regularization, hyperlogarithms}
\begin{document}

\maketitle

\section{Introduction}
	Scalar\footnote{%
	Products of loop momenta in the numerator (in the momentum space representation) yield the same parametric form \cite{Tarasov:ConnectionBetweenFeynmanIntegrals} (see also section 2.3 of \cite{Smirnov:EvaluatingFeynmanIntegrals}) up to shifted powers of $\psi$ and $\phipol$ as well as a further polynomial in the numerator. Therefore such \emph{tensor} integrals are included in our discussion throughout.%
	} %
	Feynman integrals $\Phi(G)$ associated to a Feynman graph $G$ take the form \cite{ItzyksonZuber}
	\begin{equation}
		\Phi(G)
		= \Gamma(\sdd) \cdot
			\left[ \prod_e \int_0^\infty \frac{\SP_e^{\ep_e-1}\dd\SP_e}{\Gamma(\ep_e)}
			\right]
			\psipol^{\sdd - \Dim/2} \phipol^{-\sdd}	
			\cdot
			\delta(H)
		\label{eq:feynman-rules-parametric}%
	\end{equation}
	in Schwinger parameters $\SP_e$ for each edge $e\in E(G)$ and the power $\ep_e$ of the corresponding propagator. 
	The graph polynomials \cite{BognerWeinzierl:GraphPolynomials} are given by sums over all spanning trees $T$ and all spanning two-forests $F$: 
	\begin{equation}%
		\label{eq:graph-polynomials}%
		\psipol 
		= \sum_T \prod_{e\notin T} \SP_e
		\quad\text{and}\quad
		\phipol
		= \sum_{\mathclap{F=T_1\cupdot T_2}} q^2\left( T_1 \right) \prod_{e\notin F}\SP_e
			+	\psipol \sum_e m_e^2 \SP_e,
	\end{equation}
	where $q(T_1) \defas \sum_{v\in T_1} q(v) = -q(T_2)$ denotes the total external momentum entering the tree $T_1$ and $m_e$ the mass of the internal propagator associated to $e$. 

	The Dirac distribution $\delta(H)$ in \eqref{eq:feynman-rules-parametric} projects on an arbitrary\footnote{This freedom of choice is a consequence of \eqref{eq:feynman-rules-parametric} being a projective integral.} hyperplane $\set{H=0}$ which we will always choose as $H=1-\SP_e$ for some fixed edge $e$.
	Denoting the number of loops of $G$ by $\loops{G}$, in $\Dim$ dimensions we declare the superficial degree of divergence as
	\begin{equation}\label{eq:sdd}
		\sdd
		= \sum_e \ep_e
			-
			\frac{\Dim}{2} \loops{G}.
	\end{equation}
	Our strategy is to successively integrate out Schwinger parameters $\SP_{e_1}, \SP_{e_2}, \ldots$ in \eqref{eq:feynman-rules-parametric} following the method of \cite{Brown:TwoPoint} which we implemented in the computer algebra system Maple{\texttrademark} \cite{Maple}. To compute regulated integrals (e.g. $\Dim = 4-2\varepsilon$), we perform the $\varepsilon$-expansion on the integrand of \eqref{eq:feynman-rules-parametric} and integrate out each term individually. 

	This approach requires a convergent integral representation of each term in the expansion, but the immediate form \eqref{eq:feynman-rules-parametric} often turns out to be divergent at the expansion point (e.g. $\varepsilon=0$). In particular this is always the case whenever $G$ contains (infrared or ultraviolet) sub divergences.
	In section \ref{sec:dimreg} we derive a systematic procedure to generate different (but equivalent) parametric integral representations with increased domains of convergence, therefore extending the method of parametric integration to arbitrarily divergent $\varepsilon$-expansions.

	As a consequence, the earlier results for finite single-scale propagator graphs recalled in section \ref{sec:single-scale} generalize to the divergent cases.

	Parametric integration can only be applied to \emph{linearly reducible} graphs $G$, a criterion on the graph polynomials $S_0 \defas \set{\psipol,\phipol}$ which we recall in appendix \ref{sec:linear-reducibility}. 
	The idea is that starting from the integrand $f_0$ of \eqref{eq:feynman-rules-parametric}, for any ordering $e_1,\ldots,e_N$ of edges we can find sets $S_n \in \Q[\SP_{n+1},\ldots,\SP_N]$ of polynomials that describe the possible singularities of the partial Feynman integrals $f_{n+1} \defas \int_0^{\infty} f_n\ \dd \SP_{n+1}$. If each element of $S_n$ is linear in $\SP_{n+1}$, the algorithm of \cite{Brown:TwoPoint} can be applied to compute $f_{n+1}$ in terms of hyperlogarithms.
	These are special classes of multiple polylogarithms and all explicit results in this article will be given in the notation we fix in \ref{sec:hyperlogs}. There we also explain how the final set $S_N$ of this polynomial reduction constrains the symbol of the Feynman integral $\restrict{f_{N-1}}{\SP_N=1} = \Phi(G)$. 

	The main section \ref{sec:non-trivial-kinematics} is a collection of examples of integrals with non-trivial dependence on kinematic invariants $\kinematics = \set{m_e^2,q^2(T),\ldots}$ that are linearly reducible and can thus be integrated parametrically.
	For illustration we supply explicit new results for selected cases, most of which are (due to their volume) not printed but contained in the attached text file only. Further results might be obtained from the author upon request.

	We also point out counterexamples to linear reducibility and show in section \ref{sec:extending-reducibility} that in special cases, changes of variables can allow for parametric integration in spite of the graph not being linearly reducible in the original Schwinger parameters.

\acknowledgments
	I like to thank Christoph Meyer for pointing out to me the importance of graphs with two different internal masses for the NNLO-computation of single top-quark production and checking some analytic results numerically. Bas Tausk provided invaluable help in hinting me to numerous references on selected integrals.
	Furthermore I enjoyed interesting discussions with Johannes Henn who motivated me to investigate integrals with divergences and non-trivial kinematics. He also supplied a series of analytic results that helped me to verify my program.
	Oliver Schnetz explained to me the importance of graphical functions, kindly supplied a list of these and suggested their study in a parametric representation. Finally I like to thank Christian Bogner for many insights into the structure of iterated integrals of many variables and introducing to me the program \cite{BognerWeinzierl:ResolutionOfSingularities} for sector decomposition that was used for some numeric checks in this article. His completely independent program following \cite{BognerBrown:SymbolicIntegration} provided valuable cross-checks of my own implementation.

\section{Single-scale integrals}%
\label{sec:single-scale}

Before studying more complicated examples, let us briefly review integrals with a single scale: $\phipol$ depends only on a single kinematic invariant $\set{s} = \kinematics$ (a mass or external momentum squared) which therefore factors out completely from \eqref{eq:feynman-rules-parametric} as $s^{-\sdd}$.

\subsection{Massless propagators}
	\begin{figure}%
		\begin{gather*}
			\Graph[0.17]{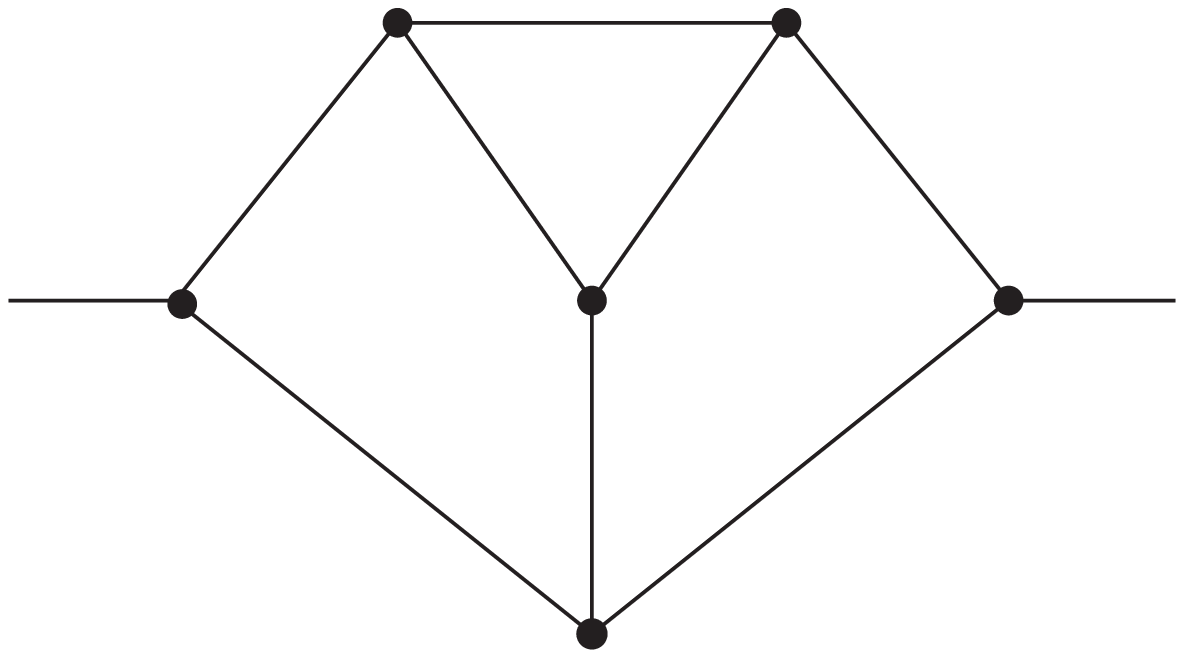}
			\quad
			\Graph[0.2]{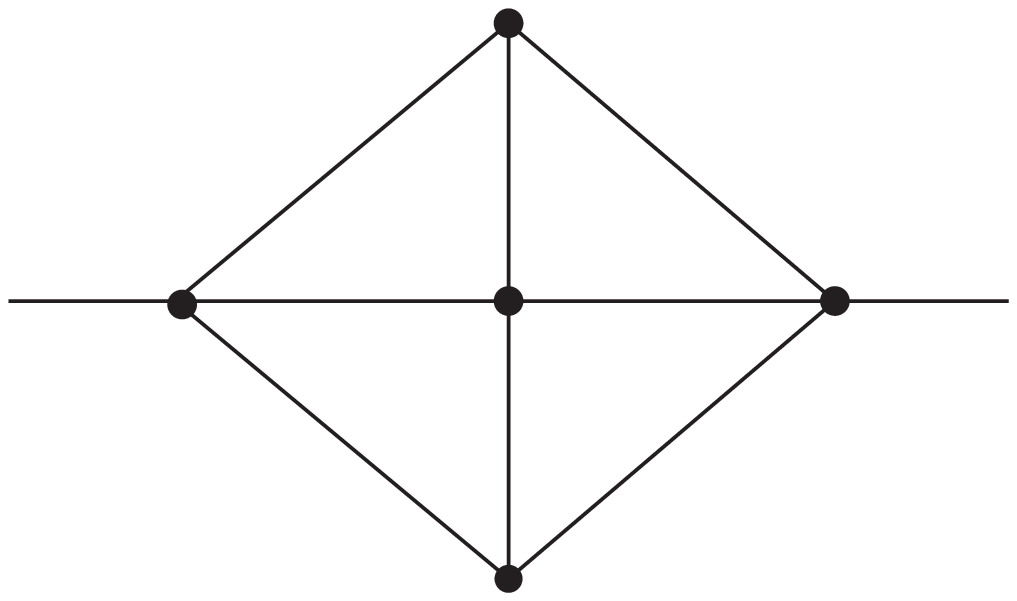}
			\quad
			\Graph[0.17]{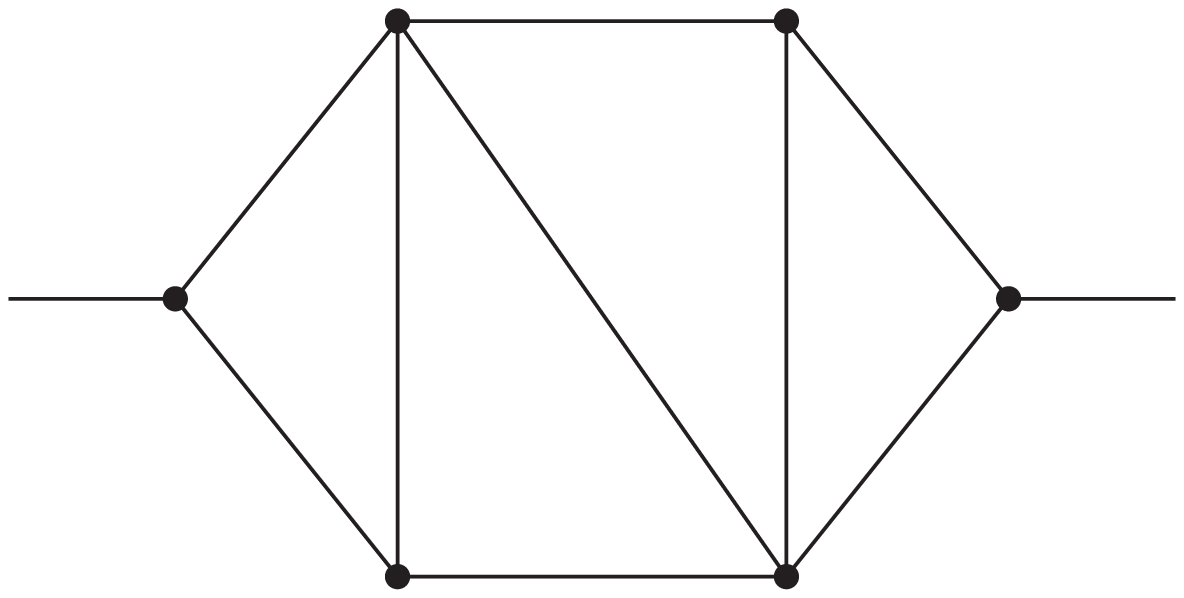}
			\quad
			\Graph[0.3]{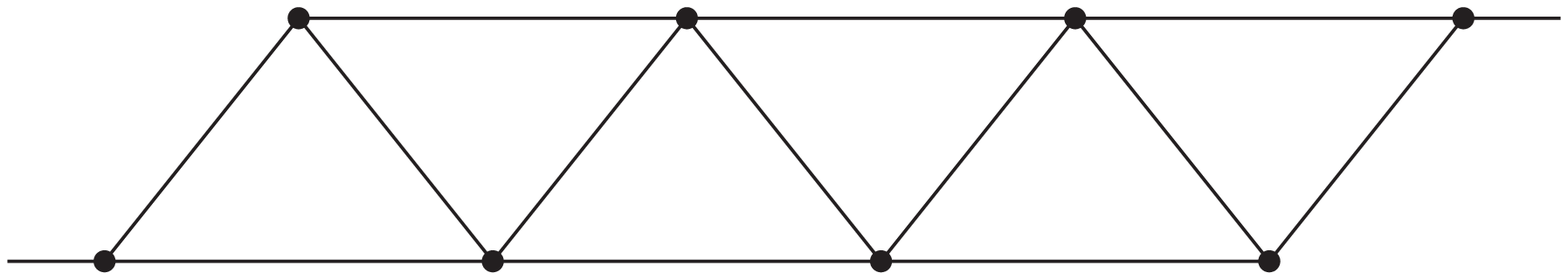}
			\\
			\Graph[0.18]{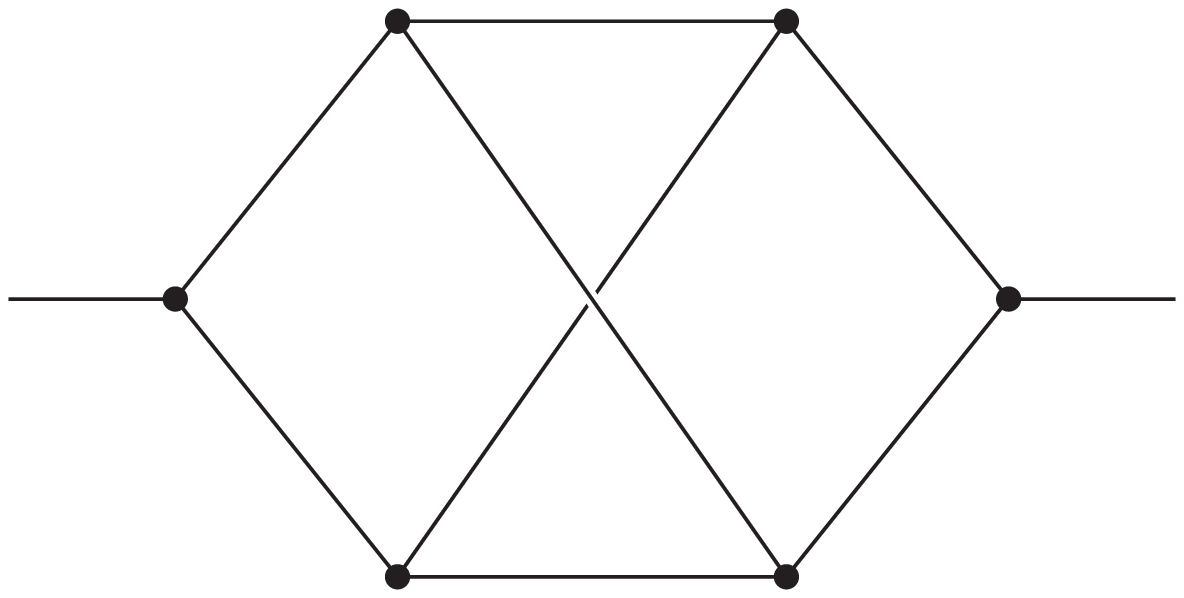}
			\quad
			\Graph[0.18]{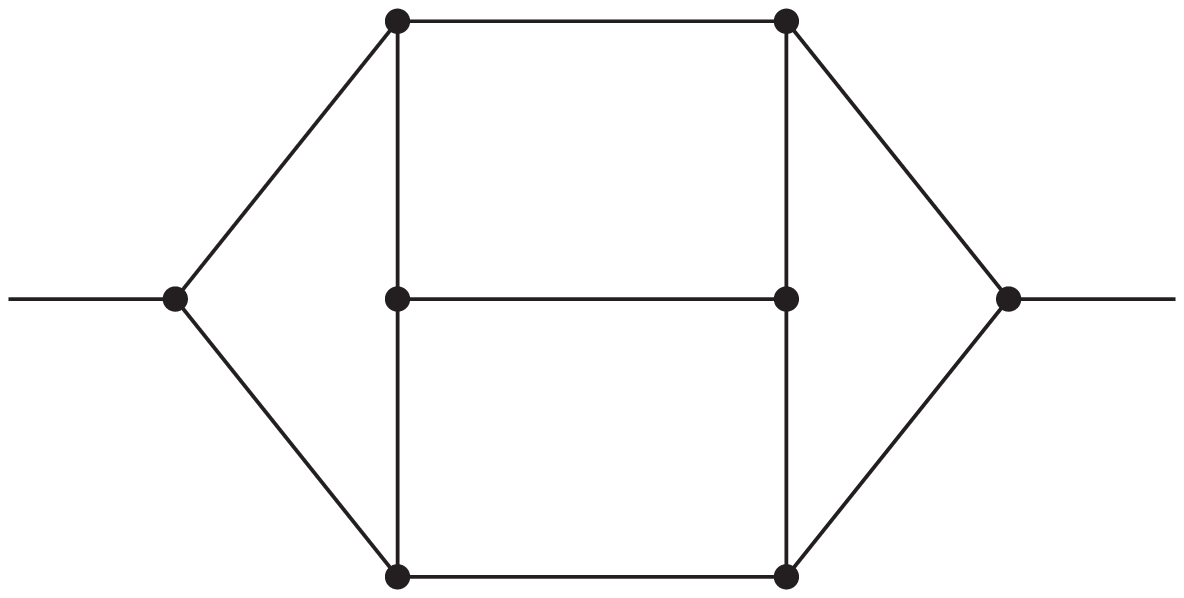}
			\quad
			\Graph[0.18]{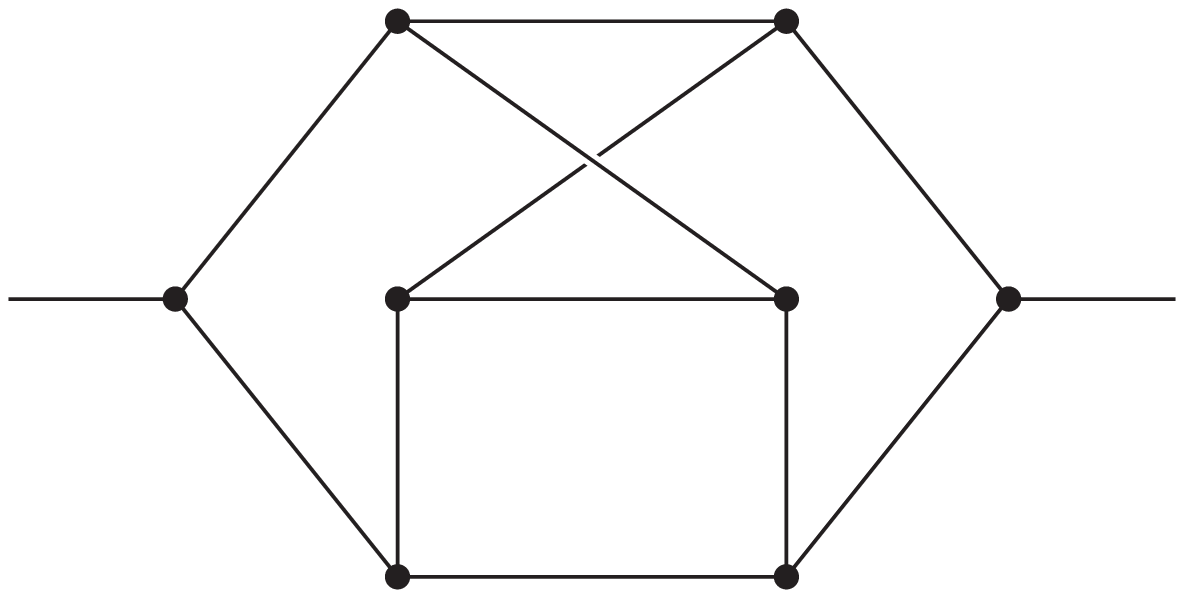}
			\quad
			\Graph[0.18]{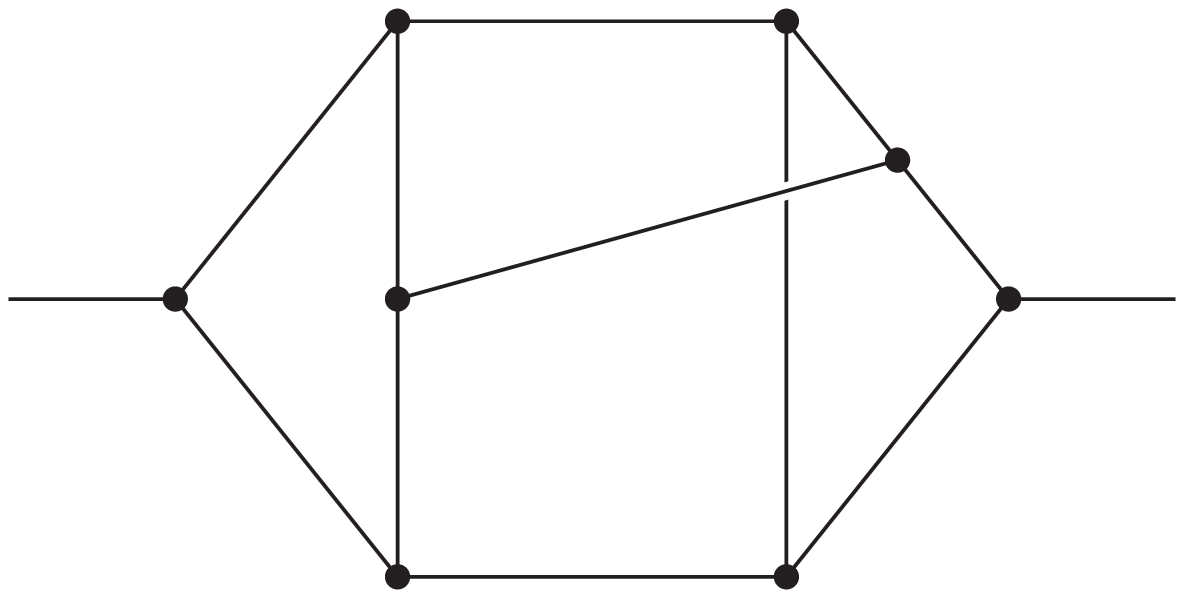}
			\quad
			\Graph[0.18]{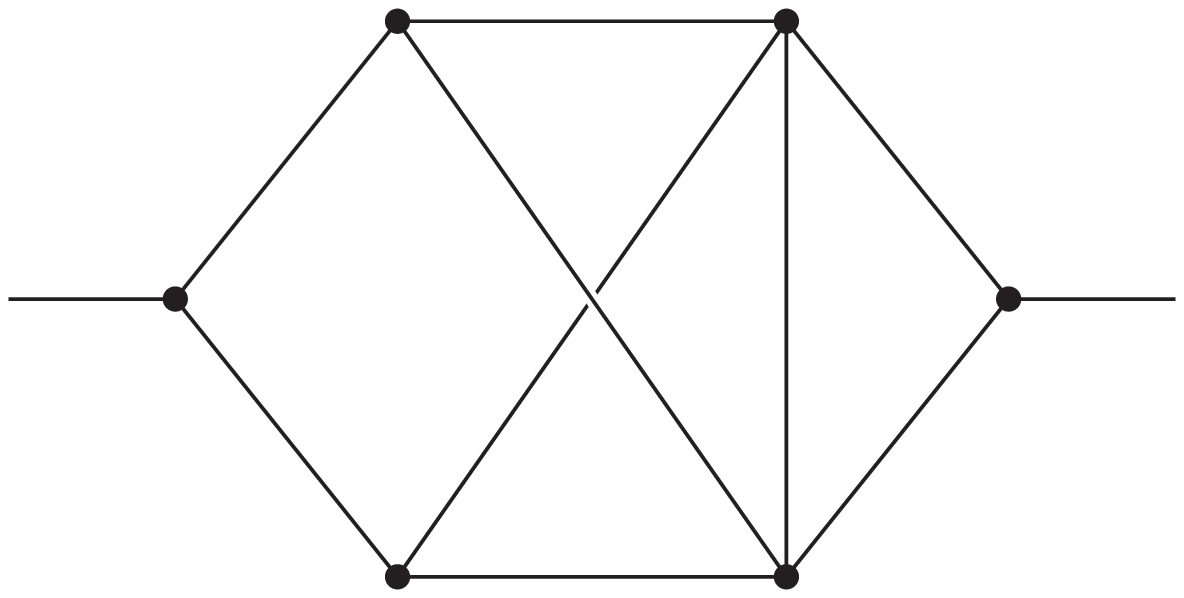}
			\quad
			\Graph[0.18]{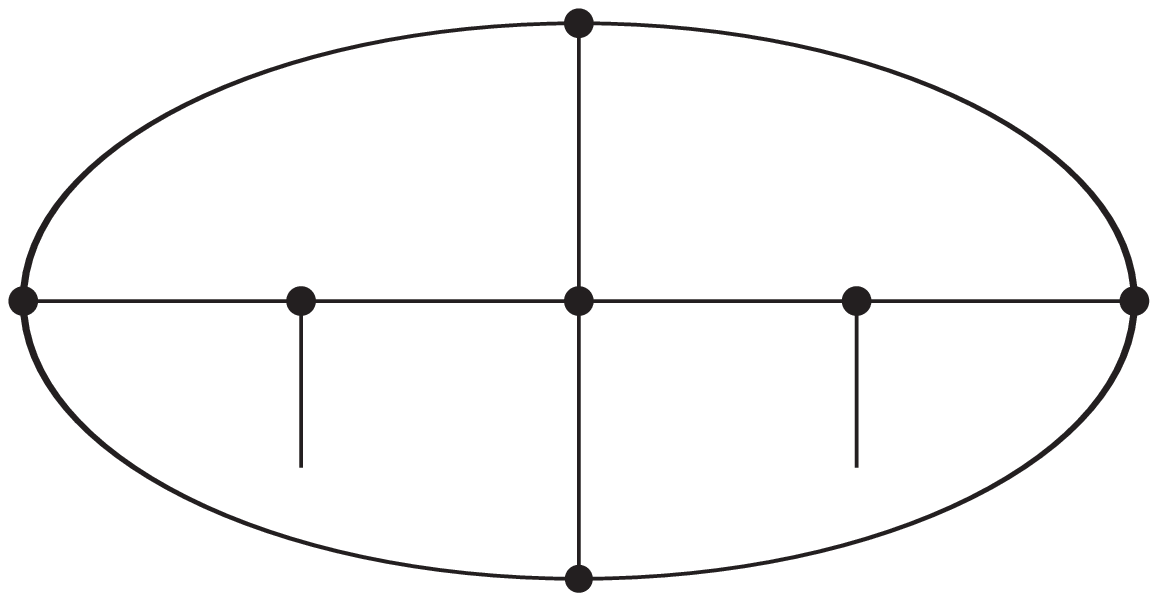}
		\end{gather*}%
		\caption{Examples of massless propagators of vertex-width three (first row) and vertex-width four (second row).}%
		\label{fig:massless-propagators}%
	\end{figure}
	The case of massless graphs $G$ with two external legs (depending on their momenta $\pm p$ through $s = p^2$) is so far the only setup where a non-trivial infinite family of linearly reducible graphs is known to exist by
	\begin{theorem}[positive matrix graphs \cite{Brown:PeriodsFeynmanIntegrals}]\label{theorem:vw-3}
		All massless vacuum (no external momenta) graphs $G$ of \emph{vertex-width} $\vw(G) \leq 3$ are linearly reducible and their $\varepsilon$-expansions are $\Q$-linear combinations of multiple zeta values $\mzv{n_1,\ldots,n_r}$ where $n_1,\ldots,n_r \in \N$ and $n_r > 1$.
	\end{theorem}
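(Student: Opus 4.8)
The plan is to follow Brown's analysis \cite{Brown:PeriodsFeynmanIntegrals,Brown:TwoPoint}. For a massless vacuum graph the polynomial reduction recalled in appendix \ref{sec:linear-reducibility} involves only the first Symanzik polynomial $\psipol$ (if one instead attaches external legs to pass to a two-point function, a further Dodgson polynomial of $G$ is adjoined and treated identically), so there are two things to do: establish that $S_0=\set{\psipol}$ is linearly reducible for a good edge order, and then that the iterated integral this yields is a $\Q$-linear combination of multiple zeta values. For the first, write $\psipol=\det\Lambda$ with $\Lambda$ the $\loops{G}\times\loops{G}$ matrix whose entries are signed sums of the $\SP_e$ built from a cycle basis of $G$ (matrix--tree theorem). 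For edge subsets $I,J$ with $\abs{I}=\abs{J}$ and an extra set $K$ of edges, let $\psipol^{I,J}_{K}$ be the associated signed minors --- the \emph{Dodgson polynomials} of $G$ --- normalised so that $\psipol=\psipol^{\emptyset,\emptyset}$, with $\psipol^{e,e}$ the deletion and $\psipol^{\emptyset,\emptyset}_{e}$ the contraction of $e$. The contraction--deletion relation $\psipol^{I,J}_{K}=\SP_e\,\psipol^{Ie,Je}_{K}+\psipol^{I,J}_{Ke}$ shows the leading and trailing coefficients of any $\psipol^{I,J}_{K}$ in a free edge variable are again Dodgson polynomials, and $\psipol^{I,J}_{K}$ is \emph{linear} in every $\SP_e$ with $e\notin I\cup J\cup K$. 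Hence, if the reduction of $S_0$ never leaves the finite set of Dodgson polynomials of $G$, linear reducibility is automatic.

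The reduction also forms, for pairs of polynomials, discriminants $\discriminant{f}$ and resultants $\resultant{f}{g}$ in the integration variable. The key algebraic input is the family of \emph{Dodgson identities}, quadratic Pl\"ucker-type relations among minors --- such as $\psipol^{Ia,Jb}_{K}\,\psipol^{Ib,Ja}_{K}-\psipol^{I,J}_{K}\,\psipol^{Iab,Jab}_{K}=\pm\,\psipol^{Ia,Ja}_{Kb}\,\psipol^{Ib,Jb}_{Ka}$ --- which force every discriminant and resultant that can arise in the reduction to factor, after dropping repeated factors, into a product of Dodgson polynomials of $G$, \emph{provided} the index sets $I,J$ occurring stay small enough for these $2\times2$ relations to apply.

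Here the hypothesis $\vw(G)\leq3$ enters, and establishing that it suffices is the \textbf{main obstacle}. For a general graph the reduction generates minors with large index sets whose resultants acquire genuinely non-linear irreducible factors, and linear reducibility fails (section \ref{sec:extending-reducibility} exhibits such counterexamples). One must show this is impossible once $\vw(G)\leq3$: a vertex order realising frontier width $\leq3$ induces an edge order along the frontier, and an induction along that frontier keeps the index sets $I,J$ of every minor that can occur uniformly bounded (in fact $\abs{I},\abs{J}\leq2$), because at each step only a bounded number of edges joins the already-processed part of $G$ to its complement. A finite case analysis over the frontier configurations compatible with width three then verifies that all discriminants and resultants split into linear (Dodgson) factors, so the reduction is closed in the Dodgson polynomials of $G$, every $S_n$ is linear in $\SP_{n+1}$, and $S_0$ is linearly reducible. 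This bookkeeping --- pinning down exactly which Dodgson polynomials occur at each stage and checking the factorisations configuration by configuration --- is the technical heart of the theorem.

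Granted linear reducibility, the algorithm of \cite{Brown:TwoPoint} integrates out $\SP_{e_1},\SP_{e_2},\ldots$ term by term in the $\varepsilon$-expansion of the integrand of \eqref{eq:feynman-rules-parametric} ($\phipol$ being absent for a massless vacuum graph); each partial integrand $f_n$ is a $\Q[\varepsilon,\varepsilon^{-1}]$-linear combination of hyperlogarithms in $\SP_{n+1}$ with singularities among the (ratios of) elements of $S_n$. Since $G$ carries no kinematic variable, at every stage these singular loci reduce to $0$ and $\infty$; together with the point $\SP_N=1$ fixed by $\delta(H)$ in the last integration, the fibre integrals then have singularities only in $\set{0,1,\infty}$ (one still has to check that no other rational or algebraic point survives). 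Shuffle-regularising the boundary divergences --- those absorbed into the prefactor $\Gamma(\sdd)$ --- and using the $\Q$-algebra structure of iterated integrals evaluated at $1$ (equivalently, that the underlying motive is mixed Tate over $\Z$), the outcome is a $\Q[\varepsilon,\varepsilon^{-1}]$-linear combination of multiple zeta values $\mzv{n_1,\ldots,n_r}$ with $n_r>1$, and reading off the coefficient of each power of $\varepsilon$ gives the claim.
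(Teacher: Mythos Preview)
The paper does not prove this theorem; it merely quotes it from \cite{Brown:PeriodsFeynmanIntegrals}, so there is no proof in the paper to compare against. Your sketch is a reasonable outline of Brown's original argument: reduce to Dodgson polynomials, use the quadratic (Pl\"ucker/Dodgson) identities to force resultants to factor back into Dodgson polynomials, and use the vertex-width bound to keep the index sets small enough that these identities always apply. That is indeed the heart of the matter, and you correctly flag the case-analysis over frontier configurations as the technical core.

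There is, however, a genuine gap in your final paragraph. The sentence ``since $G$ carries no kinematic variable, at every stage these singular loci reduce to $0$ and $\infty$'' is false: at stage $n$ the set $S_n$ consists of Dodgson polynomials in the remaining Schwinger parameters $\SP_{n+1},\ldots,\SP_N$, not of constants. The absence of kinematics does not collapse these loci. What actually happens in Brown's argument is that one tracks the reduction down to the \emph{last} free variable and shows that the surviving Dodgson polynomials are then forced (by index-counting and the vanishing of overdetermined minors) to be monomials, so the singularities of the final one-variable hyperlogarithm lie in $\set{0,\infty}$; setting $\SP_N=1$ then produces iterated integrals on $\mathbb{P}^1\setminus\set{0,1,\infty}$ evaluated at $1$, i.e.\ MZVs. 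This last combinatorial step --- that the penultimate $S_{N-1}$ consists of monomials --- is precisely what you have skipped, and it does not follow from the mere absence of kinematic invariants. Invoking ``mixed Tate over $\Z$'' is also off-target here: Brown's proof is constructive and does not pass through motivic statements; the MZV conclusion is an output of the explicit algorithm, not an appeal to a deeper theorem.
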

	Here $\vw(G) \leq 3$ means that we can order the edges $e_1,\ldots,e_N$ of $G$ such that for all $1\leq n \leq N$, there are at most three vertices of $G$ that touch edges in $\set{e_1,\ldots,e_n}$ and $\set{e_{n+1},\ldots,e_N}$ at the same time. Even though this is a strong\footnote{For example, $\vw(G) \leq 3$ implies planarity of $G$.} constraint on $G$, we like to stress that it holds for infinitely many non-trivial graphs, all of which thus being proven to evaluate to multiple zeta values.
	This theorem extends to massless propagators by glueing the external legs to form a vacuum graph. Examples are shown in figure \ref{fig:massless-propagators}.
	
	Starting at three-loops, graphs with $\vw(G) > 3$ occur (e.g. the second row in the figure) and are therefore not covered by theorem \ref{theorem:vw-3}, but we still have
	\begin{theorem}[vacuum graphs with four or five loops \cite{Panzer:MasslessPropagators}]\label{theorem:4loop-massless-propagators}
		All massless propagators up to four loops are linearly reducible. Their $\varepsilon$-expansions are $\Q$-linear combinations of alternating Euler sums $\Li_{n_1,\ldots,n_r}(\sigma_1,\ldots,\sigma_r)$ where $n_i\in\N$, $\sigma_i^2 = 1$ and $(n_r,\sigma_r)\neq(1,1)$.
	\end{theorem}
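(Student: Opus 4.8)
The plan is to reduce the statement to a finite collection of explicit graphs and then apply the general integration machinery. First I would use the \emph{completion} trick: a massless propagator graph $G$ with $\loops{G}\le 4$ becomes, after joining its two external vertices by a new edge $e_0$, a massless vacuum graph $\widehat G$ with $\loops{\widehat G}\le 5$, and the propagator's Symanzik polynomials $S_0=\set{\psipol,\phipol}$ appear (up to the single scale $s=p^2$) among the Dodgson minors of $\psipol_{\widehat G}$ that occur in the vacuum reduction of $\widehat G$. Hence linear reducibility of $\widehat G$ transfers to every propagator obtained by opening one of its edges, and it suffices to treat the finitely many $4$- and $5$-loop massless vacuum graphs. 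Moreover one may discard self-loops, two-valent vertices and one-particle-reducible parts, for which the graph polynomials factorise and the claim reduces to graphs with fewer edges; this leaves a finite, explicit list.

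For that list I would run the polynomial reduction of appendix~\ref{sec:linear-reducibility}: fixing an edge ordering $e_1,\ldots,e_N$, one iteratively forms the sets $S_n\subset\Q[\SP_{n+1},\ldots,\SP_N]$ of irreducible factors of the leading and trailing coefficients, discriminants and pairwise resultants with respect to $\SP_{n+1}$ of the elements of $S_{n-1}$, and checks at each step that every element of $S_n$ is linear in $\SP_{n+1}$. All graphs with $\vw\le 3$ are already covered by Theorem~\ref{theorem:vw-3} (and there give multiple zeta values, a special case of the Euler sums claimed here), so the work is concentrated in the remaining graphs, which first appear at $\vw=4$. For each of them the task is to exhibit at least one edge ordering along which the reduction stays linear --- a case-by-case computation I would carry out with the Maple implementation of \cite{Brown:TwoPoint}.

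Once linear reducibility is established, I would integrate out $\SP_1,\ldots,\SP_N$ one at a time using Brown's hyperlogarithm algorithm. Each partial integral is a $\Q$-linear combination of hyperlogarithms whose singular points are the roots of the (now linear) polynomials in the current $S_n$, hence rational functions of the remaining variables, so that taking primitives and boundary limits never leaves this class. The ``alternating Euler sum'' part of the claim is then a statement about the final fibre: after the rescaling imposed by $\delta(H)$ in \eqref{eq:feynman-rules-parametric}, the constant letters reached are all equivalent to $\set{0,1,-1}$, so the iterated-integral periods produced are multiple polylogarithms evaluated at $\pm1$, i.e.\ $\Li_{n_1,\ldots,n_r}(\sigma_1,\ldots,\sigma_r)$ with $\sigma_i^2=1$; convergence of the parametric integral at the endpoints forces the usual boundary condition $(n_r,\sigma_r)\neq(1,1)$. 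Divergent $\varepsilon$-expansions are dealt with beforehand by replacing the integrand of \eqref{eq:feynman-rules-parametric} with an equivalent, term-wise convergent representation (the systematic procedure of section~\ref{sec:dimreg}), so that each term of the expansion is then integrated by exactly the same algorithm.

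\textbf{Main obstacle.} The delicate point is the termination of the polynomial reduction: there is no general theorem guaranteeing it, the five-loop Symanzik polynomials are large, and a priori the reduction could either fail to become linear in some variable under \emph{any} ordering, or --- even while staying linear --- produce irreducible quadratics or rational singular points inequivalent to $\set{0,\pm1}$, which would push the $\varepsilon$-expansion beyond the alternating Euler sums. Verifying that this never happens for any graph in the list, and in particular that the letter alphabet never grows past $\set{0,\pm1}$, is the essentially computational heart of the argument; it is also sharp, since --- as later sections show --- linear reducibility genuinely breaks down at higher loop orders.
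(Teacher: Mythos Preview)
The paper does not actually prove this theorem here; it is quoted from \cite{Panzer:MasslessPropagators}, with the remark following it and section~\ref{sec:dimreg} supplying the extension to divergent $\varepsilon$-expansions. Your outline is nonetheless the correct one and matches what the paper indicates was done: reduce massless propagators to vacuum graphs by completion, discard factorising pieces, and then run the polynomial reduction on the remaining finite list --- the paper confirms in section~\ref{sec:summary} that ``we had to explicitly examine the graph polynomials (using a polynomial reduction algorithm) to arrive at theorem~\ref{theorem:4loop-massless-propagators}.'' One small correction: in the reduction step you should not be taking discriminants; the relevant data are the leading coefficients $A_f$, constant terms $B_f$, and pairwise resultants $\resultant[\SP_e]{f}{g}$ of \eqref{eq:polynomial-reduction}, and in practice the stronger \emph{compatibility graph} refinement of \cite{Brown:PeriodsFeynmanIntegrals} (mentioned in appendix~\ref{sec:linear-reducibility}) rather than the plain Fubini algorithm is what keeps the sets $S_n$ small enough for the check to succeed.
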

	\begin{remark}
		In \cite{Panzer:MasslessPropagators} we carefully stated this result only for convergent $\varepsilon$-expansions, while the influence of sub divergences on the periods was left unclear.
	Taking section \ref{sec:dimreg} into account, we now realize that it holds in full generality and applies to arbitrary dimension $\Dim=\Dim_0 - 2\varepsilon$ ($\Dim_0 \in 2\N$) and propagator powers $\ep_e = n_e + \varepsilon\epe_e$ with $n_e\in\Z$; unaffected by sub divergences (that lead to higher order $\varepsilon$-poles).
\end{remark}
	The first counter-examples to linear reducibility of massless propagators appear at five loops and some are discussed in section twelve of \cite{Brown:PeriodsFeynmanIntegrals}.

\subsection{On-shell propagators with one internal mass}
\begin{figure}\vspace{-7mm}
			\begin{equation*}
				\Graph[0.7]{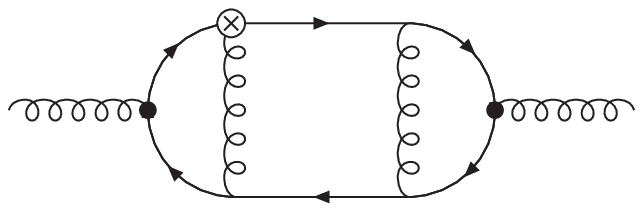}
				\quad
				\Graph[0.7]{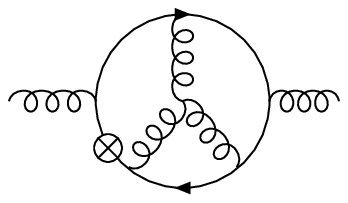}
				\quad
				\Graph[0.7]{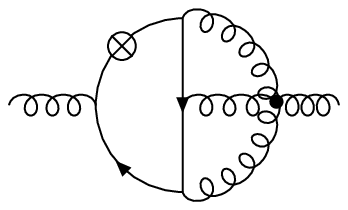}
				\quad
				\Graph[0.7]{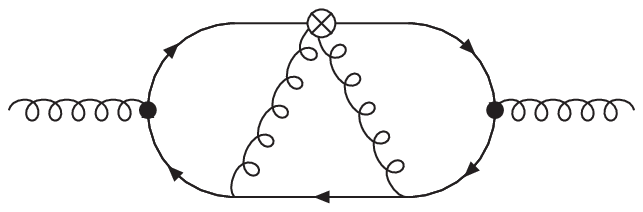}
			\end{equation*}\vspace{-3mm}
			\caption{Linearly reducible topologies with one internal mass (fermion lines) and otherwise massless propagators (including the external momentum $p^2=0$) from \cite{Wissbrock:Massive3loopLadder,Wissbrock:Recent3loopHeavyFlavor,Wissbrock:New3loopHeavyFlavor}.
			The marked vertex represents an operator insertion, its precise form is irrelevant for the polynomial reduction. Note however that the authors aimed for generating functions of all Mellin moments, and then linear reducibility strongly depends on the form of operator.}%
			\label{fig:massive-operator-insertions}%
		\end{figure}
	Another one-scale kinematic setup is given by propagator graphs with light-like external momentum $p^2=0$ but one internal mass $m$. In this case, the second graph polynomial
	\begin{equation}
		\phipol
		=
		m^2 \cdot \psipol \cdot
		\sum_{m_e = m} \SP_e
		\label{eq:no-momenta-phipol}%
	\end{equation}
	splits into polynomials which are themselves linear in each variable, while in general $\phipol$ is irreducible and quadratic in each $\SP_e$ for which $m_e\neq 0$ by \eqref{eq:graph-polynomials}.
	This explains the good linear reducibility despite the presence of many massive edges which was observed in
	\begin{theorem}[\cite{Wissbrock:Massive3loopLadder,Wissbrock:Recent3loopHeavyFlavor,Wissbrock:New3loopHeavyFlavor}]%
		\label{theorem:massive-operator-insertions}%
		The on-shell ($p^2 = 0$) propagators with equally massive internal fermions (and massless gluons) shown in figure \ref{fig:massive-operator-insertions} are linearly reducible.
	\end{theorem}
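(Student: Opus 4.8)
The plan is to exploit the factorization \eqref{eq:no-momenta-phipol} of the second Symanzik polynomial. Imposing $p^2 = 0$ and $m_e \in \set{0,m}$ on \eqref{eq:graph-polynomials} collapses $\phipol$ to $m^2 \cdot \psipol \cdot \chi$ with $\chi \defas \sum_{m_e = m}\SP_e$, and crucially \emph{both} factors $\psipol$ and $\chi$ are linear in every Schwinger parameter $\SP_e$, in contrast to the generic case where $\phipol$ is irreducible and quadratic in each $\SP_e$ with $m_e\neq 0$. Since the constant $m^{-2\sdd}$ factors out of the projective integral \eqref{eq:feynman-rules-parametric} completely, the starting set for the polynomial reduction of appendix \ref{sec:linear-reducibility} may be taken to be $S_0 = \set{\psipol,\chi}$ instead of $\set{\psipol,\phipol}$. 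The marked operator-insertion vertex only produces shifted powers of $\psipol$ and $\phipol$ together with a polynomial numerator (see the introduction), and — as the caption of figure \ref{fig:massive-operator-insertions} records — for the four topologies in question this does not change the reduction, so it may be disregarded.

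First I would write $\psipol$ and $\chi$ out explicitly for each of the four graphs. Then, for a suitable ordering $e_1,\ldots,e_N$ of the edges, I would run the polynomial reduction (the Fubini-type algorithm recalled in appendix \ref{sec:linear-reducibility}), tracking the sets $S_n \subset \Q[\SP_{n+1},\ldots,\SP_N]$ generated by the usual leading coefficients, resultants and discriminants. The claim to verify is that the ordering can be chosen so that every irreducible factor appearing in any $S_n$ is linear in $\SP_{n+1}$ — this is exactly linear reducibility. The structural reason to expect success is that the reduction of $\psipol$ by itself is already well under control for graphs of this size (it underlies Theorems \ref{theorem:vw-3} and \ref{theorem:4loop-massless-propagators}), while the additional generator $\chi$ is so simple — a sum of a subset of the variables — that it introduces no quadratic factor of its own and interacts only mildly with the Dodgson-type polynomials obtained from $\psipol$.

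The main obstacle is that the polynomial reduction is \emph{not} guaranteed a priori to terminate with linear polynomials only: the resultant of two polynomials that are linear in $\SP_{n+1}$ is in general quadratic in each remaining variable, so at every step $S_n$ may acquire new irreducible factors, and one has to rule out that any such factor of degree $\geq 2$ in the next integration variable ever survives. For the four fixed topologies of figure \ref{fig:massive-operator-insertions} this is a finite verification, which I would carry out with the Maple{\texttrademark} implementation of \cite{Brown:TwoPoint} mentioned in the introduction; if the first edge ordering produces an obstruction I would search the remaining finitely many orderings, and failing even that, resort to the change-of-variables technique of section \ref{sec:extending-reducibility}. For these particular graphs, however, the replacement of the genuinely quadratic $\phipol$ by the product $\psipol\cdot\chi$ of two multilinear polynomials is exactly what makes the direct reduction go through, which is the explanation anticipated in the text preceding the theorem.
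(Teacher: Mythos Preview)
Your proposal is correct and matches the paper's own treatment. The paper does not give an independent proof of this theorem but cites it from the referenced works; the surrounding text explains \emph{why} linear reducibility holds via exactly the factorization $\phipol = m^2\,\psipol\cdot\chi$ you identify, and the actual verification in the cited references is the finite polynomial-reduction computation you outline. One small clarification: the caption of figure~\ref{fig:massive-operator-insertions} says the operator form is irrelevant for the reduction of the bare topology, but \emph{does} matter once one passes to generating functions of all Mellin moments --- your reading of the first clause is right, just be careful not to conflate it with the second.
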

	Parametric integration was successfully employed in these works to obtain all Mellin-moments of specific operator insertions.

\section{Non-trivial kinematics}%
\label{sec:non-trivial-kinematics}

With increasing number of kinematic invariants, we expect more complicated Feynman integrals and indeed observe in the following a breakdown of linear reducibility at much lower loop orders. Since all reducible graphs evaluate to polylogarithms, known instances like \cite{BlochVanhove:Sunset,AdamsBognerWeinzierl:Sunrise} of elliptic integrals appearing already at two loops are outside the scope of our investigation.

Thus a priori we are restricting ourselves to functions expressible in terms of polylogarithms (with arguments that are algebraic functions of the invariants), and we shall see that not even all of these are linearly reducible in Schwinger parameters.

\subsection{Massless on-shell four-point graphs (two scales)}
\label{sec:massless-onshell-4pt}%
	\begin{figure}
		\begin{gather*}
			\Graph[0.4]{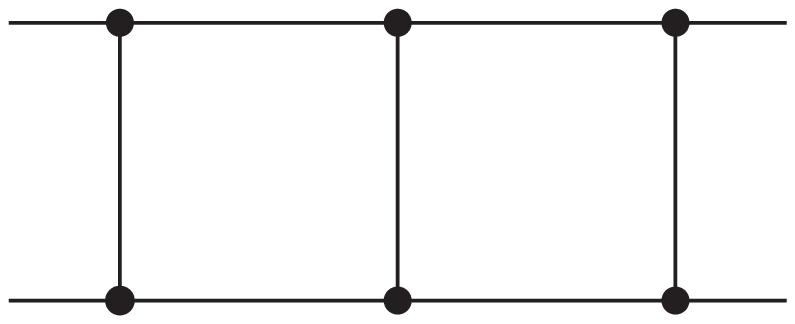}
			\quad
			\Graph[0.4]{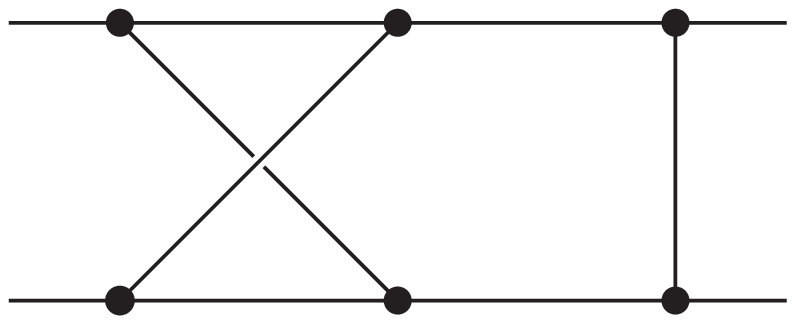}
			\quad
			\Graph[0.4]{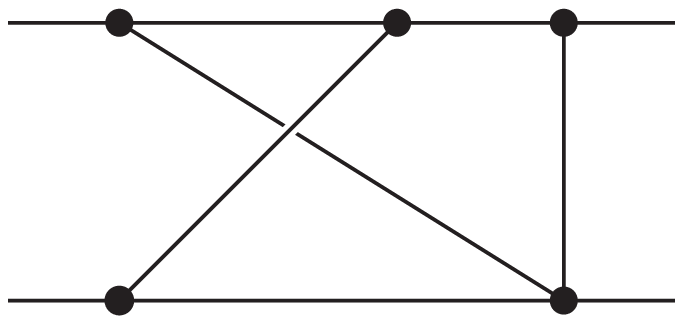}
			\\[2mm]
			\Graph[0.4]{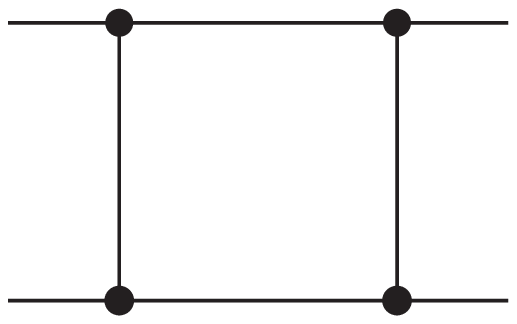}
			\quad
			\Graph[0.4]{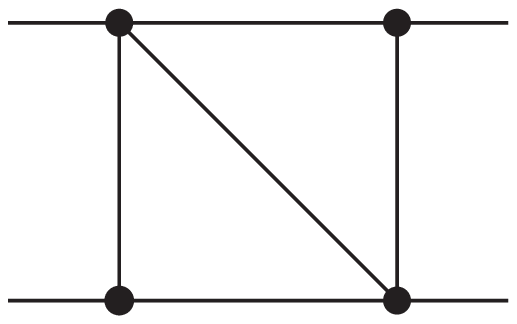}
			\quad
			\Graph[0.4]{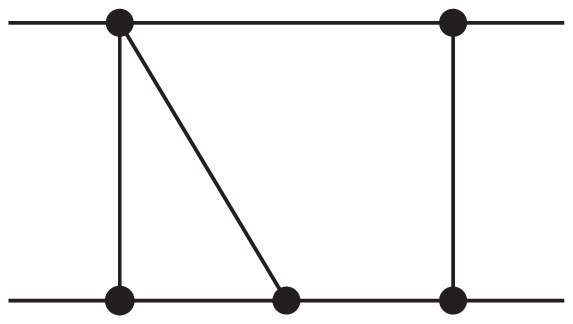}
			\quad
			\Graph[0.4]{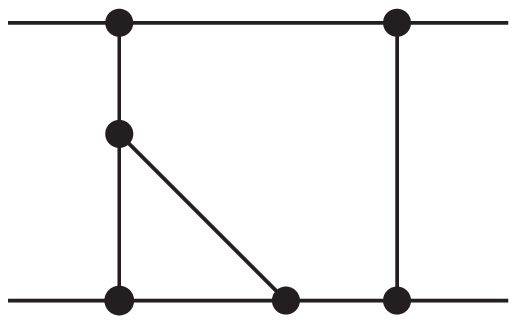}
		\end{gather*}
		\caption{Two-loop four-point functions of theorem \ref{theorem:two-loop-on-shell} without one-scale subgraphs.}
		\label{fig:two-loop-on-shell}
	\end{figure}
	The following result obtained in \cite{BognerLueders:MasslessOnShell} has so far been the only systematic study of linear reducibility for non-trivial kinematics:
	\begin{theorem}%
		\label{theorem:two-loop-on-shell}%
		All massless four-point on-shell graphs ($p_1^2=p_2^2=p_3^2=p_4^2=0$) with at most two loops are linearly reducible. In particular these include those of figure \ref{fig:two-loop-on-shell}.
	\end{theorem}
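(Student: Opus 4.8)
The plan is to establish linear reducibility by an explicit, finite application of the polynomial reduction algorithm recalled in appendix~\ref{sec:linear-reducibility}, after first cutting the classification down to a short list of topologies. The one-loop cases (box, triangle, bubble) are immediate, since on the light-cone the second Symanzik polynomial of a one-loop four-point graph is already a sum of monomials, hence linear in every $\SP_e$; so the work is concentrated at $\loops{G}=2$.

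First I would enumerate the relevant two-loop four-point graphs up to isomorphism. Such a connected graph has at most seven internal edges; discarding tadpoles and assuming $1$PI-ness, and removing two-valent vertices (which merely subdivide an edge, replacing $\SP_e$ by a sum $\SP_{e'}+\SP_{e''}$ in $\set{\psipol,\phipol}$ and thus preserving reducibility), one is left with the planar and the non-planar double box together with the graphs obtained from these by pinching internal edges. Those among them whose external legs attach so as to contain a one-scale subgraph are covered by section~\ref{sec:single-scale}: the subgraph's scale factors out of \eqref{eq:feynman-rules-parametric} and the quotient graph has strictly fewer loops. What remains are exactly the finitely many topologies displayed in figure~\ref{fig:two-loop-on-shell}.

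For each such $G$ I would write down $\psipol$ and $\phipol$ from the spanning-tree and spanning-two-forest sums \eqref{eq:graph-polynomials}, imposing $p_1^2=p_2^2=p_3^2=p_4^2=0$. Since each momentum flowing into a two-forest component is then a partial sum of light-like $p_i$, every coefficient $q^2(T_1)$ appearing in $\phipol$ equals $0$, $s\defas(p_1+p_2)^2$, $t\defas(p_2+p_3)^2$ or $u=-s-t$; in particular only the two independent scales $s,t$ survive and $S_0=\set{\psipol,\phipol}$ is completely explicit. I would then run the Fubini-type polynomial reduction of appendix~\ref{sec:linear-reducibility}, in the refined variant that only propagates the polynomials which can genuinely occur as singularities of the partial integrals, so as to keep the intermediate sets $S_n$ small, and for each topology exhibit one edge ordering $e_1,\dots,e_N$ along which every element of every $S_n$ stays linear in $\SP_{n+1}$.

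The main obstacle is precisely this linearity: the resultants and discriminants produced at each reduction step are a priori quadratic or worse in the remaining Schwinger parameters, and it is only for a well-chosen ordering — and only because the on-shell limit kills the otherwise dangerous momentum-dependent monomials of $\phipol$ — that these bad factors cancel or split off a spurious square. The crux is therefore a careful, essentially computer-assisted bookkeeping of the reduced polynomial sets for the handful of double-box contractions, with the non-planar double box the most delicate case. Once a compatible ordering has been produced for each graph, linear reducibility holds by definition, and the statement that the $\varepsilon$-expansions are multiple polylogarithms in $s$ and $t$ then follows from the integration algorithm of \cite{Brown:TwoPoint} together with the constraint imposed by the final set $S_N$, as recalled in the introduction.
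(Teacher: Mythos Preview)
The paper does not actually prove this theorem; it quotes it as a result obtained in \cite{BognerLueders:MasslessOnShell}. Your outlined approach---enumerate the finitely many two-loop four-point topologies (modulo one-scale subgraph insertions), write down $\psipol,\phipol$ explicitly with $p_i^2=0$, and for each topology exhibit via the polynomial reduction an edge ordering along which all intermediate polynomials stay linear---is exactly how such statements are established, and is what the cited reference does. One technical remark: the plain Fubini reduction of lemma~\ref{lemma:polynomial-reduction} tends to overestimate the sets $S_n$, and in practice the compatibility-graph refinement of \cite{Brown:PeriodsFeynmanIntegrals} (mentioned in appendix~\ref{sec:linear-reducibility}) is what is actually run; your phrase ``refined variant that only propagates the polynomials which can genuinely occur'' gestures at this, but it is worth being explicit that the stronger algorithm is what makes the verification succeed for the harder (in particular non-planar) topologies.
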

	This result was expected since these functions were known to evaluate to polylogarithms (even with one leg off-shell \cite{GehrmannRemiddi:DETwoLoopFourPoint}).
	\begin{figure}
		\begin{gather*}
			K_4=\Graph[0.40]{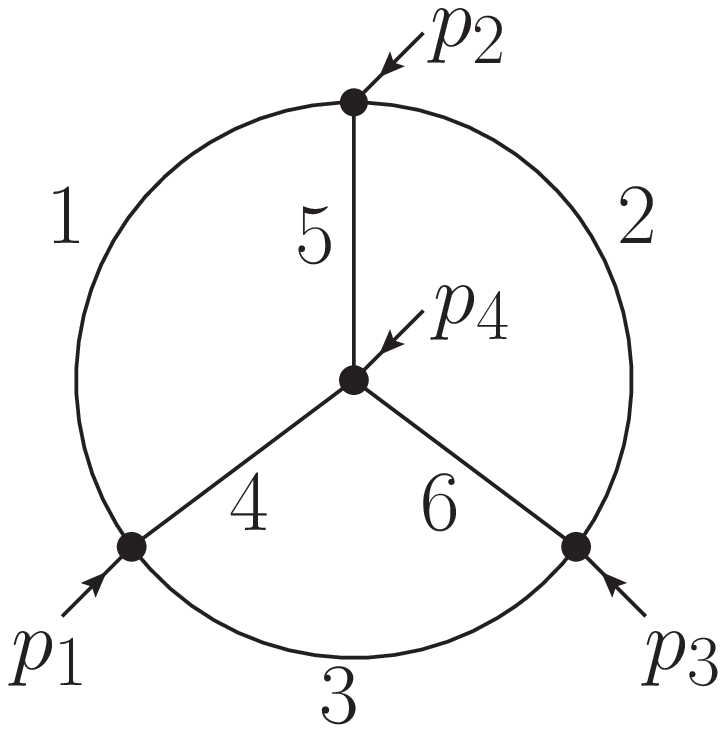}
			\qquad
			G_4=\Graph[0.40]{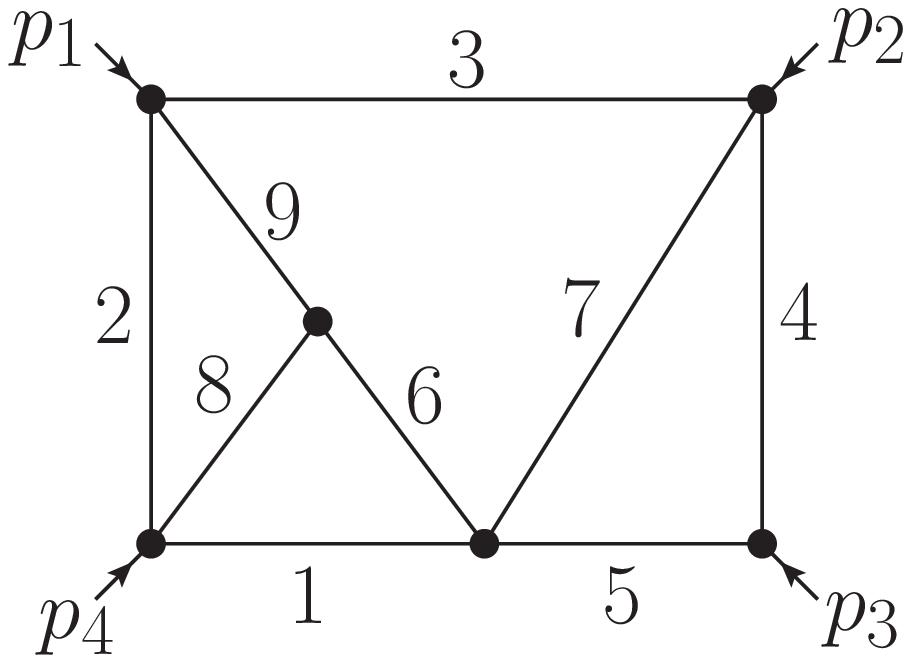}
		\end{gather*}%
		\caption{Massless on-shell four-point graphs: While $G_4$ is linearly reducible (example \ref{example:4pt-4loop}), $K_4$ is not and makes a change of variables necessary (section \ref{sec:K4-variable-change}).}%
		\label{fig:moreloop-on-shell}%
	\end{figure}
	At three loops, counter examples to linear reducibility exist \cite{BognerLueders:MasslessOnShell} like the complete graph $K_4$ of figure \ref{fig:moreloop-on-shell} which was recently evaluated in $\Dim=4-2\varepsilon$ and $\ep_1=\ldots\ep_6=1$ to polylogarithms in \cite{HennSmirnov:K4} using the technique of differential equations.
\hide{
	in the invariants $s=(p_1+p_2)^2$ and $u=(p_1+p_3)^2$:
	\begin{equation}\label{eq:K4}
		\Phi\left( K_4 \right) = 
		\frac{e^{-3\varepsilon\gamma_E} s^{3\varepsilon}}{(1-4\varepsilon)(1-5\varepsilon)} K^{(0)} \left( \frac{s}{t}, \varepsilon \right)
		\quad\text{with}\quad
		K^{(0)} \left( x,\varepsilon \right)
		= \sum_{n\geq -1}
			K_n^{(0)} (x) \cdot \varepsilon^n
	\end{equation}
	where all coefficients $K_{n}^{(0)}$ are polylogarithms (beginning with the constant $K_{-1}^{(0)} = 2\mzv{3}$).
}%
	This proves that a failure of linear reducibility in Schwinger parameters does not prohibit a polylogarithmic result. In fact, section \ref{sec:K4-variable-change} shows how $K_4$ can be integrated parametrically nonetheless.

	While at three loops all planar massless on-shell four-point functions were calculated in \cite{HennSmirnov:PlanarThreeLoop} and the non-planar ones are in progress \cite{HennSmirnov:K4}, results at four loops seem to be very rare.
		From our above observations it seems plausible that at least some of them are linearly reducible.
		
	\begin{example}\label{example:4pt-4loop}
		The linearly reducible graph $G_4$ of figure \ref{fig:moreloop-on-shell} can be integrated in Schwinger parameters along the sequence $1,2,8,6,9,7,5,4$ of edges (setting $\SP_3 = 1$). The final set of polynomials $S_9=\set{s+u}$ in the reduction proves that all coefficients $f_n$ of the $\varepsilon$-expansion
		\begin{equation}%
			\label{eq:4pt-4loop-onshell}%
			\Phi\left( G_4 \right)
			= \frac{\Gamma(1+4\varepsilon)}{s^{1+4\varepsilon}}
			\sum_{n=-1}^{\infty} f_n\left(\frac{s}{u}\right) \cdot \varepsilon^n
			\quad\text{where}\quad
			s = (p_1+p_2)^2,
			u = (p_1+p_4)^2
		\end{equation}
		are harmonic polylogarithms $f_n \in L\left( \set{0,-1} \right)$ of $x\defas \frac{s}{u}$. 
		These are special hyperlogarithms introduced in \cite{RemiddiVermaseren:HarmonicPolylogarithms} and abbreviated $H_{n_1,\ldots,n_r} \defas \Hyper_{\underline{n_1},\ldots,\underline{n_r}}(x)$ for indices representing words $\underline{0} \defas \omega_{0}$ and $\underline{\pm n} \defas \mp \omega_0^{\abs{n}-1} \omega_{\pm 1}$, e.g. $H_{-2} = \Hyper_{\omega_0\omega_{-1}} (x) = \Li_2(-x)$. 
		Explicitly we computed
		\begin{align}%
			f_{-1} &=
 - \tfrac{79}{70}\mzv[3]{2}H_{{-1}}
 - \mzv{3}
\left(
15\mzv{2}H_{{-1,-1}}
 - 9\mzv{2}H_{{-1,0}}
 - H_{{-1,-2,-1}}
 + H_{{-1,-1,-2}}
 + 6H_{{-1,-1,0,0}}
\right)
\nonumber\\&\quad
 - 6\mzv[2]{3}H_{{-1}}
 - \tfrac{3}{2}\mzv{5}
\left(
11H_{{-1,-1}}
 - 5H_{{-1,0}}
\right)
 - \tfrac{3}{10}\mzv[2]{2}
\left(
H_{{-1,-2}}
 - 17H_{{-1,-1,0}}
 - 10H_{{-1,-1,-1}}
\right)
\nonumber\\&\quad
 - \mzv{2}
\Big(
 H_{{-1,-2,0,0}}
 - 2H_{{-1,-1,-2,0}}
 + 3H_{{-1,-1,-2,-1}}
 - H_{{-1,-1,-1,0,0}}
 + 6H_{{-1,-1,-3}}
\nonumber\\&\qquad\qquad
 - 3H_{{-1,-2,-1,-1}}
 - 2H_{{-1,-1,0,0,0}}
\Big)
 + H_{{-1,-2,-1,0,0,0}}
 - H_{{-1,-1,-2,-1,0,0}}
\nonumber\\&\quad
 + H_{{-1,-1,-2,0,0,0}}
 - 2H_{{-1,-1,-3,0,0}}
 + H_{{-1,-2,-1,-1,0,0}}
			\label{eq:4pt-4loop-1}%
		\end{align}
		while we provide $f_0$ in the attached file. With FIESTA \cite{SmirnovTentyukov:FIESTA} we obtained the approximation
		\begin{equation}
			\frac{\Phi\left( G_4 \right)}{\Gamma(1+4\varepsilon)}
			\approx
			-219.35 \varepsilon^{-1}
			-3626.82
			+\bigo{\varepsilon}
			\quad\text{at}\quad
			(s,u) 
			=(s_0,u_0)
			\defas
			\left(\frac{1}{2}, \frac{1}{5} \right)%
			\label{eq:4pt-4loop-numeric}%
		\end{equation}%
		which serves a successful independent check of our analytic result since it produces the (exact) first digits $-219.4440\ldots \varepsilon^{-1} - 3630.1071\ldots + \bigo{\varepsilon}$ at $(s_0, u_0)$.
	\end{example}

\subsection{Off-shell massless vertices (three scales)}
	Let us consider graphs with massless internal propagators ($m_e=0$) and three external momenta as shown in figure \ref{fig:vertices-2loops}.
	\begin{figure}
		\begin{equation*}
			\Delta_1
			\defas
			\hspace{-7mm}\Graph[0.4]{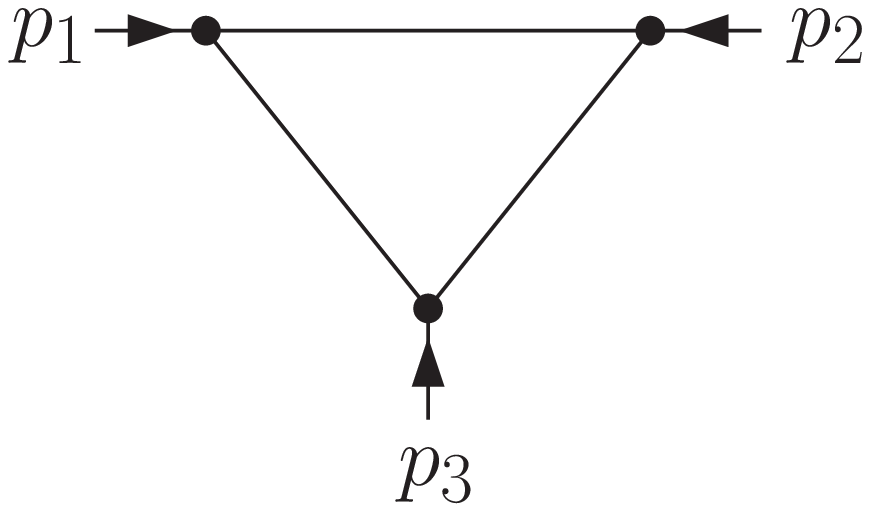}
			\quad
			\Graph[0.4]{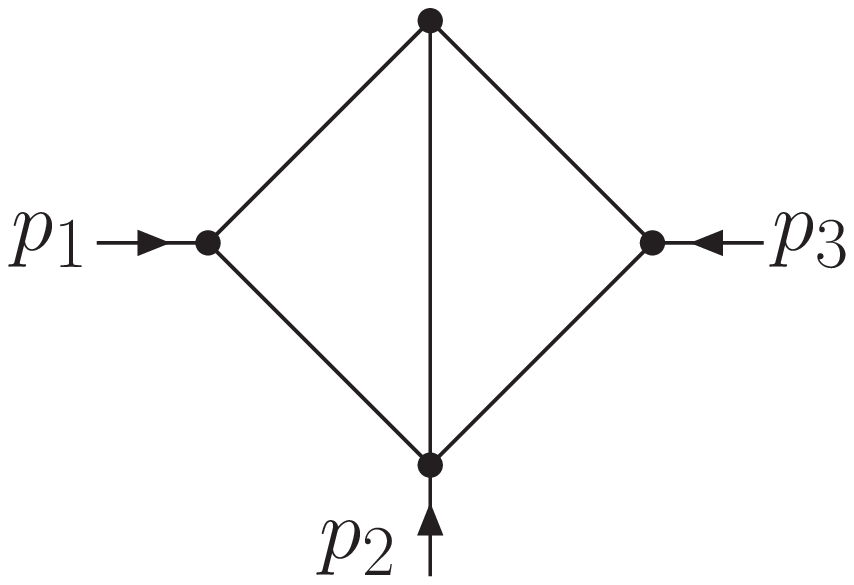}
			\quad
			\Graph[0.4]{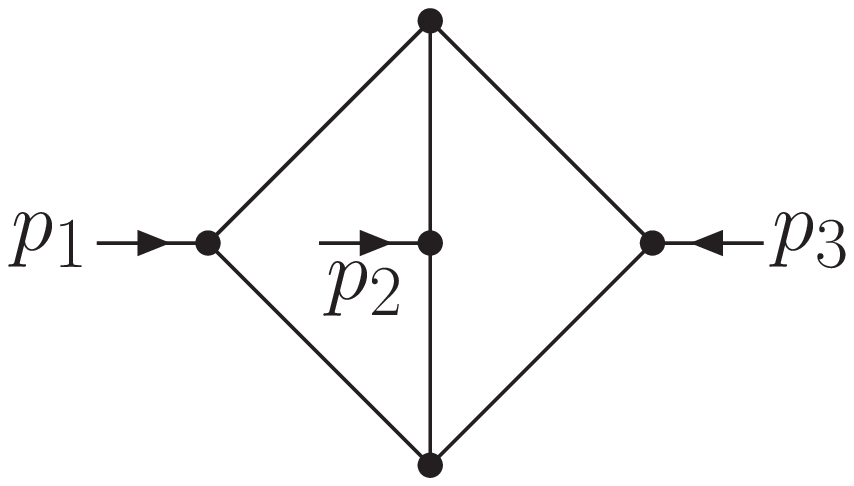}
			\quad
			\Graph[0.4]{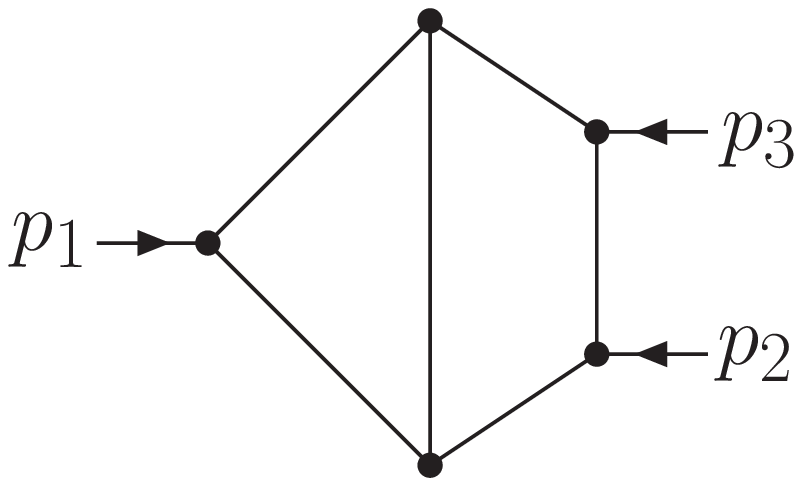}
		\end{equation*}
		\caption{All massless three-point graphs with one or two loops and without one-scale subgraphs (massless propagator insertions). Results are given in \cite{ChavezDuhr:Triangles}.}
		\label{fig:vertices-2loops}
	\end{figure}
	At two loops their linear reducibility was observed in \cite{ChavezDuhr:Triangles} and by integration of Schwinger parameters explicit results up to weight four were obtained.
	These are most conveniently expressed in terms of auxiliary complex\footnote{%
		The Euclidean region $p_1^2,p_2^2,p_3^2>0$ corresponds to positive values of $z\bar{z}$ and $(1-z)(1-\bar{z})$. This means either complex conjugate $\bar{z} = \conjugate{z}$ (when $\lambda<0$) or independent real $z,\bar{z}\in\R$ (when $\lambda>0$); cf. \cite{ChavezDuhr:Triangles}.}%
	variables $z,\bar{z}$ such that the square-root of the K\"{a}ll\'{e}n function $\lambda$ becomes rational:
	\begin{equation}\label{eq:triangle-kinematics}\begin{split}
		p_2^2
		&= p_1^2 \cdot z \bar{z}
		\qquad\text{and}\qquad
		p_3^2 
		= p_1^2 \cdot (1-z) (1-\bar{z})
		, \qquad\text{such that}
		\\
		(z-\bar{z})^2
		&= 
		\lambda \defas
		p_1^2 + p_2^2 + p_3^2 - 2p_1 p_2 - 2 p_1 p_3 - 2p_2 p_3.
	\end{split}\end{equation}
	\begin{example}\label{example:vertex-oneloop}
	The triangle $\Delta_1$ expands near $\Dim=4-2\varepsilon$ with $\ep_1=\ep_2=\ep_3=1$ as
	\begin{align*}
		\Phi\left(\Delta_1\right)
		&=\frac{\Gamma(1+\varepsilon)}{z-\bar{z}} \cdot p_1^{-2(1+\varepsilon)} \cdot \sum_{n=0}^{\infty} f_n(z,\bar{z}) \varepsilon^n
		\quad\text{with the leading order given by}
		\\%
		f_0
		&=4 i \Im\left\{ \Li_2(z)+\ln\abs{z} \cdot \ln(1-z) \right\}
		,\quad\text{the Bloch-Wigner dilogarithm}.
		\end{align*}
\end{example}
	Up to two loops, the functions (like $f_n$ in example \ref{example:vertex-oneloop}) occurring in the $\varepsilon$-expansions have symbols with letters drawn from the alphabet $\Sigma_{\Delta} \defas \set{z,\bar{z},1-z,1-\bar{z},z-\bar{z}}$. These where first studied in \cite{ChavezDuhr:Triangles} and generalize the single-valued multiple polylogarithms of \cite{Brown:Uniformes}. 
	Running the polynomial reduction algorithm proved
	\begin{theorem}
		\label{theorem:vertices}%
		All massless three-point graphs with up to three loops (some examples are depicted in figure \ref{fig:vertices-3loops}) are linearly reducible.
	\end{theorem}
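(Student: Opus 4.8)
The statement is a finite, if large, computation: it is established by running the polynomial reduction of appendix~\ref{sec:linear-reducibility} on each of the relevant graphs. So the plan has two parts --- first cut the a~priori infinite family down to a manageable finite list of ``core'' topologies, then carry out (and certify the success of) the reduction on each of them.

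\emph{Step 1: reduction to finitely many topologies.} I would first note that one may assume $G$ is one-particle-irreducible, since a one-particle-reducible graph factorises into propagators and three-point graphs with strictly fewer loops. Next, if $G$ contains a massless two-point (self-energy) subgraph $\gamma$, then $\gamma$ is itself linearly reducible by section~\ref{sec:single-scale} and $G/\gamma$ is again a massless three-point graph with fewer loops; an inductive argument (integrating the Schwinger parameters of $\gamma$ first, so that the reduction of $G$ factors through those of $\gamma$ and of $G/\gamma$) removes all such subgraphs. Finally, every massless three-point graph arises by contracting internal edges from one all of whose internal vertices are trivalent, and the graph polynomials \eqref{eq:graph-polynomials} of a contraction $G/e$ are the specialisations $\restrict{\psipol}{\SP_e=0}$ and $\restrict{\phipol}{\SP_e=0}$ (deletion--contraction), so the polynomials controlling the singularities of $G/e$ are dominated by the $\SP_e=0$ specialisation of those for $G$ and linear reducibility descends along edge contractions. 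It therefore remains to verify the claim for the finitely many trivalent, $1$PI, self-energy-free three-point graphs with one, two, or three loops: the triangle $\Delta_1$, the two-loop graphs of figure~\ref{fig:vertices-2loops}, and the handful of three-loop graphs indicated in figure~\ref{fig:vertices-3loops}.

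\emph{Step 2: the reduction itself.} For each such $G$ I would write down $\psipol$ and $\phipol = \sum_{F=T_1\cupdot T_2} q^2(T_1)\prod_{e\notin F}\SP_e$ from \eqref{eq:graph-polynomials} with $m_e=0$ and fully generic off-shell momenta $p_1^2,p_2^2,p_3^2$ --- the generic case being the hardest, and its reducibility implying that of every on-shell or partially-on-shell specialisation --- so that $S_0 \defas \set{\psipol,\phipol}$ consists of polynomials that are linear in every Schwinger parameter. I would then apply the Fubini-type polynomial reduction of \cite{Brown:TwoPoint}: having chosen an ordering of the edges, one passes from $S_n$ to $S_{n+1}$ by collecting the leading and trailing coefficients in $\SP_{n+1}$, the discriminants, and the pairwise resultants $\left[f,g\right]_{\SP_{n+1}}$, discarding monomial factors and retaining only irreducible factors; the compatibility-graph refinement suppresses the combinatorially many spurious resultants that otherwise arise at three loops. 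The graph is linearly reducible precisely when some ordering can be chosen so that every element of every $S_n$ has degree at most one in $\SP_{n+1}$; then the sequence of sets, living in ever fewer variables, terminates, and $S_N$ records the letters of the symbol of $\Phi(G)$, which after passing to the rationalising variables $z,\bar z$ of \eqref{eq:triangle-kinematics} turn out to span the alphabet $\Sigma_{\Delta}$.

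\emph{The main obstacle.} The conceptual content lies entirely in the degree control of Step~2: nothing a~priori prevents a resultant or discriminant from producing an irreducible polynomial of degree $\geq 2$ in some not-yet-integrated Schwinger parameter, which would block the algorithm of \cite{Brown:TwoPoint}, and the assertion is exactly that for these graphs one can always steer the ordering so that no genuine higher-degree obstruction ever becomes the active variable. I would establish this by actually executing the reduction in the author's Maple implementation for each topology on the finite list and exhibiting an admissible edge ordering in every case; the practical, rather than conceptual, difficulty is the explosion in the size of the intermediate sets $S_n$ at three loops, which is precisely what the compatibility-graph bookkeeping controls. Should an ordering fail for some larger graph, a change of variables as in section~\ref{sec:extending-reducibility} could still restore reducibility, but for up to three loops this turns out to be unnecessary.
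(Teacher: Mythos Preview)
Your approach is essentially the paper's: the theorem is established by running the polynomial-reduction algorithm of appendix~\ref{sec:linear-reducibility} (with the compatibility-graph refinement) on each topology, and the paper's entire proof is the single clause ``Running the polynomial reduction algorithm proved\ldots''. You have spelled out more carefully than the paper the preliminary reduction to a finite list of core graphs (1PI, self-energy-free, obtained from trivalent topologies by contraction), which the paper leaves implicit; this is a genuine service, and your arguments for why linear reducibility is inherited under each of these reductions are correct.

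Two small corrections. First, the final alphabet is not always $\Sigma_{\Delta}$: the paper explicitly notes that at three loops the additional letters $z\bar z-1$, $z+\bar z-1$, and $z\bar z-z-\bar z$ can occur. Second, discriminants do not enter the reduction here. Since the massless $\psipol$ and $\phipol$ are already linear in every $\SP_e$, the reduction step \eqref{eq:polynomial-reduction} uses only leading coefficients, constant terms, and pairwise resultants; the moment an irreducible quadratic appears in the active variable the algorithm simply halts rather than producing a discriminant, and the content of the theorem is precisely that this never happens for these graphs along some ordering.
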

	\begin{figure}
		\begin{gather*}
			\Delta_{3,1}
			\defas
			\Graph[0.3]{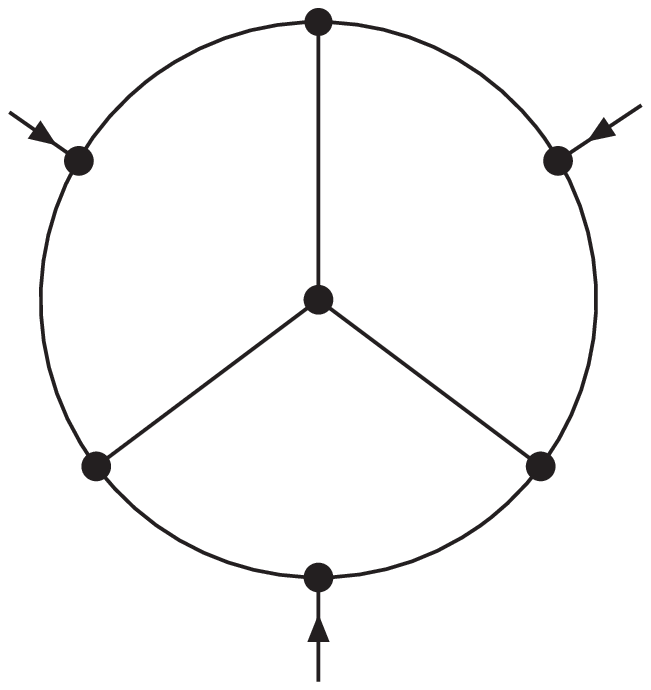}
			\qquad
			\Delta_{3,22}
			\defas
			\Graph[0.3]{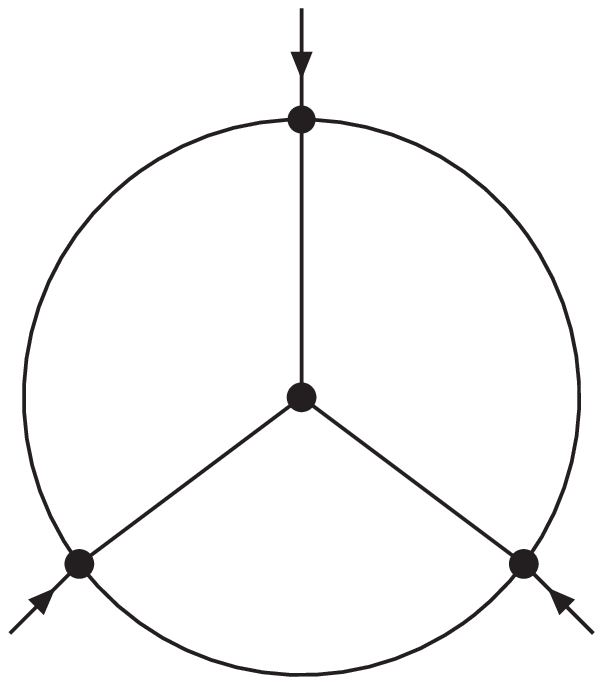}
			\qquad
			\Delta_{3,14}
			\defas
			\Graph[0.3]{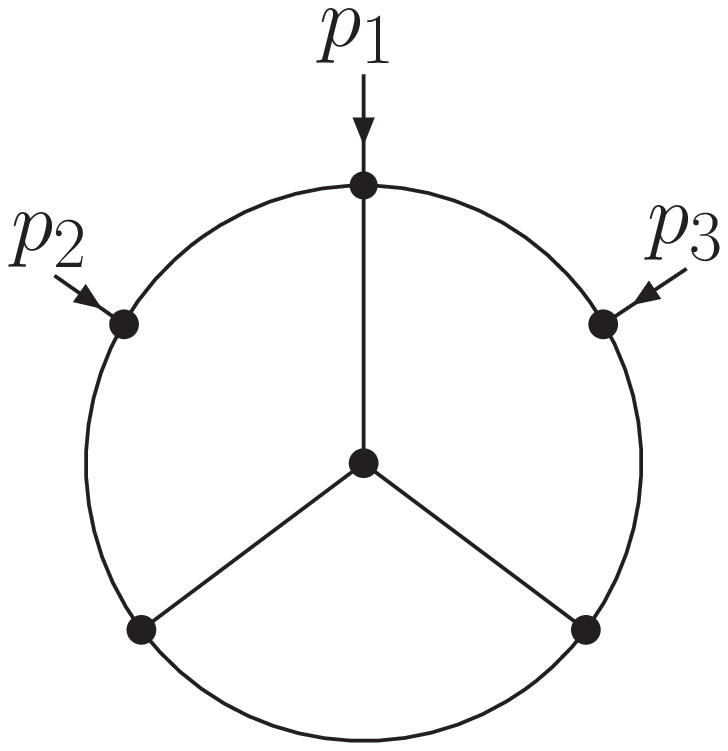}
			\\
			\Delta_{3,5}
			\defas
			\Graph[0.4]{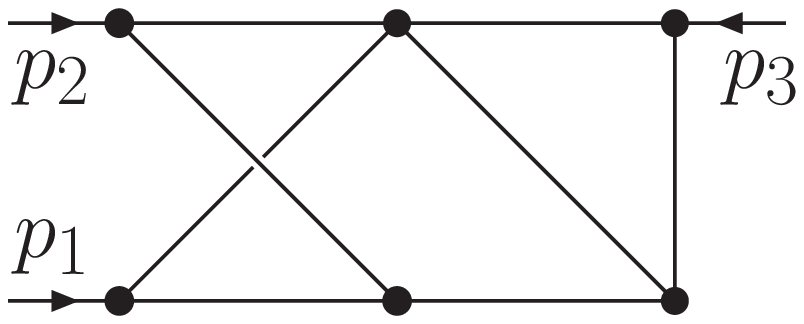}
			\qquad
			\Delta_{3,20}
			\defas
			\Graph[0.4]{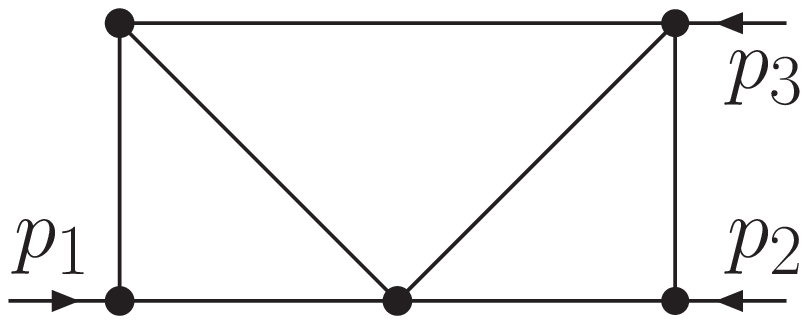}
			\qquad
			\Delta_{3,2}
			\defas
			\Graph[0.3]{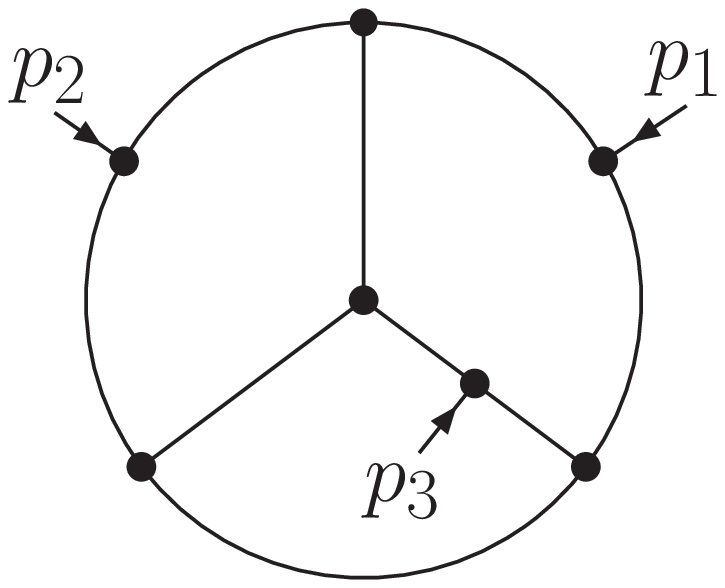}
		\end{gather*}
			\caption{These three-loop three-point graphs are discussed in example \ref{example:3loop-vertices}.}%
		\label{fig:vertices-3loops}%
	\end{figure}
	The final sets $S_N$ of the polynomial reduction provide the alphabets of the symbols. We found that these always contain the set $\Sigma_{\Delta}$ familiar from two loops, but in some cases also the additional letters $z\bar{z}-1$, $z+\bar{z}-1$ and $z\bar{z}-z-\bar{z}$ occur.

	We computed these functions in $\Dim=4-2\varepsilon$ dimensions with unity propagator powers $\ep_e=1$ for all edges $e\in E$ and performed checks exploiting symmetry properties, known results in single-scale limits and numeric evaluations.
	However, the length of the results and the rich structure of the occurring polylogarithms suggests a detailed and separate discussion elsewhere. We provide some selected data for the graphs of figure \ref{fig:vertices-3loops} in
	\begin{example}%
		\label{example:3loop-vertices}%
		Three-point functions can have very different complexity: The simplest examples of figure \ref{fig:vertices-3loops} evaluate in leading order to rational functions like
		\begin{equation}\label{eq:v3_1v3_14}%
			\Phi\left( \Delta_{3,1} \right)
			= \frac{20\mzv{5}}{p_1^2 p_2^2 p_3^2} + \bigo{\varepsilon}
			\quad
			\Phi\left( \Delta_{3,14} \right)
			= -\frac{2 \mzv{3}}{p_2^2 p_3^2 \varepsilon} + \bigo{\varepsilon^0}
			\quad
			\Phi \left( \Delta_{3,22} \right)
			= \frac{2\mzv{3}}{\varepsilon} + \bigo{\varepsilon^0},
		\end{equation}
		or to hyperlogarithms $\L_w\defas \L_w(z)$ and $\bar{\L}_w \defas \L_{w'}\left( \bar{z} \right)$ with $w,w'\in\set{0,1}^{\times}$ not involving the symbol letter $z-\bar{z}$ (these are called SVMP in \cite{Schnetz:GraphicalFunctions} or SVHPL in \cite{DDEHPS:LeadingSingularitiesOffShellConformal}), e.g.
		\begin{align}
			\Phi\left( \Delta_{3,20} \right)
&=
\frac{p_1^{-2}}{z-\bar{z}} \Big\{
	\mzv{3} \left(
		4\,\bar{L}_{0,1,1}
		-4\,L_{0,1,1}
		+6\,L_{0,1}\bar{L}_{0}
		-6\,L_{0}\bar{L}_{0,1}
		-6\,\bar{L}_{0,1,0}
		+6\,L_{0,1,0}
 \right)
\nonumber\\&\quad
+L_{0,1,1,0,1,0}
-\bar{L}_{0,1,1,0,1,0}
+\bar{L}_{0,1,0,1,1,0}
-L_{0,1,0,1,1,0}
+L_{0,1,1}\bar{L}_{0,1,0}
-L_{0,1,0}\bar{L}_{0,1,1}
\nonumber\\&\quad
+L_{0}\bar{L}_{0,1,0,1,1}
-L_{0,1,0,1,1}\bar{L}_{0}
+L_{0,1,1,0,1}\bar{L}_{0}
-L_{0}\bar{L}_{0,1,1,0,1}
\nonumber\\&\quad
+L_{0,1}\bar{L}_{0,1,0,1}
-L_{0,1,0,1}\bar{L}_{0,1}
+L_{0,1,1,0}\bar{L}_{0,1}
-L_{0,1}\bar{L}_{0,1,1,0}
\Big\}
+ \bigo{\varepsilon}
.%
			\label{eq:v3_20}%
		\end{align}
		In contrast, already the leading order of $\Phi\left( \Delta_{3,2} \right)$ needs the letter $z-\bar{z}$ and the subleading contribution to $\Phi\left( \Delta_{3,5} \right)$ further employs $z\bar{z}-1$ and has $ 2348 $ different terms $\L_w \cdot \bar{\L}_{w'}$. These expansions, including $\Phi\left( \Delta_{3,22} \right)$ up to order $\varepsilon^3$, can be found in the ancillary file.
	\end{example}
	Let us stress that linear reducibility of course is retained upon specializing $p_i^2=0$ to be light-like for one or two of the external momenta, corresponding to (possibly singular) limits $z\rightarrow 0,1,\infty$.
	In particular, combining the remarks of section \ref{sec:dimreg} with theorem \ref{theorem:vertices} implies that all three-loop \emph{form-factor integrals} as studied for example in \cite{GehrmannHeinrichHuberStuderus:FormFactorInsertions,HeinrichHuberKosowerSmirnov:NinePropagatorFormFactors} can be integrated parametrically.

	Regarding the quickly growing number of graphs at even higher loop orders, but also from a purely conceptual viewpoint, a combinatorial criterion (in the spirit of theorem \ref{theorem:vw-3}) on a three-point graph that at least suffices to deduce linear reducibility (without the need of running the polynomial reduction algorithm) in some cases is highly desirable and in progress. For now let us only remark that reducible graphs also exist at higher loop orders.
	\begin{example}%
		\label{example:4loop-vertex}%
		The non-planar four-loop three-point graph $\Delta_4$ of figure \ref{fig:vertices-moreloops} is linearly reducible, its leading order contribution in $D=4-2\varepsilon$ is supplied in the attached file and has a symbol with letters $\Sigma_{\Delta}$.
	\end{example}
	\begin{figure}
		\begin{equation*}
			\Delta_4 \defas \Graph[0.4]{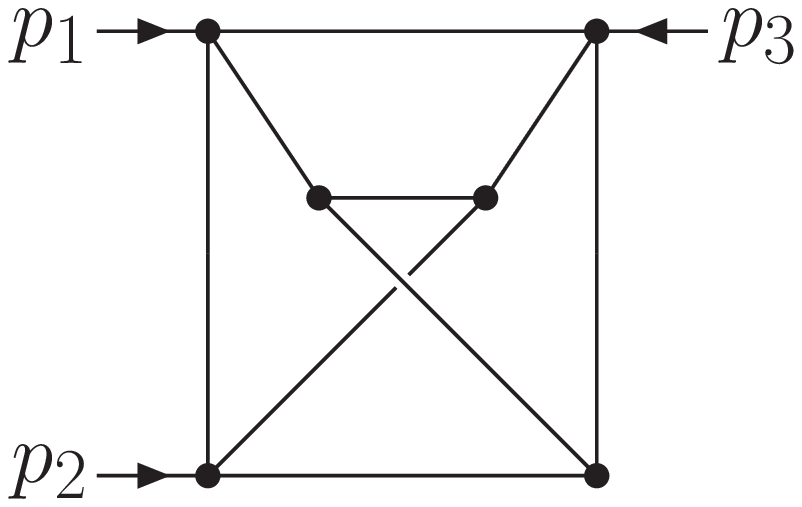}
			\qquad
			\Graph[0.4]{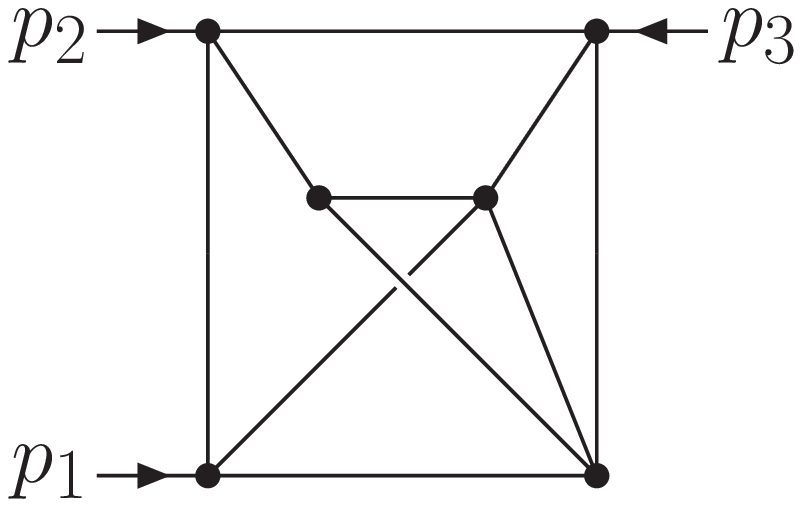}
			\qquad
			\Graph[0.4]{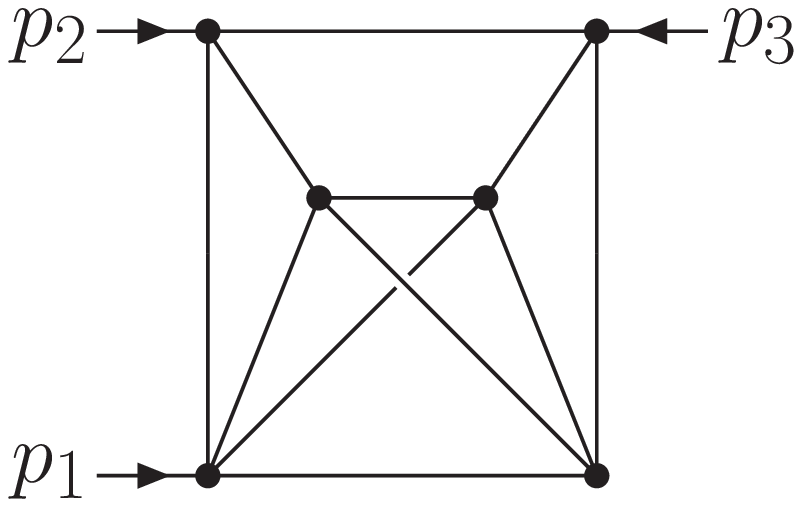}
		\end{equation*}
		\caption{Examples of linearly reducible massless vertices with more than three loops.}
		\label{fig:vertices-moreloops}
	\end{figure}

\subsection{Conformal four-point integrals}%
\label{sec:conformal-four-point}
	The same type of functions that describe off-shell three-point graphs was studied as \emph{graphical functions} in \cite{Schnetz:GraphicalFunctions} and occurs in conformally invariant four-point position-space integrals in exactly $\Dim=4$ dimensions, see \cite{DDEHPS:LeadingSingularitiesOffShellConformal} and references therein.
	Namely, conformal invariance implies that functions like the \emph{hard integral}\footnote{%
	This is introduced in \cite{DDEHPS:LeadingSingularitiesOffShellConformal};
	$x_{ij} \defas \abs{ x_i-x_j}$ denotes Euclidean distances between vectors $x_i,x_j \in \R^4$.%
}
(dashed edges encode propagators in the numerator)
	\begin{equation*}
		H_{12;34}
		= \Graph[0.5]{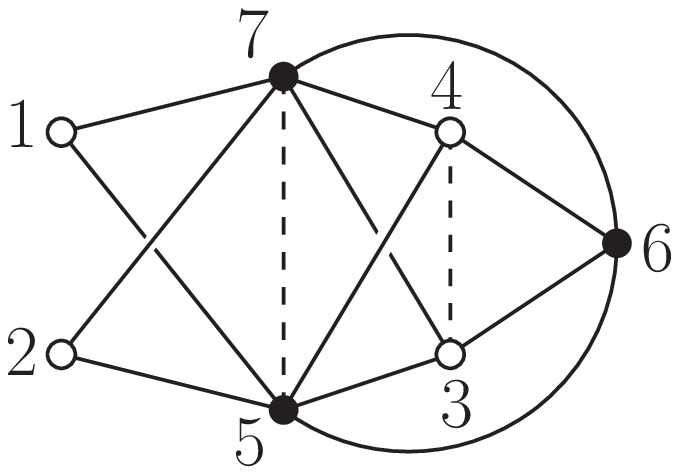}\ 
		\defas 
		\frac{ x_{34}^2 }{\pi^6}\!
		\int\limits_{\R^{12}} \frac{\dd[4]x_5 \, \dd[4] x_6 \, \dd[4]x_7 \cdot x^2_{57}}{(x_{15}^2 x_{25}^2 x_{35}^2
		x^2_{45}) x_{56}^2 (x_{36}^2 x^2_{46}) x^2_{67} (x_{17}^2 x_{27}^2 x^2_{37}
		x_{47}^2)}
	\end{equation*}
	are a product of a rational prefactor and a function depending only on two conformal cross-ratios which can be parametrized in terms of auxiliary variables $z, \bar{z}$ as
	\begin{equation}\label{eq:conformal-cross-ratios}
		z\bar{z} = \frac{x_{12}^2 x_{34}^2}{x_{13}^2 x_{24}^2}
		\quad\text{and}\quad
		(1-z)(1-\bar{z}) = \frac{x_{14}^2 x_{23}^2}{x_{13}^2 x_{24}^2}.
	\end{equation}
	The Schwinger trick delivers a parametric representation for this type of integrals and we found linear reducibility for all such functions at three loops\footnote{In this position-space setting one counts the number of internal vertices as ``loops'' because these are integrated over.} we considered, for example we integrated $H_{12;34}$ and verified the result that was given in \cite{DDEHPS:LeadingSingularitiesOffShellConformal}.
	\begin{figure}
		\begin{gather*}
			\Graph[0.4]{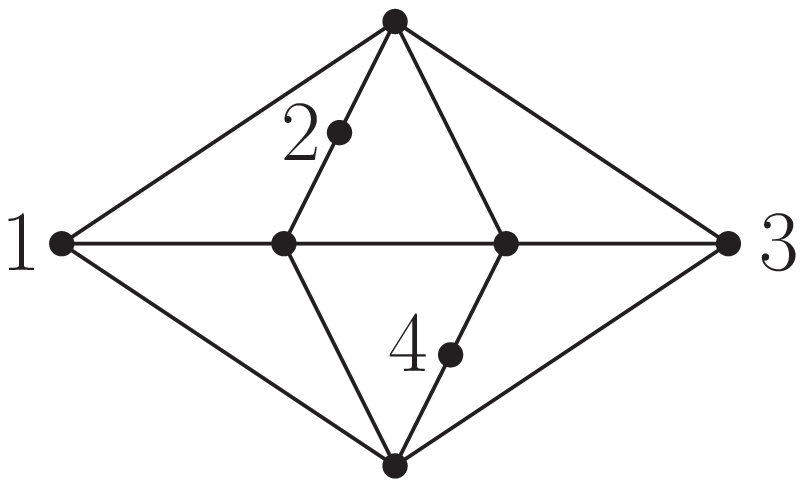}
			\qquad
			\Graph[0.4]{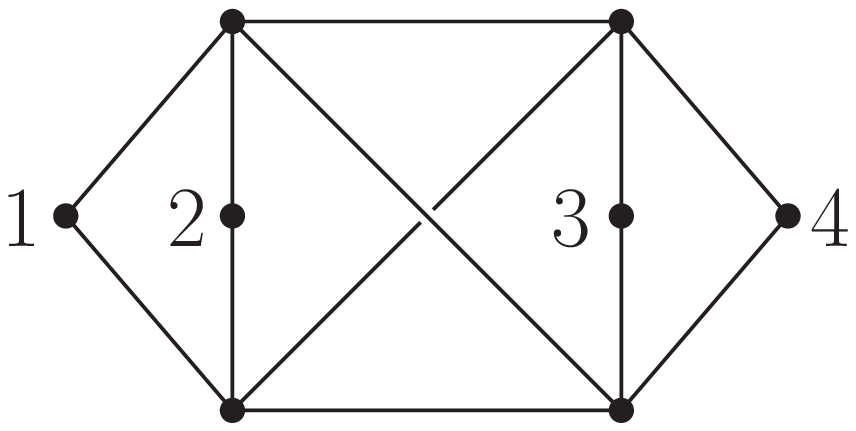}
			\qquad
			\Graph[0.4]{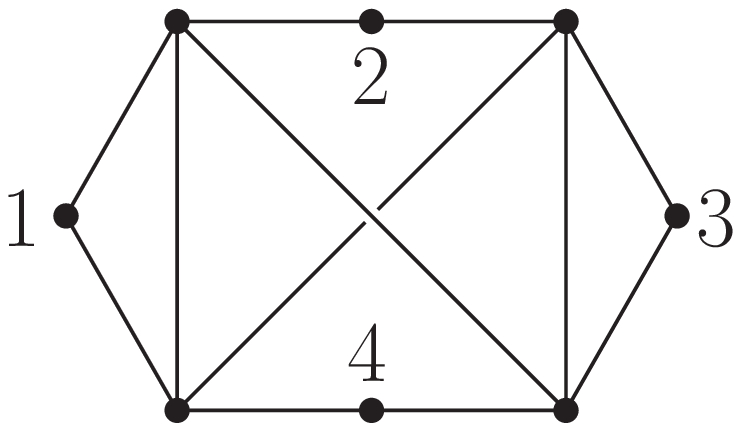}
		\end{gather*}%
		\caption{Conformal four-point graphs (with fixed positions $x_1,\ldots,x_4$ at the vertices marked $1$ through $4$) that are not linearly reducible.}%
		\label{fig:graphical-non-reducible}%
	\end{figure}
	Furthermore, at four loops without inverse (numerator) propagators, all but the three graphical functions in figure \ref{fig:graphical-non-reducible} are linearly reducible and can thus be integrated parametrically.
\begin{example}\label{ex:graphical-reducible}
	The four-point functions depicted in figure \ref{fig:graphical-reducible} are
	\begin{equation}%
		\label{eq:conformal-examples}
		F_{8,10}
		= \frac{f_{8,10}}{x_{34}^2 x_{13}^4 x_{24}^4} 
		\quad
		F_{8,13} 
		= \frac{ f_{8,13}}{x_{34}^2 x_{13}^4 x_{24}^4}
		\quad
		F_{8,16}
		= \frac{f_{8,16}}{x_{13}^4 x_{24}^4} 
	\end{equation}
	for polylogarithms $f_{8,10}, f_{8,13}$ and $f_{8,16}$ provided in the accompanying file. These are of homogeneous weight and feature a common denominator summarized in table \ref{tab:graphical-reducible}.
	The last column counts the summands $L_w(z) \cdot L_u(\bar{z})$ of $f_{8,i}$ with non-zero coefficient in the basis where $u\in \set{0,1}^{\times}$ while $w\in \left(\set{0,1} \cup \Sigma_i \right)^{\times}$ can have additional letters $\Sigma_i \subseteq \set{\bar{z}, \frac{1}{\bar{z}}, 1-\bar{z}}$ given by the zeros of the additional letters of the symbol.

	Rough numeric estimates $f_{8,10} \approx 113$, $f_{8,13} \approx 153$ and $f_{8,16} \approx 552$ at $z=\frac{1}{4}$, $\bar{z}=\frac{1}{2}$ obtained by FIESTA provide a successful check of these exact analytic results ($f_{8,10} = 113.579\ldots$, $f_{8,13} = 154.160\ldots$ and $f_{8,16} = 555.438\ldots$).
\end{example}
\begin{figure}
	\begin{gather*}
		F_{8,10} 
		\defas \Graph[0.4]{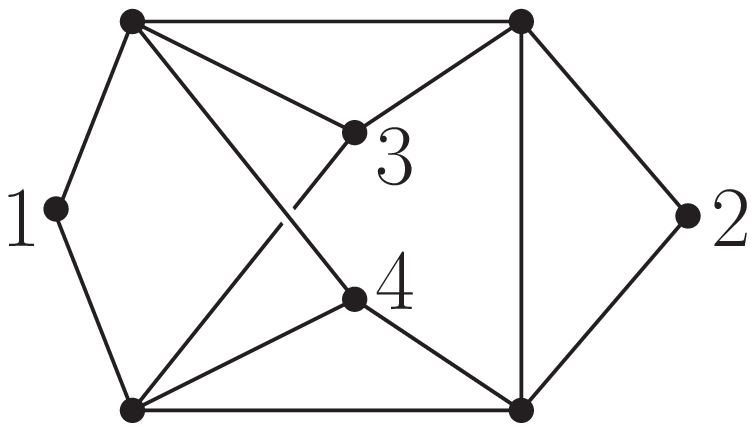}
		\qquad
		F_{8,13}
		\defas \Graph[0.4]{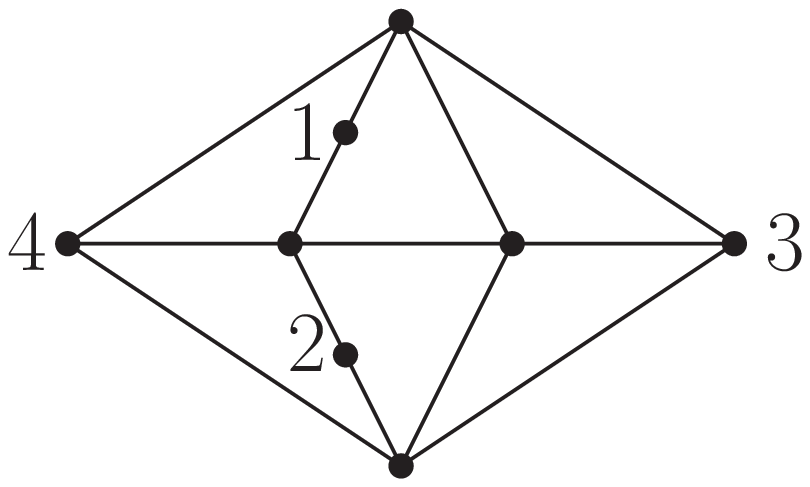}
		\qquad
		F_{8,16}
		\defas \Graph[0.4]{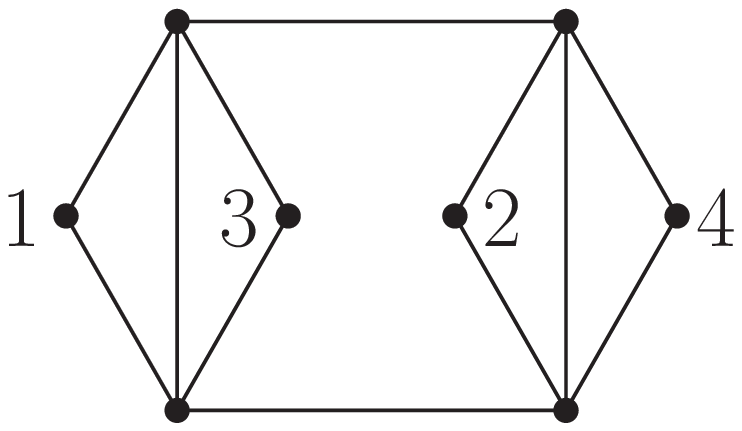}
	\end{gather*}%
	\caption{Linearly reducible graphical functions at four loops of varying complexity from example \ref{ex:graphical-reducible}.}%
	\label{fig:graphical-reducible}%
\end{figure}
\begin{table}
	\centering
	\begin{tabular}{rcccc}
		\toprule
		function		& denominator & weight & additional symbol letters & number of terms \\
		\midrule
		$f_{8,10}$	& $(z-\bar{z})(z+\bar{z}-2)$	& $8$	& $\set{z-\bar{z}, z\bar{z} - 1, z + \bar{z} - 1}$ & $4235$ \\
		$f_{8,13}$	& $(z-\bar{z})^2$ & $8$	& $\emptyset$ & $107$ \\
		$f_{8,16}$	& $z\bar{z}(z-\bar{z})$ & $7$	& $\set{z-\bar{z}}$ & $146$ \\
		\bottomrule
	\end{tabular}
	\caption{Details on the conformal integrals of example \ref{ex:graphical-reducible} (figure \ref{fig:graphical-reducible}).}%
	\label{tab:graphical-reducible}%
\end{table}

\subsection{Integrals with massive propagators and up to seven scales}%
\label{sec:masses-and-many-scales}
	Recently, the method of differential equations was employed to obtain analytic results in terms of polylogarithms for a variety of two-loop integrals involving three scales as for example in \cite{ManteuffelStuderus:MassiveDoubleBoxes,GehrmannTancrediWeihs:TwoLoopqqVVplanar,HennSmirnov:Bhabha}. 
	Clearly it is an interesting question to investigate whether these are linearly reducible; violations of this criterion mean that parametric integration is not possible straight away and might therefore yield to insights how to extend the method as we comment on in section \ref{sec:extending-reducibility}.
	\begin{figure}
		\begin{align*}
			\Graph[0.4]{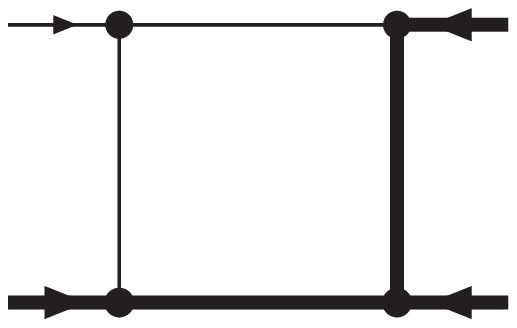}
			\qquad
			\Graph[0.4]{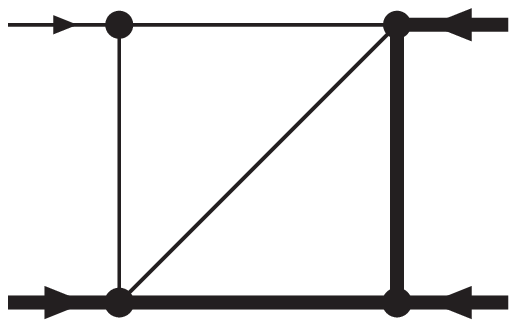}
			\qquad
			\Graph[0.4]{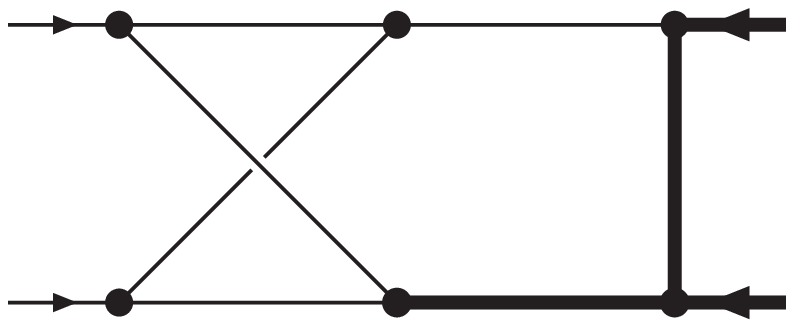}
			\qquad
			\Graph[0.4]{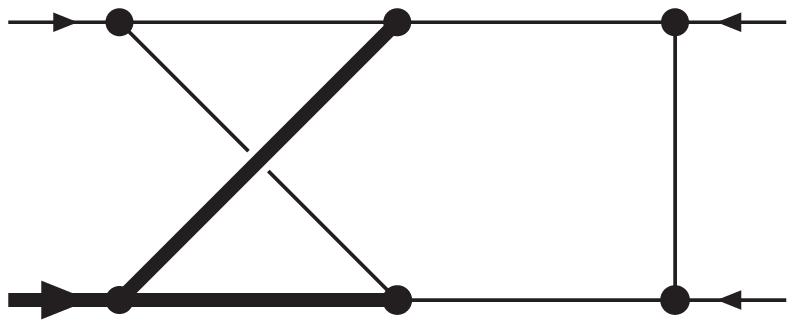}
			\\
			\Graph[0.35]{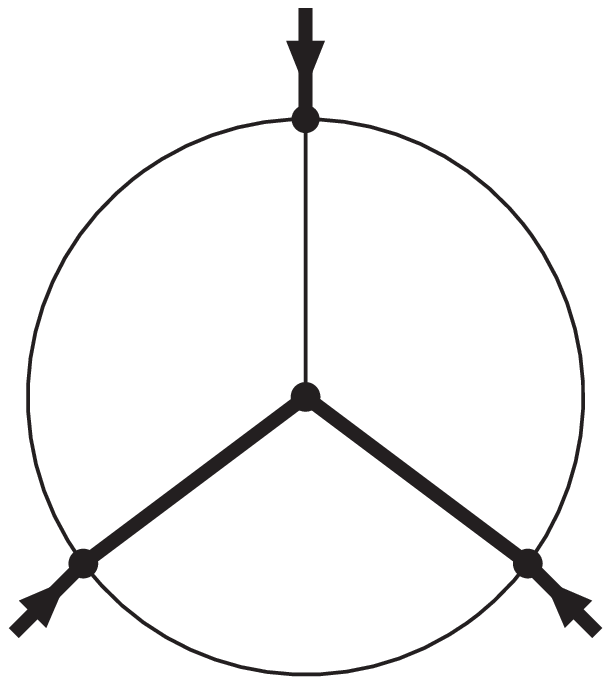}
			\qquad
			\Graph[0.35]{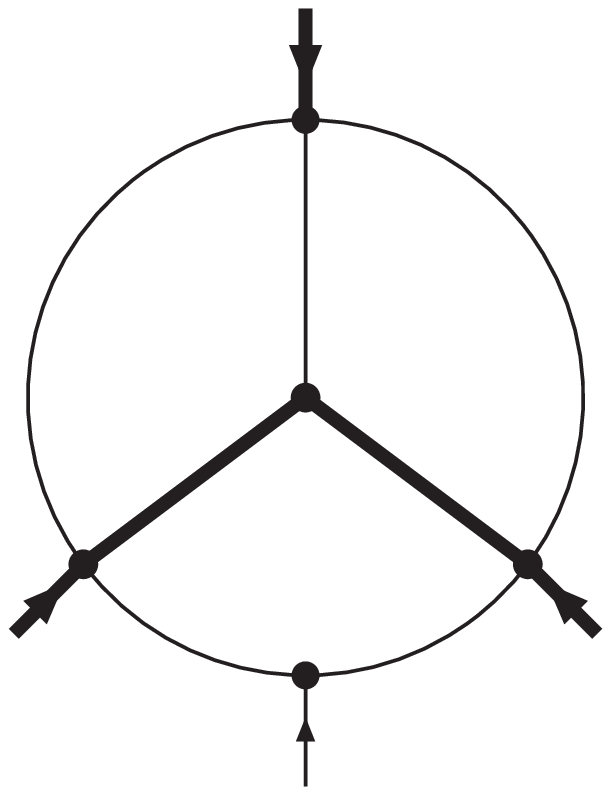}
			\qquad
			\Graph[0.35]{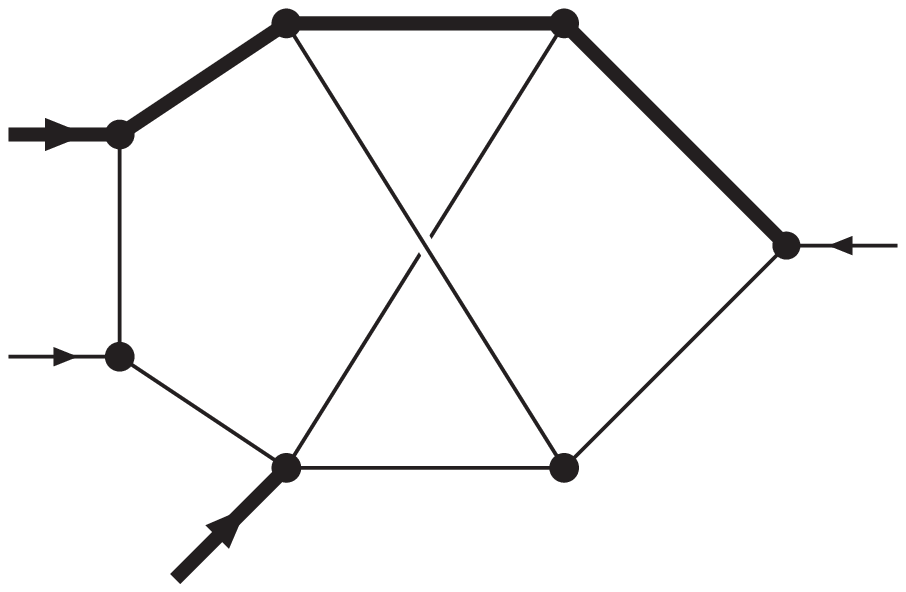}
			\qquad
			\Graph[0.35]{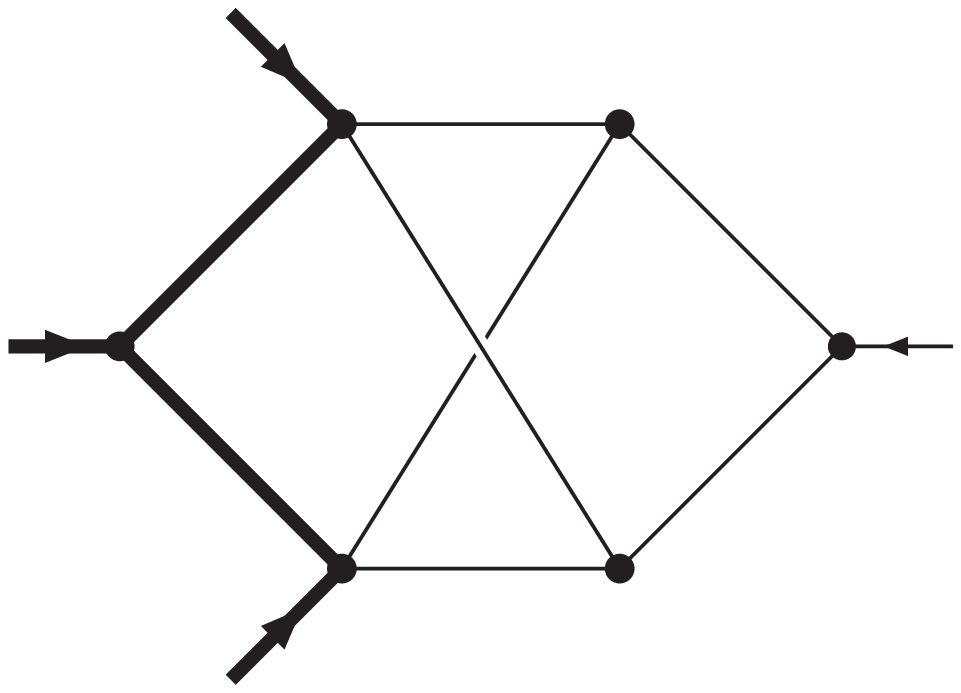}
	\end{align*}
		\caption{Linearly reducible graphs with some massive (thick edges) and otherwise massless propagators (thin edges). Thin external legs are light-like ($p_i^2=0$), while thick external legs may take arbitrary values of $p_i^2$.}%
		\label{fig:reduciblemanyscales}%
	\end{figure}

	Thinking in the other direction, even though most Feynman graphs with general kinematics are \emph{not} linearly reducible, figure \ref{fig:reduciblemanyscales} shows some highly non-trivial integrals we found that are linearly reducible and thus amenable to direct integration. These involve up to three off-shell external momenta and an example with three (different) internal masses.

	As a proof of concept we give explicit results for the first two graphs of figure \ref{fig:reduciblemanyscales} valid in $\Dim=4-2\varepsilon$ dimensions with propagator powers $\ep_e = 1$ for all edges $e$ and Euclidean scalar products $p^2 \geq 0$ of momenta.

\subsubsection{Box with two masses and three off-shell legs (seven scales)}
	The one-loop box with four external momenta and $p_2^2 = m_1 = m_2 = 0$,
	\begin{equation}\label{eq:box2masses}
		\Phi\left( \Graph[0.35]{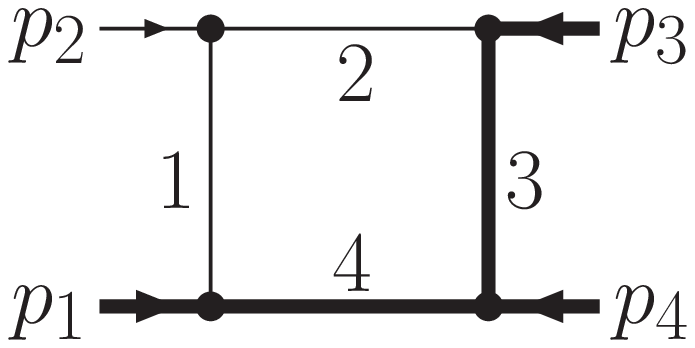} \right)
		= \frac{\Gamma(2+\varepsilon)}{m_4^{4+2\varepsilon}}
		\cdot \sum_{n=-1}^{\infty} f_n \left( \frac{m_3^2}{m_4^2}, \frac{p_1^2}{m_4^2}, \frac{p_3^2}{m_4^2}, \frac{p_4^2}{m_4^2}, \frac{(p_1+p_2)^2}{m_4^2}, \frac{(p_1+p_4)^2}{m_4^2} \right) \varepsilon^n
	\end{equation}
	is linearly reducible (the first graph in figure \ref{fig:reduciblemanyscales}) and can therefore be integrated in Schwinger parameters.
	The arguments of the polylogarithms $f_n$ in general involve several square-roots of rational functions of the six dimensionless ratios, which can be rationalized by quadratic transformations similar to \eqref{eq:triangle-kinematics}.
	For brevity we thus specialize to simpler kinematics in the sequel.

\subsubsection{Box with two adjacent masses and one off-shell leg (five scales)}
\label{sec:boxcorner}%
	Restricting to $p_3^2=p_4^2 = 0$, define the dimensionless ratios
	\begin{equation}\label{eq:boxcorner-kinematics}
		p\defas \frac{p_1^2}{m_4^2},
		\quad
		m\defas \frac{m_3^2}{m_4^2},
		\quad
		s\defas\frac{(p_1+p_2)^2}{m_4^2}
		\quad\text{and}\quad
		u\defas\frac{(p_1+p_4)^2}{m_4^2}
	\end{equation}
	and extract the dependence on $m_4^2$ by power counting such that
	\begin{equation}\label{eq:boxcorner}
		\Phi\left( \Graph[0.35]{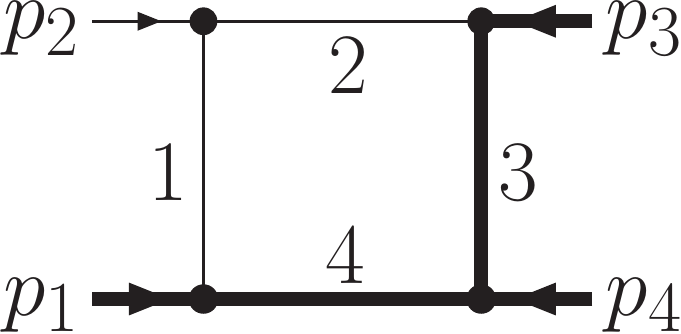} \right)
		= \frac{\Gamma(2+\varepsilon) \cdot m_4^{-4-2\varepsilon}}{m p-s u-s-m u}
			\cdot \sum_{n=-1}^{\infty} f_n \left(s, u, p, m \right) \varepsilon^n.
	\end{equation}
	The final set of polynomials in the reduction (after integrating $\SP_1, \SP_2$ and $\SP_4$) is
	\begin{equation}\begin{split}
		S_{\set{1,2,4}}
		=\Big\{
		&
					s-p,
					s+m, 
					s-m p, 
					s+u-p, 
					s+m-p-1, 
					s(1+u)+m(u-p),
		\\&
					u-p,
					u+1, 
					u+1-m, 
					p+1, 
					1-m%
		\Big\}%
		\end{split}%
		\label{eq:boxcorner-symbol}%
	\end{equation}
	and confines the symbol of the polylogarithms $f_n$ to arbitrary order $n\geq -1$ as explained in section \ref{}.
	In terms of the hyperlogarithms $S_{w} \defas \L_{w}(s)$, $U_{w} \defas \L_{w}(u)$, $P_{w} \defas \L_{w}(p)$ and $M_{w} \defas \L_{w}(m)$ we obtain
	\begin{align}
		f_{-1} &=
U_{-1}
 + S_{-m}
 - P_{-1}
 =
 \ln \frac{(u+1)(s+m)}{m(p+1)}
,
		\\
		f_0 &=
2S_{0,-m}
 - 2S_{{\frac {m \left( -u+p \right) }{u+1}},-m}
 - U_{-1}
 - S_{-m}M_{0}
 - 2S_{-m,-m}
 + 2S_{{\frac {m \left( -u+p \right) }{u+1}}}P_{-1}
\nonumber\\&\quad
 - 2U_{-1,-1}
 - S_{p+1-m}P_{-1}
 + P_{-1}
 + U_{0,-1}
 + U_{-1+m,-1}
 + P_{-1+m}M_{0}
 - P_{0,-1}
\nonumber\\&\quad
 + S_{mp,-m}
 - U_{-1+m}M_{0}
 - 2S_{{\frac {m \left( -u+p \right) }{u+1}}}U_{-1}
 + S_{p+1-m}M_{0}
 - P_{-1+m,-1}
 - S_{-m}
\nonumber\\&\quad
 + 2P_{-1,-1}
 - S_{mp}P_{-1}
 + S_{p+1-m,-m}
 
	\end{align}
	while $f_1$, $f_2$, $f_3$ and $f_4$ are supplied in the attached file. Note that $f_{-1}$ and $f_0$ are given in (4.39) of \cite{EllisZanderighi:ScalarOneLoopQCD} which serve a successful check of our method. We also computed the special case $p_1^2 = -m_4^2$  and the setup $p_1^2=p_2^2=m_4^2=0$ (both introduce a further divergence and thus start proportional to $\varepsilon^{-2}$) to check (4.28) and (4.36) therein.

	The possibility to expand all these integrals to arbitrary order in $\varepsilon$ (further allowing for shifts $\ep_e = 1 + \epe_e \varepsilon$ of propagator powers) is to our knowledge new\footnote{General results in terms of hypergeometric functions are given in \cite{FleischerJegerlehnerTarasov:HypergeometricOneLoop}, however it is not clear how to expand these to arbitrary orders.}.

\subsubsection{Double-triangle with two legs off-shell (four scales)}%
\label{sec:double-triangle-massless}
Consider the second graph of figure \ref{fig:reduciblemanyscales} with massless propagators and two off-shell momenta $q\defas p_3^2$, $p\defas \frac{p_1^2}{q}$ and set $s=\frac{(p_1+p_2)^2}{q}$ and $u=\frac{(p_1+p_4)^2}{q}$. It is linearly reducible along the sequence $3,4,5,2,1$ of edges with final polynomials 
\begin{equation}\label{eq:diagbox-massless-symbol}
	S_{\set{3,4,5,2}}
	=
	\set{p-s, 1-s, 1-u, p - u, p-u s, 1 + p - s - u}
\end{equation}
	which determine the alphabet of the symbol of the functions $f_n$ in the expansion
	\begin{equation}\label{eq:diagbox-massless}
		\Phi\left( \Graph[0.4]{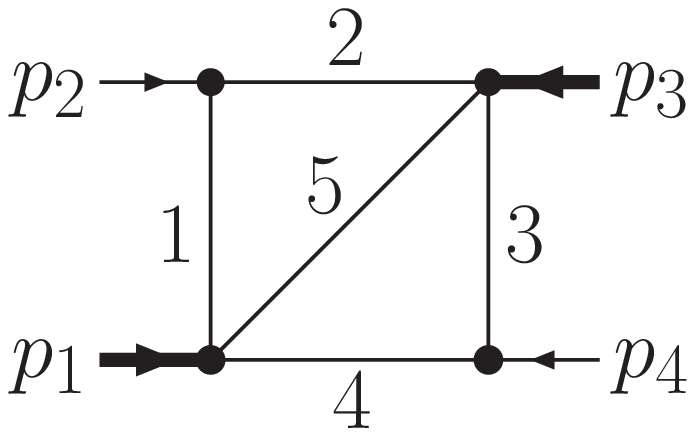} \right)
		=
		\frac{\Gamma(1+2\varepsilon) q^{-1-2\varepsilon}}{1+p-s-u}
		\sum_{n=-2}^{\infty} f_n(s,u,p) \varepsilon^n
		.
	\end{equation}
	Explicitly, in terms of the hyperlogarithms $S_w \defas \L_w(s)$, $U_w \defas L_w(u)$ and $P_w \defas L_w(p)$
	\begin{align}
f_{-2} &=
\frac{\pi^{2}}{2}
 - S_{1,0}
 + S_{{\frac {p}{u}},0}
 - S_{p,0}
 + S_{{\frac {p}{u}}}U_{0}
 - U_{p,0}
 - U_{1,0}
 + P_{0} \left(
 		S_{p}
		+ U_{p}
		- S_{{\frac {p}{u}}}
	\right)
 + P_{0,0}
\label{eq:diagbox-massless-2},
\\
f_{-1} &=
2\mzv{3}
 + {\pi }^{2}
\left(
S_{1-u+p}
 - S_{{\frac {p}{u}}}
 - P_{0}
 + U_{1+p}
 + P_{-1}
\right)
 - 2S_{{\frac {p}{u}}}U_{p}P_{0}
 + 2S_{1-u+p}U_{p}P_{0}
\nonumber\\&\quad
 + 2U_{1,0,0}
 - 4P_{0,0,0}
 + 2P_{-1,0,0}
 + 2U_{p,0,0}
 - 2U_{1+p,1,0}
 - 2S_{1-u+p,1,0}
 - 2S_{1-u+p,p,0}
\nonumber\\&\quad
 + 2S_{{\frac {p}{u}},1,0}
 + 2S_{1,0,0}
 + 2S_{p,0,0}
 - 2U_{1+p,p,0}
 - 2S_{{\frac {p}{u}},0,0}
 + 2S_{1-u+p,{\frac {p}{u}},0}
 + 2S_{{\frac {p}{u}},p,0}
\nonumber\\&\quad
 - 2S_{{\frac {p}{u}},{\frac {p}{u}},0}
 + 2S_{1-u+p,p}P_{0}
 - 2S_{1-u+p,{\frac {p}{u}}}P_{0}
 + 2S_{1-u+p,{\frac {p}{u}}}U_{0}
 + 2S_{1-u+p}P_{0,0}
\nonumber\\&\quad
 - 2S_{p}P_{0,0}
 - 2S_{1-u+p}U_{p,0}
 + 2S_{{\frac {p}{u}}}U_{p,0}
 - 2U_{p}P_{0,0}
 + 2U_{1+p}P_{0,0}
 + 2U_{1+p,p}P_{0}
\nonumber\\&\quad
 + 2S_{{\frac {p}{u}}}U_{1,0}
 - 2S_{1-u+p}U_{1,0}
 - 2S_{{\frac {p}{u}},p}P_{0}
 - 2S_{{\frac {p}{u}},{\frac {p}{u}}}U_{0}
 + 2S_{{\frac {p}{u}},{\frac {p}{u}}}P_{0}
 - 2S_{{\frac {p}{u}}}U_{0,0}
\label{eq:diagbox-massless-1}

	\end{align}
	and we supply $f_0, f_1$ and $f_2$ in the attached file.
	Since $\phipol = q \SP_5 \left( p \SP_1 \SP_4 + \SP_2\SP_3 + s \SP_2\SP_4 + u \SP_1 \SP_3 \right)$ factorizes, we can in fact perform three integrations of \eqref{eq:diagbox-massless} in terms of $\Gamma$-functions and therefore obtain the two-dimensional integral representation
	\begin{equation}\begin{split}
	\Phi\left(  \Graph[0.39]{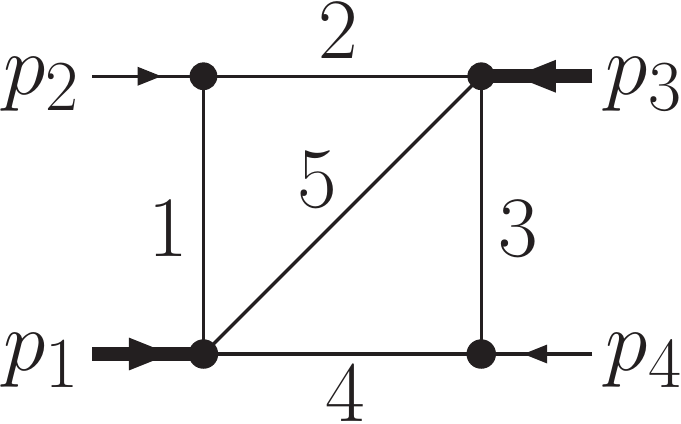} \right)
	&= \frac{
			\Gamma(\sdd)
			\Gamma(\Dim/2-\ep_5)
			\Gamma(\Dim/2-\ep_{12})
			\Gamma(\Dim/2-\ep_{34})
		}{
			\Gamma(\Dim/2-\sdd)
			\Gamma(\ep_1)
			\Gamma(\ep_2)
			\Gamma(\ep_3)
			\Gamma(\ep_4)
			\Gamma(\ep_5)
			\cdot
			q^{\sdd}
		}
		\\& \times
		\int_0^{\infty} x^{\ep_2 - 1} \dd x
		\int_0^{\infty} y^{\ep_4 - 1} \dd y
		\frac{
				(1+x)^{\sdd-\ep_{12}}
				(1+y)^{\sdd - \ep_{34}}
		}{
		\left( u + x + py + s x y \right)^{\sdd}
		}%
	\end{split}%
	\label{eq:double-triangle-2dim-representation}%
\end{equation}
which can be immediately expanded in $\varepsilon$ (linear reducibility is now obvious). We used this second representation to check the results obtained with the (more demanding) five-dimensional integration \eqref{eq:diagbox-massless} and also checked the special case $p=1$ ($p_1^2 = p_3^2$) obtained for $f_{-2}$ and $f_{-1}$ in \cite{GehrmannTancrediWeihs:TwoLoopqqVVplanar} as $\mathcal{I}_{182}^{(B)}$.

\subsubsection{Double-triangle with two legs off-shell and two masses (six scales)}
We now consider the same two-loop graph, but introduce two non-zero masses at the edges 3 and 4. This removes a sub divergence such that the expansion
\begin{equation}%
	\label{eq:diagbox-2mass2off-expansion}%
	\Phi\left( \Graph[0.4]{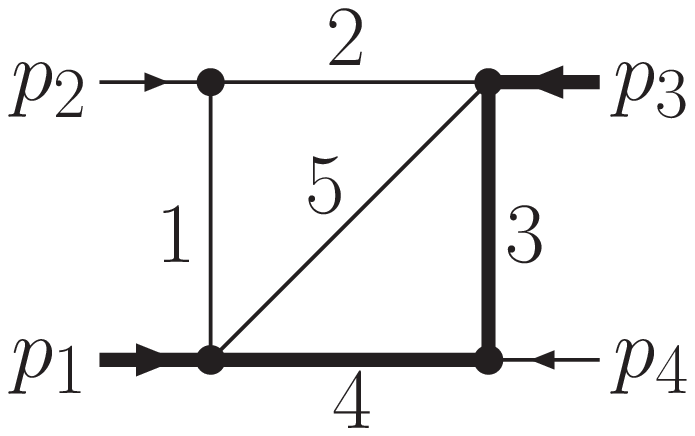} \right)
	= \frac{\Gamma(1+2\varepsilon)}{(p+q-s-u)m_3^{2+4\varepsilon}}
		\sum_{n=-1}^{\infty} f_n(p,s,u,q,m) \cdot \varepsilon^n
\end{equation}
begins with $\varepsilon^{-1}$. In terms of the dimensionless variables
\begin{equation}%
	\label{eq:diagbox-2mass2off-variables}%
	s \defas \frac{(p_1+p_2)^2}{m_3^2}
	\quad
	u \defas \frac{(p_1+p_4)^2}{m_3^2}
	\quad
	p \defas \frac{p_1^2}{m_3^2}
	\quad
	q \defas \frac{p_3^2}{m_3^2}
	\quad
	m \defas \frac{m_4^2}{m_3^2},
\end{equation}
the symbols of all $f_n$ take letters in $\set{s,u,p,q,m} \cup S_{\set{3,4,5,2}}$ for the final polynomials
\begin{align}\label{eq:diagbox-2mass2off-symbol}
	S_{\set{3,4,5,2}}
	=
	\Big\{
	&
		1-m,
		p+m,
		p-s,
		p-u,
		1+q,
		q-s,
		s+m,
		q-u,
		1+u,
		pq-us,
		s-qm,
	\nonumber\\&
		p-um,
		1-p-m+u,
		p-s-u+q,
		p-s+qm-um,
		1-s-m+q,
	\nonumber\\&
		s-pq-qm+us,
		p-us-um+pq,
		pq+p-us-s+qm-um
	\Big\}.
\end{align}
Abbreviating $S_w \defas \Hyper_w (s)$, $U_w \defas \Hyper_w (u)$, $M_w \defas \Hyper_w (m)$, $P_w \defas \Hyper_w (p)$ and $Q_w \defas \Hyper_w (q)$ as before, the leading term becomes
\begin{align}%
	f_{-1} &=
M_0 \left(
	Q_{{0,-1+m}}
	-P_{{u}}U_{{-1+m}}
	+S_{{q}}Q_{{-1+m}}
	-U_{{0,-1+m}}
\right)
-S_{{m \left( -u+q \right) ,qm,-m}}
+P_{{{\frac {us}{q}},s}}S_{{-m}}
\nonumber\\&\quad
-S_{{-m}}P_{{s+um-qm,s}}
+P_{{u}}U_{{-1+m,-1}}
-S_{{q}}Q_{{-1+m,-1}}
+P_{{u,-m+u+1}} \left( U_{{-1}}-M_{{0}} \right)
\nonumber\\&\quad
+U_{{-1}} \left(
	P_{{{\frac {us}{q}},um}}
	-P_{{s+um-qm,um}}
	-P_{{u,um}}
\right)
+ Q_{{-1}} \left(
	S_{{m \left( -u+q \right) ,qm}}
	-S_{{0,qm}}
	+S_{{q,qm}}
\right)
\nonumber\\&\quad
+P_{{s+um-qm,s,-m}}
-P_{{{\frac {us}{q}},um,-m}}
+S_{{0,0,-m}}
-P_{{{\frac {us}{q}},s,-m}}
+P_{{{\frac {us}{q}},0,-m}}-S_{{0,{\frac {m \left( -u+q \right) }{u+1}},-m}}
\nonumber\\&\quad
-S_{{q,qm,-m}}
+S_{{q,-m+q+1,-m}}
-P_{{u,-m+u+1,-m}}
+S_{{0,qm,-m}}
+P_{{s+um-qm,um,-m}}
\nonumber\\&\quad
+P_{{{\frac {us}{q}},-{\frac {-us-s-um+qm}{q+1}},-m}}
+U_{{0,-1+m,-1}}
+S_{{q,0,-m}}
-P_{{u,0,-m}}
-P_{{s,0,-m}}
-Q_{{0,-1+m,-1}}
\nonumber\\&\quad
-P_{{s+um-qm,-{\frac {-us-s-um+qm}{q+1}},-m}}
+S_{{m \left( -u+q \right),{\frac {m \left( -u+q \right) }{u+1}},-m}}
-S_{{m \left( -u+q \right) ,0,-m}}
+P_{{u,um,-m}}
\nonumber\\&\quad
+ \left( U_{{-1}}-Q_{{-1}} \right)\left(
	S_{{m \left( -u+q \right) ,{\frac {m \left( -u+q \right) }{u+1}}}} 
	-S_{{0,{\frac {m \left( -u+q \right) }{u+1}}}}
\right)
+S_{{m \left( -u+q \right) }} \left( -U_{{0,-1}}+Q_{{0,-1}} \right)
\nonumber\\&\quad
+\left(S_{{-m}}+U_{{-1}}-Q_{{-1}} \right) \left[
		P_{{s+um-qm,-{\frac {-us-s-um+qm}{q+1}}}} 
		-P_{{{\frac {us}{q}},-{\frac {-us-s-um+qm}{q+1}}}}
	\right]
+P_{{s}}S_{{0,-m}}
\nonumber\\&\quad
+\left( P_{{s+um-qm}} -P_{{{\frac {us}{q}}}} \right) \left[
	S_{{{\frac {m \left( -u+q \right) }{u+1}}}}
	\left( U_{{-1}}-Q_{{-1}} \right)
	-S_{{{\frac {m \left( -u+q \right) }{u+1}},-m}} 
	+S_{{qm}}Q_{{-1}}
	-S_{{qm,-m}}
\right]
\nonumber\\&\quad
+S_{{q,-m+q+1}} \left( -Q_{{-1}}+M_{{0}} \right) 
+P_{{s+um-qm}} 
\left( 
	Q_{{0,-1}}
	-S_{{0,-m}}
	-U_{{0,-1}}
\right) 
	\label{eq:diagbox-2mass2off-1}%
\end{align}
while $f_0$, $f_1$ and $f_2$ are provided in the ancillary file. Their symbols do not involve the letters $\set{pq+qm-us-s, us+um-pq-p}$ and might suggest that these are indeed superfluous and could be removed from \eqref{eq:diagbox-2mass2off-symbol} by an improved reduction algorithm.

A completely independent check of our analytic results is possible by numeric integration as shown in table \ref{tab:diagbox-2mass2off}. The number in the last row counts the polylogarithms that occur in the basis as used in \eqref{eq:diagbox-2mass2off-1}. Furthermore we checked that the on-shell equal mass limit ($p,q\rightarrow - 1$ and $m \rightarrow 1$) of $f_{-1}$ reproduces the result obtained in \cite{ManteuffelStuderus:MassiveDoubleBoxes}, equations $(3.9)$ and $(3.10a)$.
	\begin{table}
		\centering
		\begin{tabular}{rlllll}
			\toprule
			& $f_{-1}$ & $f_{0}$ & $f_{1}$ & $f_{2}$ & $f_{3}$ \\
			\midrule
			\cite{BognerWeinzierl:ResolutionOfSingularities} & $-0.604907$ &  $+ 0.104586$ & $- 1.03958$ & $+ 0.141365$ & $ - 1.26899$ \\
			exact & $-0.604918601$ & $+0.104721339$ & $-1.039167083$ & $+0.142116843$ &	$-1.267745643$ \\
			terms & $ 66 $ & $ 668 $ & $ 4558 $ & $ 26360 $ & $ 139502 $ \\
			\bottomrule
		\end{tabular}
		\caption{Numeric results for \eqref{eq:diagbox-2mass2off-expansion} at $m_3 = 1$, $m = 2$, $u = 0.75$, $q = 0.5$, $s = 0.2$ and $p = 0.1$ from sector decomposition and first digits of our exact analytic result.}%
		\label{tab:diagbox-2mass2off}%
	\end{table}

\section{Extending linear reducibility}
\label{sec:extending-reducibility}%
	We have seen Feynman graphs that are not linearly reducible but still are known to evaluate to polylogarithms. To us this strongly suggests that in these cases, the Schwinger parameters are not optimal and we expect a different parametrization to exist that allows for parametric integration.

	This idea was already mentioned in \cite{Brown:TwoPoint} and we like to demonstrate how a rational parametrization of quadrics can indeed restore linear reducibility (in a different set of variables). 
	In principle, this technique can always be applied if the obstruction to linear reducibility is given by a single quadratic polynomial.

	\subsection{One-loop example: box with two masses vis-\`{a}-vis}
	Consider the on-shell massive box with $p_1^2=p_2^2=p_3^2=p_4^2 = -m^2 $ for two massive propagators $m_1=m_3=m$ and massless $m_2=m_4=0$ as shown in figure \ref{fig:boxopposing}.
	\begin{figure}
		\begin{equation*}
			\Graph[0.5]{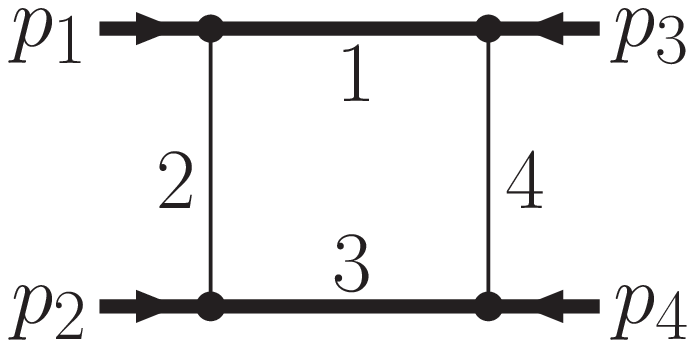}
			\qquad
			\Graph[0.4]{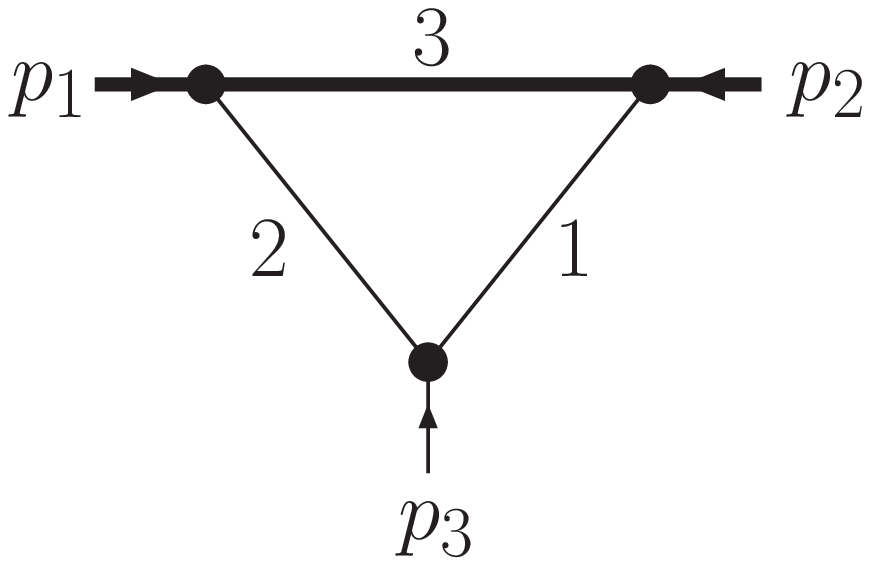}
		\end{equation*}
		\caption{A box with two massive internal lines ($1$ and $3$) that are not adjacent is not linearly reducible. The infrared divergence of the (linearly reducible) triangle graph is studied in example \ref{ex:triangle-divergences}.}%
		\label{fig:boxopposing}%
	\end{figure}
	In contrast to the case of section \ref{sec:boxcorner} where the massive propagators are adjacent, this graph is not linearly reducible: With $s=(p_1+p_2)^2$ and $t=(p_1+p_3)^2$, its graph polynomials are
	\begin{equation}\label{eq:boxopposite-polynomials}
		\psipol
		= \SP_1 + \SP_2 + \SP_3 + \SP_4
		\quad\text{and}\quad
		\phipol
		= s \SP_1 \SP_3 + t\SP_2\SP_4 + m^2 \left( \SP_1 + \SP_3 \right)^2
	\end{equation}
	and $\phipol$ is only linear in $\SP_2$ and $\SP_4$. Reducing (integrating) $\SP_4$ we obtain the set
	\begin{align}
		S_{\set{4}}
		=& \set{\SP_1+\SP_2+\SP_3,
					s \SP_1 \SP_3 + m^2 \left( \SP_1 + \SP_3 \right)^2,
					R}
		\quad\text{where the resultant}%
		\label{eq:boxopposite-reduction-4}\\
		R 
		\defas&
		\resultant[\SP_4]{\psipol}{\phipol}
		=
		s \SP_1 \SP_3 + m^2 \left( \SP_1 + \SP_3 \right)^2 -t\SP_2(\SP_1+\SP_2+\SP_3)
		\label{eq:boxopposite-resultant}%
	\end{align}
	is irreducible and quadratic in all remaining Schwinger parameters, therefore prohibiting any further integration.
	To proceed we change variables according to
	\begin{equation}
		\frac{s}{m^2}
		=\frac{(1-x)^2}{x}
		\quad
		\frac{t}{m^2}
		= \frac{(1-y)^2}{y}
		\quad
		\SP_2 = \widetilde{\SP_2} \left( \SP_1 + x\SP_3 \right)
		\quad
		\SP_4 = \widetilde{\SP_4} \left( x\SP_1 + \SP_3 \right).
		\label{eq:boxopposite-variable-change}%
	\end{equation}
	On one hand we reparametrized the kinematics via $x$ and $y$ to rationalize roots that would otherwise appear in the result (this is analogous to \eqref{eq:triangle-kinematics}), while afterwards
	\begin{equation}
		\phipol
		= \frac{m^2}{x} ( \SP_1 + x \SP_3 ) ( x \SP_1 + \SP_3 )
			+\frac{m^2}{y} (1 -y)^2 \SP_2 \SP_4		
		\label{}
	\end{equation}
	suggests to introduce the variables $\widetilde{\SP_2}$ and $\widetilde{\SP_4}$ of \eqref{eq:boxopposite-variable-change} with the effect that
	\begin{equation}
		\phipol = \frac{m^2}{xy}
							(\SP_1 + x \SP_3 ) (x \SP_1 + \SP_3)
							\left[ y+ x(1-y)^{2} \widetilde{\SP_2} \widetilde{\SP_4} \right]
		\label{eq:boxopposite-phi-new-variables}%
	\end{equation}
	factors linearly in these new parameters. It follows that $\widetilde{R} = \resultant[\widetilde{\SP_4}]{\psipol}{y + x(1-y)^2 \widetilde{\SP_2}\widetilde{\SP_4}}$ is linear in $\SP_1$ and $\SP_3$ allowing for a further integration. Calculating the reduction shows that we can finally also integrate $\widetilde{\SP_2}$ and obtain the final set of polynomials
	\begin{equation}
		S_{\set{\widetilde{\SP_2},\SP_3,\widetilde{\SP_4}}}
		= \set{x+1,x-1,y+1,y-1,xy+1,x+y}
		\label{eq:boxopposite-symbol-alphabet}%
	\end{equation}
	which together with $\set{x,y}$ define the alphabet of the symbol of the resulting function of $x$ and $y$.
	This coincides with the observation made in \cite{HennSmirnov:Bhabha} upon a study of its differential equations.
	With the described change of variables we applied the parametric integration procedure and cross-checked our result successfully with the expansion given as (2.27) in \cite{HennSmirnov:Bhabha}.

	\subsection{Three-loop example: $K_4$}\label{sec:K4-variable-change}
	We return to the complete graph $K_4$ of figure \ref{fig:moreloop-on-shell} with massless on-shell kinematics $p_1^2=\ldots=p_4^2=m_1=\ldots=m_6=0$ already mentioned in section \ref{sec:massless-onshell-4pt}.
	In this case, after integrating say $\SP_2$ we can not proceed further because again the resultant $R \defas \resultant[2]{\psipol}{\phipol} \in S_{\set{2}}$ is irreducible and quadratic in all Schwinger parameters. But its discriminant\footnote{For $R=A \SP_3^2 + B \SP_3 + C$ the discriminant with respect to $\SP_3$ is $\discriminant[\SP_3]{R} = B^2 - 4 AC$.}
	\begin{align}\label{eq:K4-discrim}%
		\discriminant[\SP_3]{R}
		&=
\SP_{1}^{2}
\left( \SP_{1}\SP_{6}+\SP_{6}\SP_{4}+\SP_{4}\SP_{5}+\SP_{6}\SP_{5} \right)^{2}
\\ & \qquad \times
\left[ \SP_{6}^{2} {t}^{2} \SP_{4}^{2}-2 t\SP_{4}\SP_{6} \left( -t\SP_{6}+s\SP_{4}-s\SP_{6} \right) \SP_{5}
	+ \left( s\SP_{6}+s\SP_{4}+t\SP_{6} \right)^{2} \SP_{5}^{2} \right] \nonumber
	\end{align}
	becomes a perfect square if we introduce a new variable $\xi$ and reparametrize
	\begin{equation}\label{eq:K4-substitution}
		\SP_{5}
		= \frac{
				t\SP_{4}\SP_{6} \left( s\SP_{6}+t\SP_{6}+\xi \right)
			}{
				\xi \left( s\SP_{6}+s\SP_{4}+t\SP_{6}+\xi \right) 
			}.
	\end{equation}
	Hence after this transformation, $R$ factorizes linearly in $\SP_3$ and indeed the polynomial reduction shows that we obtain linear reducibility along the sequence $\SP_2,\SP_3,\SP_1,\SP_4,\xi$ ($\SP_6=1$) of integrations. The final set is $\set{s+t}$ and proves that to all orders, $\Phi\left( K_4 \right) s^{3\varepsilon}$ is a harmonic polylogarithm of $\frac{t}{s}$ which was observed before in \cite{HennSmirnov:K4}. We performed the explicit integrations and reproduced the result up to order $\varepsilon^2$ (polylogarithms of weight six) given in (B.1) of \cite{HennSmirnov:K4}.

\section{Divergences in Schwinger parameters}
\label{sec:dimreg}%
The method of parametric integration relies on convergent integral representations of the quantities (functions) to be computed, but many Feynman integrals are divergent.
While ultraviolet divergences can be renormalized on the level of the integrand directly\footnote{Most standard text books on quantum field theory explain the BPHZ method, e.g. \cite{Collins,ItzyksonZuber}.} and then result in a convergent parametric integral representation (see for example \cite{BrownKreimer:AnglesScales}), the cancellation of infrared divergences is more subtle.
In practical calculations it turned out to be most useful to assign values also to infinite integrals in terms of a regularization prescription, therefore separating the two problems of calculation of the integrals and renormalization of their divergences.

In this section we briefly explain why the most widely employed \emph{dimensional regularization}\footnote{A definition of this scheme in terms of convergent \emph{momentum space} integrals is given in \cite{Collins}.} is perfectly adapted to parametric integration and explain a general method to generate convergent integral representations of dimensionally regulated, divergent Feynman integrals.
Note that usually this task is solved by the method of \emph{sector decomposition} \cite{BinothHeinrich:SectorDecomposition,BinothHeinrich:SectorDecomposition2} which is publicly available as \cite{BognerWeinzierl:ResolutionOfSingularities,BorowkaCarterHeinrich:SecDec2,SmirnovTentyukov:FIESTA}. But this approach introduces various changes of variables and decomposes the original integrand into many summands, which would need to be analyzed separately for linear reducibility. Furthermore finite integrals are obtained by subtraction of counterterms, and we argued in \cite{Panzer:MasslessPropagators} that it is in general unclear how this effects the polynomial reduction.

Therefore we prefer an expression in the original Schwinger parameters, involving only the polynomials $\psipol$ and $\phipol$ in denominators. The criteria of convergence here are well-known and we merely employ integration by parts, so we do certainly not assume our result to be new but rather a reformulation.
Nevertheless it is crucial for our study of linear reducibility.

We investigate a projective parametric integral $\left[\prod_e \int_0^\infty \dd \SP_e\right] F \delta(H)$ which we denote by $\int F \Omega$ (so $\Omega$ is the canonical volume form on $\RP^{\abs{E}-1}$). The parametric integrand is a rational function of the $\SP_e$ and contains exponents that can depend on $\Dim$ and the propagator powers $\ep_e$. Given disjoint sets $J, K\subset E$ of edges, 
\begin{equation}%
	\label{eq:divergence-rescaling}
	F_J^{K} 
	\defas
	\restrict{F}{
		\SP_e \mapsto \lambda \SP_e\ \forall e\in J
		\ \text{and}\ 
		\SP_e \mapsto \lambda^{-1} \SP_e\ \forall e\in K
	}
	\in \bigo{\lambda^{\ldeg_J^{K}(F)}}
	\quad\text{at}\quad
	\lambda \rightarrow 0
\end{equation}
defines a degree $\ldeg_J^K(F)$ of vanishing\footnote{So $\ldeg_J^K(F)$ is the unique number $s$ such that $\lim_{\lambda\rightarrow 0} \left[ \lambda^{-s} \cdot F_J^K \right]$ exists and is non-zero.} of $F$ when all $\SP_e$ with $e \in J$ tend to zero and $\SP_e\rightarrow \infty$ for $e \in K$.
Denoting the associated degree of divergence by
\begin{equation}
	\SDD{J}{K}(F)
	\defas 
	\abs{J}
	- \abs{K}
	+\ldeg_J^K(F),
\end{equation}
we recall the well-known finiteness result 
\begin{lemma}
	Let all non-zero coefficients of $\phipol$ be positive\footnote{Otherwise divergences can occur inside the integration domain. Some examples of this more complicated situation are explained in \cite{JantzenSmirnov:PotentialAndGlauberRegions}.} and $\SDD{J}{K} > 0$ for all disjoint $J,K \subset E$ with $\emptyset \neq J \cupdot K \subsetneq E$. Then $\int F \Omega$ is absolutely convergent.
	\label{}
\end{lemma}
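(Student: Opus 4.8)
The plan is to reduce the convergence of the projective integral $\int F\,\Omega$ to a local analysis near the boundary strata of the compactified integration domain, using the polyhedral structure of $\RP^{\abs{E}-1}$ (or, equivalently, of the positive cofactor cone). First I would pass to an affine chart, say $\SP_{e_0}=1$ for a fixed edge $e_0$, so that $\int F\,\Omega = \int_{(0,\infty)^{E\setminus\{e_0\}}} F\,\prod_{e\neq e_0}\dd\SP_e$. Absolute convergence of this integral can fail only (i) as some subset of the $\SP_e$ tends to $0$, (ii) as some subset tends to $\infty$, or (iii) at interior zeros of the denominator $\phipol$ — and (iii) is excluded by the positivity hypothesis on the coefficients of $\phipol$, since then $\phipol > 0$ throughout $(0,\infty)^E$ (every monomial appearing has a positive coefficient, and at least one such monomial is strictly positive). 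So the whole problem is the behaviour at the boundary of the simplex.

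Next I would make the boundary analysis uniform by a sector-type decomposition of the integration region: cover $(0,\infty)^{E\setminus\{e_0\}}$ by finitely many regions on each of which a definite linear order $\SP_{e_{(1)}} \le \SP_{e_{(2)}} \le \cdots$ of the parameters holds (together with a choice of which are $\ge 1$ and which are $\le 1$ relative to the fixed $\SP_{e_0}=1$). On such a region one introduces the standard substitution $\SP_{e_{(k)}} = t_k \SP_{e_{(k+1)}}$ (or the analogous $\ge 1$ version), so that the potentially dangerous limits become $t_k\to 0$ (or $t_k\to\infty$) independently. Grouping the variables that go to zero into a set $J$ and those that go to infinity into a set $K$ — these are exactly the proper subsets $\emptyset\neq J\cupdot K\subsetneq E$ of the lemma — the scaling $\SP_e\mapsto\lambda\SP_e$ for $e\in J$ and $\SP_e\mapsto\lambda^{-1}\SP_e$ for $e\in K$ is precisely \eqref{eq:divergence-rescaling}, and the homogeneity properties of $\psipol$ and $\phipol$ guarantee that $F_J^K$ has a clean leading power $\lambda^{\ldeg_J^K(F)}$. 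Including the Jacobian of the simplicial substitution, which contributes the $\abs{J}$ (from the $\dd\SP_e/\SP_e$ counting of the shrinking variables) and the $-\abs{K}$ (from the growing ones), the local integral near that stratum converges iff $\SDD{J}{K}(F) = \abs{J}-\abs{K}+\ldeg_J^K(F) > 0$. Summing the finitely many regions gives absolute convergence of the whole integral under the stated hypotheses.

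The step I expect to be the main obstacle is making the boundary analysis genuinely \emph{uniform} — i.e. checking that the leading power $\ldeg_J^K(F)$ really controls the integrand on a full neighbourhood of the stratum, not merely along the one-parameter ray $\lambda\to 0$. Because $F$ is a product of powers of $\psipol$, $\phipol$ and monomials, and $\psipol,\phipol$ are polynomials whose monomial supports are well understood (spanning trees and spanning 2-forests), one can bound $F_J^K$ above and below by $\lambda^{\ldeg_J^K(F)}$ times a function that, after the simplicial substitution, is bounded on the closed region — here the positivity of the coefficients of $\phipol$ is used a second time to keep the denominator away from zero uniformly. The remaining estimates (that a monomial integral $\int_0^1 t^{a-1}\,\dd t$ converges iff $a>0$, and the analogous statement at $\infty$) are routine, so the real content is organizing the finitely many sectors and the bookkeeping of powers; I would cite the standard references on sector decomposition and parametric convergence (e.g. \cite{BinothHeinrich:SectorDecomposition}) rather than reproving this classical fact in detail, as the lemma is explicitly flagged as ``well-known''.
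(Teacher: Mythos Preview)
Your proposal is correct and follows essentially the same three-step skeleton as the paper: use positivity of the coefficients of $\phipol$ to exclude interior singularities, decompose the integration domain into finitely many regions indexed by which variables are small and which are large, and then reduce convergence on each region to the power-counting inequality $\SDD{J}{K}>0$.

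The route differs in how the decomposition and the final estimate are organised. The paper uses a purely \emph{cubical} decomposition: split each $\SP_e$ into $(0,1]\cup[1,\infty)$, invert on $(0,1]$, and thereby write $\int F\,\Omega$ as a sum over $I\subseteq E$ of integrals over $[1,\infty)^{\abs{E}}$; the condition $\SDD{J}{K}>0$ for all $J\subset I$, $K\subset E\setminus I$ is then identified verbatim with the hypothesis of Weinberg's convergence theorem \cite{Weinberg:HighEnergy}, which is cited rather than reproved. Your Hepp-sector style decomposition (impose a linear order on the $\SP_e$ and pass to ratios $t_k$) is more refined and forces you to confront explicitly the uniformity issue you flag as the ``main obstacle''---that $\ldeg_J^K(F)$ controls $F$ on a full neighbourhood, not just along a ray. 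The paper sidesteps this entirely by outsourcing it to Weinberg. What your approach buys is self-containedness and a direct link to the sector-decomposition literature; what the paper's buys is brevity and a one-line reduction to a classical theorem. Either is acceptable for a result explicitly described as well-known.
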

\begin{proof}
	The positivity condition implies that $F$ is continuous on the interior $(0,\infty)^{\abs{E}}$, hence $F$ can only have singularities on the boundary $\bigcup_e \left( B_e^0 \cupdot B_e^{\infty} \right)$ where $B_e^{\bullet} \defas \set{\SP_e=\bullet}$. Cover the domains $\R_+ = (0,1] \cup [1,\infty)$ and transform $\SP_e\rightarrow \SP_e^{-1}$ on $(0,1]$ such that
	\begin{equation*}
		\int F \Omega
		= \sum_{I \subseteq E}
			\Big[ 
				\prod_{e \notin I}
				\int_1^\infty \dd \SP_e
			\Big]
			\Big[ 
				\prod_{e \in I}
				\int_1^\infty \frac{\dd \widetilde{\SP}_e}{\widetilde{\SP}_e^2}
			\Big]
			\restrict{F}{\SP_e = \widetilde{\SP_e}^{-1}\ \forall e \in I}.
	\end{equation*}
	Now $\SDD{J}{K} > 0$ for all $K \subset E \setminus I$ and $J \subset I$ translates into the convergence condition\footnote{As the projective integral is only $(E-1)$ dimensional, we can drop constraints on $\SDD{J}{K}$ when $J \cup K = E$.} of Weinberg's Theorem \cite{Weinberg:HighEnergy}.
\end{proof}
This result is well-known and a graph-theoretic interpretation of $\SDD{J}{K}$ is possible, see for example the appendix E.1 of \cite{Smirnov:EvaluatingFeynmanIntegrals} and references therein.

\begin{example}\label{ex:triangle-divergences}%
	The triangle graph $G$ of figure \ref{fig:boxopposing} with one internal mass $m=m_3$ and light-like $p_3^2 = 0$ is linearly reducible with the graph polynomials
	\begin{equation}%
		\label{eq:divergent-vertex-polynomials}%
		\psipol
		= \SP_1 + \SP_2 + \SP_3
		\quad\text{and}\quad
		\phipol
		= \SP_3 \left( m^2 \psipol + p_1^2 \SP_2 + p_2^2 \SP_1 \right).
	\end{equation}
	For $\Dim = 4-2\varepsilon$ and $\ep_1=\ep_2=\ep_3=1$, the parametric integral representation
	\begin{align*}
		\Phi(G)
		&= \int \frac{\dd[\Dim] k}{\pi^{\Dim/2}}
			 \frac{1}{(k^2 + m^2)(k+p_2)^{2}(k-p_1)^{2}}
		= \Gamma(1+\varepsilon)
			 \int \frac{\Omega}{\psipol^{1-2\varepsilon} \phipol^{1+\varepsilon}} 
	\end{align*}
	diverges at $\varepsilon = 0$: $\SDD{\set{3}}{\emptyset} = \SDD{\emptyset}{\set{1,2}} = -\varepsilon$ represent a logarithmic divergence. It is apparent from the factor $\SP_3^{-1-\varepsilon}$ in the integrand.
\end{example}

\subsection{Analytic regularization}
For a choice of disjoint $J, K \subset E$ we can regard $\lambda$ as a new integration variable by inserting the factor $1 = \int_0^{\infty} \dd\lambda\ \delta\left(\lambda - \SP_J-\SP_K^{-1} \right)$ where $\SP_J \defas \sum_{e\in J} \SP_e$. After rescaling $\SP_e$ by $\lambda$ ($\lambda^{-1}$) for $e \in J$ ($e \in K$), we see
	\begin{equation*}
		\int F \Omega
		= \int \Omega\ \delta\left( 1-\SP_J - \SP_K^{-1} \right)
		\int_0^{\infty} \frac{\dd \lambda}{\lambda} \lambda^{\SDD{J}{K}} \cdot \widetilde{F_J^K}(\lambda)
	\end{equation*}
	where $\widetilde{F_J^K} \defas F_J^K \cdot \lambda^{-\ldeg_J^K}$ is finite at $\lambda\rightarrow 0$. The partial integration
	\begin{equation}\label{eq:anareg-partial}
		\int_0^{\infty} \frac{\dd \lambda}{\lambda} \lambda^{\SDD{J}{K}} \cdot \widetilde{F_J^K}(\lambda)
		= \restrict{\frac{\lambda^{\SDD{J}{K}}}{\SDD{J}{K}} \widetilde{F_J^K}(\lambda)}{\lambda=0}^{\infty} 
		- \frac{1}{\SDD{J}{K}} \int_0^{\infty} \dd \lambda \cdot \lambda^{\SDD{J}{K}} \frac{\partial}{\partial \lambda} \widetilde{F_J^K}(\lambda)
	\end{equation}
	has vanishing boundary contribution when $\SDD{J}{K}>0$ and $F_J^K(\lambda)$ falls off at $\lambda\rightarrow \infty$ faster than $\lambda^{\abs{J}-\abs{K}}$, so in particular whenever $\int F \Omega$ is convergent.

	The analytically regularized functions associated to both integrals in \eqref{eq:anareg-partial} are therefore equal, because they are meromorphic in the analytic regulators $\set{\Dim} \cup \setexp{\ep_e}{e\in E}$ and coincide in the domain of absolute convergence of $\int F \Omega$ (which is non-empty as proven already in \cite{Speer:SingularityStructureGenericFeynmanAmplitudes}).

	Changing back to the original variables, we conclude that $\int \Omega F = \int \Omega\ \anapartial{J}{K} \left( F \right)$ where
	\begin{equation}\label{eq:anapartial}
			\anapartial{J}{K}
			\defas
			1-\frac{1}{\SDD{J}{K}} \left[
				\sum_{e\in J} \partial_e \SP_e
				-\sum_{e\in K} \partial_e \SP_e
			\right]
			= \frac{1}{\SDD{J}{K}} \left[
					\ldeg_J^K
					-\sum_{e\in J}\SP_e \partial_e
					+\sum_{e\in K}\SP_e \partial_e
			\right]
	\end{equation}
	denotes a differential operator with $\partial_e \defas \frac{\partial}{\partial \SP_e}$.
	\begin{example}[Triangle graph of figure \ref{fig:boxopposing}]
		With respect to $J=\set{3}$ and $K=\emptyset$ we have $\widetilde{F_J^K} = \psipol^{2\varepsilon-1} \cdot \left[ m^2 \psipol + p_2^2\SP_1 + p_1^2\SP_2 \right]^{-1-\varepsilon}$ with $\SDD{J}{K} = -\varepsilon$ and deduce
		\begin{equation}\label{eq:triangle-divergent-partial}
			\int \frac{\Omega}{\psipol^{1-2\varepsilon} \phipol^{1+\varepsilon}}
			= 
				\frac{1}{\varepsilon} \cdot \int \frac{\Omega}{\SP_3^{\varepsilon}} \frac{\partial}{\partial \SP_3} \widetilde{F}
			= \frac{1}{\varepsilon} \cdot 
				\int \frac{\Omega\ \SP_3 }{\psipol^{1-2\varepsilon} \phipol^{1+\varepsilon}}
				\left[ \frac{2\varepsilon-1}{\psipol}
				- \frac{(1+\varepsilon)\SP_3 m^2}{\phipol}
				\right]
		\end{equation}
		as an identity between analytically regularized integrals. In their joint domain $\varepsilon<0$ of convergence, the boundary term $\restrict{\frac{\SP_3^{-\varepsilon} \widetilde{F}}{-\varepsilon}}{\SP_3=0}^{\infty}$ is well-defined and vanishes. Note that the integral on the right-hand-side of \eqref{eq:triangle-divergent-partial} has an increased regime $\varepsilon<1$ of convergence.
	\end{example}
	We can summarize our results in the form of
	\begin{lemma}\label{lemma:anapartial}
		For any disjoint subsets $J, K \subset E$ with $\emptyset \neq J \cup K \subsetneq E$, the new parametric integrand $\widetilde{F} \defas \anapartial{J}{K}\left( F \right)$ fulfils
		\begin{enumerate}
			\item $\int F \Omega$ = $\int \widetilde{F} \Omega$ as analytically regularized integrals,
			\item $\SDD{J'}{K'} \left( \widetilde{F} \right) \geq \SDD{J'}{K'} \left( F \right)$ for any disjoint $J', K' \subset E$ and
			\item $\SDD{J}{K} \left( \widetilde{F} \right) \geq 1 + \SDD{J}{K}\left( F \right)$ increases at least by one.
		\end{enumerate}
	\end{lemma}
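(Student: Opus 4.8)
The plan is to dispatch the three assertions in turn, the first being essentially already established. For 1., I would invoke the computation in the paragraphs preceding the lemma: inserting $1 = \int_0^\infty \dd\lambda\,\delta\bigl(\lambda - \SP_J - \SP_K^{-1}\bigr)$, rescaling $\SP_e$ by $\lambda^{\pm 1}$ for $e \in J$ (respectively $e\in K$) and peeling off the one-dimensional $\lambda$-integral brings $\int F\,\Omega$ into the form of \eqref{eq:anareg-partial}; integrating by parts in $\lambda$ produces the integrand $\anapartial{J}{K}(F)$ together with a boundary term that vanishes throughout the (non-empty) domain of absolute convergence of $\int F\,\Omega$. As both sides are meromorphic in the regulators $\set{\Dim}\cup\setexp{\ep_e}{e\in E}$ and agree on that domain, they agree as analytically regulated integrals. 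I would also record that $\anapartial{J}{K}$ visibly maps an integrand of the expected shape (a power of $\psipol$ times a power of $\phipol$ times a rational function having only $\psipol$, $\phipol$ and monomials in its denominator) to one of the same shape; this is what later justifies iterating the operator.

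For 2.\ and 3.\ I would start from the second form of \eqref{eq:anapartial}, $\anapartial{J}{K} = \tfrac{1}{\SDD{J}{K}}\bigl(\ldeg_J^K(F) - \gradAut\bigr)$ with the weighted Euler operator $\gradAut \defas \sum_{e\in J}\SP_e\partial_e - \sum_{e\in K}\SP_e\partial_e$, which is well-defined because the analytic regulators render $\SDD{J}{K}$ a non-zero meromorphic function. The two facts I want to use are: (i) $\gradAut$ generates the family of rescalings defining $F_J^K$, i.e.\ $\lambda\partial_\lambda\bigl(F_J^K\bigr) = \bigl(\gradAut F\bigr)_J^K$; and (ii) for \emph{any} disjoint $J',K'\subset E$ and any edge $e$, the one-coordinate scaling generator $\SP_e\partial_e$ commutes with the $(J',K')$-rescaling, which by the chain rule is the identity $\bigl(\SP_e\partial_e F\bigr)_{J'}^{K'} = \SP_e\partial_{\SP_e}\bigl(F_{J'}^{K'}\bigr)$, valid whether $e$ lies in $J'$, in $K'$, or in neither.

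Claim 2 then follows by expanding $F_{J'}^{K'} = \sum_{k\ge 0}\lambda^{\ldeg_{J'}^{K'}(F)+k}\,\phi_k(\SP)$ with $\phi_0 \neq 0$ (this Laurent expansion has integer steps since, after rescaling, $F$ is $\lambda^{\ast}$ times a rational function of $\lambda$) and observing that $\SP_e\partial_{\SP_e}$, not touching $\lambda$, acts coefficient-by-coefficient and can therefore only annihilate or raise the leading term: $\ldeg_{J'}^{K'}(\SP_e\partial_e F) \ge \ldeg_{J'}^{K'}(F)$ for every $e$. Since $\widetilde F$ is a linear combination of $F$ and of the functions $\SP_e\partial_e F$ with $e\in J\cup K$, and the order of vanishing of a sum is at least the minimum of those of the summands, we get $\ldeg_{J'}^{K'}(\widetilde F) \ge \ldeg_{J'}^{K'}(F)$, and adding $\abs{J'}-\abs{K'}$ yields $\SDD{J'}{K'}(\widetilde F)\ge\SDD{J'}{K'}(F)$. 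For 3.\ I would set $J'=J$, $K'=K$, put $d\defas\ldeg_J^K(F)$ and write $F_J^K = \lambda^{d}\,\widetilde{F_J^K}(\lambda)$ with $\widetilde{F_J^K}(0)\neq 0$; then by (i) the $(J,K)$-rescaling of $\widetilde F$ is
\begin{equation*}
	\bigl(\widetilde F\bigr)_J^K
	= \tfrac{1}{\SDD{J}{K}}\bigl(d - \lambda\partial_\lambda\bigr)F_J^K
	= -\tfrac{1}{\SDD{J}{K}}\,\lambda^{d+1}\,\partial_\lambda\widetilde{F_J^K}(\lambda),
\end{equation*}
which vanishes to order at least $d+1$, so $\ldeg_J^K(\widetilde F)\ge\ldeg_J^K(F)+1$ and hence $\SDD{J}{K}(\widetilde F)\ge 1+\SDD{J}{K}(F)$.

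The step I expect to require genuine care is claim 2: since $\anapartial{J}{K}$ is tuned to improve the integrand's behaviour specifically in the $(J,K)$-channel, one must make sure it does not covertly worsen convergence in some unrelated channel $(J',K')$. What rules this out is precisely observation (ii) --- the operators $\SP_e\partial_{\SP_e}$ never differentiate the auxiliary scaling parameter, so they cannot manufacture a new, more singular leading coefficient in any rescaling; everything else is bookkeeping with degrees of homogeneity.
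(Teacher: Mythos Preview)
Your proof is correct. For claim 1 you follow the paper exactly. For claims 2 and 3 your route is genuinely different from the paper's, and somewhat more elementary.

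The paper argues 2 and 3 by introducing \emph{all} scaling parameters $\lambda_{J'}^{K'}$ simultaneously and factoring $F = \prod_{J',K'} (\lambda_{J'}^{K'})^{\ldeg_{J'}^{K'}} \cdot R$ with $R = \prod_{p\in\mathcal{P}} p^{\ep_p}$ a product of irreducible polynomials. In these variables $\anapartial{J}{K}$ acts on $R$ as $-\tfrac{1}{\SDD{J}{K}}\lambda_J^K \partial_{\lambda_J^K}$, producing $R\sum_p \tfrac{\ep_p}{p}\,\partial_{\lambda_J^K}(p)$; the only new denominators are the polynomials $p$ themselves, which by construction carry no power of any $\lambda_{J'}^{K'}$ and hence cannot create new poles at $\lambda_{J'}^{K'}\to 0$. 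This argument leans on the specific multiplicative structure of parametric Feynman integrands.

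Your argument instead isolates the single algebraic fact that each $\SP_e\partial_e$ commutes with every rescaling, so it acts term-by-term on the Laurent expansion in $\lambda$ and can never lower the order of vanishing. This is cleaner for establishing the inequalities and does not require $F$ to be a product of polynomial powers. What the paper's factorization buys, and what you only sketch in passing, is the structural statement that $\widetilde F$ again has only $\psipol$ and $\phipol$ (and monomials) in its denominator --- the point needed afterwards to conclude that the partial integrations do not affect linear reducibility.
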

	Properties 2 and 3 are probably most evident by introducing simultaneously variables $\lambda_J^K$ for all disjoint $J,K$ and rescaling $\SP_e \mapsto \SP_e \prod_{e \in J, J \cap K = \emptyset} \lambda_J^K \prod_{e \in K, J \cap K = \emptyset} \left( \lambda_J^K \right)^{-1}$ such that
	\begin{equation}\label{eq:anareg-rescalings}
		F
		= \prod_{J \cap K = \emptyset} \left( \lambda_J^K \right)^{\ldeg_J^K}
			\cdot
			R
		\quad\text{where}\quad
		R = \prod_{p \in \mathcal{P}} p^{\ep_p}
	\end{equation}
	factors into irreducible polynomials\footnote{For scalar Feynman integrals we have only the irreducible factors of $\psipol$ and $\phipol$ in $\mathcal{P}$.} $p \in \mathcal{P}$ (with exponents $\ep_p$) in Schwinger parameters $\SP_e$, the scaling variables $\lambda_J^K$ and kinematic invariants. Now the action of $\anapartial{J}{K}$ on $F$ equals replacing $R$ by $\lambda_J^K\partial_{\lambda_J^K} R$ (and dividing by $-\SDD{J}{K}$), but
	\begin{equation*}
		\partial_{\lambda_J^K} R
		= R \sum_{p \in \mathcal{P}} \frac{\ep_p}{p} \cdot \partial_{\lambda_J^K} \big( p \big)
	\end{equation*}
	can only factor in the numerator (then possibly contributing additional powers of some $\lambda_{J'}^{K'}$), while the extra denominators $p$ do by construction not introduce new divergences (which would correspond to poles at $\lambda_{J'}^{K'} \rightarrow 0$).
	\begin{corollary}
		Finitely many applications of operators $\anapartial{J}{K}$ on $F$ suffice to generate a representation of $\int F\Omega = \int \widetilde{F} \Omega$ with a convergent parametric integrand $\widetilde{F}$ (all $\SDD{J}{K}\left(\widetilde{F} \right)$ are positive).
	\end{corollary}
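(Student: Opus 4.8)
The plan is to iterate Lemma \ref{lemma:anapartial}, applying one operator at a time, until every degree of divergence has been raised into the positive range; the finiteness lemma at the beginning of this section (whose positivity hypothesis on the coefficients of $\phipol$ we keep throughout) then certifies convergence. The two structural facts that make this terminate are that there are only finitely many admissible pairs $(J,K)$, since $E$ is finite, and that by parts 2 and 3 of Lemma \ref{lemma:anapartial} a single application of $\anapartial{J}{K}$ strictly raises $\SDD{J}{K}$ while never lowering any other $\SDD{J'}{K'}$.

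Concretely I would enumerate the finitely many ordered pairs $(J_1,K_1),\dots,(J_M,K_M)$ of disjoint subsets of $E$ with $\emptyset \neq J_i\cup K_i\subsetneq E$ and process them in turn. For the $i$-th pair: as long as the current integrand has $\SDD{J_i}{K_i}\leq 0$, replace it by its image under $\anapartial{J_i}{K_i}$. Each such replacement increases $\SDD{J_i}{K_i}$ by at least one by part 3, so starting from a fixed finite value the inner loop stops after finitely many steps with $\SDD{J_i}{K_i}>0$. Monotonicity (part 2) now guarantees that all later replacements, used to treat $(J_j,K_j)$ for $j>i$, can only increase $\SDD{J_i}{K_i}$ further, so it never drops back to a non-positive value. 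After the $M$-th pair has been handled, the resulting integrand $\widetilde F$ therefore has $\SDD{J}{K}(\widetilde F)>0$ for every admissible $(J,K)$; the finiteness lemma yields absolute convergence of $\int\widetilde F\,\Omega$, and part 1 of Lemma \ref{lemma:anapartial} applied along the whole chain identifies it with $\int F\,\Omega$ as analytically regularized integrals.

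The one point that needs care --- and the only place I expect a genuine subtlety rather than bookkeeping --- is the factor $1/\SDD{J}{K}$ inside $\anapartial{J}{K}$, which makes the operator literally defined only when $\SDD{J}{K}\neq 0$. In dimensional regularization the degrees $\SDD{J}{K}$ are affine-linear in the regulators $\set{\Dim}\cup\setexp{\ep_e}{e\in E}$ and generically non-zero, so the whole sequence of identities should be read as an identity of functions meromorphic in those regulators, with ``positive'' referring to the value at the expansion point; a pair whose degree vanishes at that point but not identically, such as $\SDD{\set{3}}{\emptyset}=-\varepsilon$ in example \ref{ex:triangle-divergences}, is repaired by a single application of $\anapartial{J}{K}$ (there turning $-\varepsilon$ into $1-\varepsilon$). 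I would also check that $\SDD{J}{K}$ cannot vanish identically in the regulators for an admissible $(J,K)$ --- that would signal an exactly scaleless subintegration, excluded in the present setting --- so the operators remain well defined all along the iteration.
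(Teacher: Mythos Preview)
Your argument is correct and matches the paper's intended reasoning: the corollary is stated without an explicit proof and is meant to follow directly from parts 1--3 of Lemma~\ref{lemma:anapartial} together with the finiteness of the set of admissible pairs $(J,K)$, exactly as you spell out. Your additional discussion of the well-definedness of $1/\SDD{J}{K}$ in the regulators is a useful clarification that the paper only touches on in a footnote.
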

	For example, when $\int F \Omega$ is regulated by $\Dim$ alone\footnote{This is not always possible; in general divergences require the exponents $\ep_e$ as analytic regulators.}, we identify divergences $\restrict{\SDD{J}{K}}{\varepsilon=0} \geq 0$ by power-counting and apply $\anapartial{J}{K}$ sufficiently often until $\restrict{\SDD{J}{K}}{\varepsilon=0} > 0$.

	Crucially, the representation $\widetilde{F}$ obtained this way can only contain $\psipol$ and $\phipol$ with non-integer or negative exponents. Therefore, any term in its $\varepsilon$-expansion lies in
	\begin{equation}
		\Q\left[
				\kinematics
				\cup
				\setexp{\SP_e}{e\in E}
				\cup \set{\psipol^{-1}, \phipol^{-1}}
				\cup \set{\ln \psipol, \ln \phipol}
			\right]
		\label{}
	\end{equation}
	and can be integrated using hyperlogarithms precisely when the graph under consideration is linearly reducible.
	Put differently, the partial integrations $\anapartial{J}{K}$ do not affect the polynomial reduction.

	We applied this technique for all explicit computations of subdivergent integrals in this article, namely example \ref{example:4pt-4loop}, $\Delta_{3,14}$ from example \ref{example:3loop-vertices} and all results of section \ref{sec:masses-and-many-scales}.

\section{Summary and outlook}%
\label{sec:summary}

We extended the method of parametric integration to divergent, analytically regularized Feynman integrals $G$ for linearly reducible $G$ that may depend on multiple kinematic invariants. Several non-trivial examples were shown and explicit new results given in terms of polylogarithms. Let us stress that such a graph $G$ can in principle be computed
	\begin{itemize}
		\item	to arbitrary order in $\varepsilon$, expanded near any even dimension $\restrict{\Dim}{\varepsilon=0} \in 2\N$,
		\item including any tensor structure (loop momenta in the numerator); in particular the form-factor-decomposition is automatic in the parametric representation and we do not need a reduction to master integrals in the integration-by-parts (IBP) sense,
		\item with arbitrary powers $\ep_e = n_e + \varepsilon \epe_e$ of propagators for $n_e \in \Z$.
	\end{itemize}
Practically however, tensor structure and sub divergences (via the integrand preparation of section \ref{sec:dimreg}) can result in very complicated initial integrands, involving high powers of $\psipol$ and/or $\phipol$ in the denominator and a huge polynomial in the numerator. Such cases require a simplification before the computation and it seems possible to apply the idea of integration by parts directly to these parametric integrands which we will try to return to in the future.

In this context also note that the procedure suggested by lemma \ref{lemma:anapartial} seems to generate unnecessarily complicated integrands in the case of overlapping divergences.
\begin{example}
	In the case of the two-loop graph of section \ref{sec:double-triangle-massless}, the original parametric integrand has many divergences. E.g. we find $\SDD{\emptyset}{\set{\SP_1, \SP_2, \SP_3, \SP_4}} = -2 \varepsilon$, $\SDD{\set{\SP_3,\SP_4,\SP_5}}{\emptyset} = -\varepsilon$ and $\SDD{\emptyset}{\SP_3,\SP_4} = -\varepsilon$, and integrating these partially yields the convergent integrand
	\begin{equation*}
		\frac{(3\varepsilon-2)(3\varepsilon-1)}{2 \varepsilon^3}
		\frac{
			\SP_5^2
			(\SP_3+\SP_4)
			(\SP_1+\SP_2)
		}{
			\phipol^{1+2 \varepsilon}
			\psipol^{4-3 \varepsilon}
		}
		\big[
			2(\SP_1+\SP_2)(\SP_3+\SP_4)
			+(3\varepsilon-1) \SP_5 (\SP_1 + \SP_2 + \SP_3 +\SP_4)
		\big]
	\end{equation*}
	which we used in the computations of \eqref{eq:diagbox-massless-2} and \eqref{eq:diagbox-massless-1}. But note that \eqref{eq:diagbox-massless} has only a pole in $\varepsilon$ of second order and indeed we find that the term $\propto \varepsilon^{-3}$ integrates to zero. In contrast, the much better adapted representation \eqref{eq:double-triangle-2dim-representation} has a manifest second order pole in $\varepsilon$ and no divergences in the parametric integral.
	It is considerably more efficient to evaluate.
\end{example}
	Therefore one might try to find more economic ways of generating analytically regularized, convergent integrands (with only $\phipol$ and $\psipol$ raised to non-integer or negative powers). Note however that an integration-by-parts reduction of the parametric integrands as suggested above could also partially solve this problem.

	Apart from these technicalities, conceptually we face the important open question to combinatorially characterize the linearly reducible graphs in the presence of non-trivial dependence on kinematic invariants. 
	As we recalled in section \ref{sec:single-scale}, only in the massless propagator case such a result is available in form of theorem \ref{theorem:vw-3}. 
	It exploits that $\phipol_G = \psipol_{G_{\bullet}}$ where $G_{\bullet}$ denotes $G$ after identifying the two vertices attached to the external momenta and follows from the plethora of identities and factorization formulas among these $\psi$- and the related \emph{Dodgson}-polynomials \cite{Brown:PeriodsFeynmanIntegrals,BrownYeats:SpanningForestPolynomials}.
	But still this covers only a subset of the linearly reducible topologies and we had to explicitly examine the graph polynomials (using a polynomial reduction algorithm) to arrive at theorem \ref{theorem:4loop-massless-propagators}.

	Hence regarding non-trivial kinematics, it will be inevitable to incorporate the new polynomial $\phipol$ and to find analogous factorization properties in order to arrive at combinatorial criteria sufficient for linear reducibility.
	We hope that the plentiful positive examples in this article motivate progress in this direction.
	
	Even further, the examples of section \ref{sec:extending-reducibility} suggest that in some cases we must abandon the original Schwinger parameters and should look for other representations. A systematic study of suitable changes of variables and in particular criteria exhibiting when these can regain linear reducibility is certainly a demanding but worthwhile project.

{\appendix}
\section{Polynomial reduction and linear reducibility}%
\label{sec:linear-reducibility}
	In a parametric representation, we are naturally working with polylogarithmic functions of several variables: To begin with, from expanding \eqref{eq:feynman-rules-parametric} in say $\varepsilon$ we obtain integrands 
	\begin{equation*}
		F 
		\in
		\Q\big[
			\psipol^{-1}, \phipol^{-1}, 
			\log \psipol, \log \phipol, 
			\setexp{\SP_e, \SP_e^{-1}, \log \SP_e}{e \in E},
			\kinematics
		\big].
	\end{equation*}
	Hence these are iterated integrals in the Schwinger- and kinematic variables and we call
	\begin{equation}\label{eq:restricted-symbol-functions}
		\BarObjects \left( S \right) 
		\defas 
		\setexp{\sum_i \frac{f_i}{g_i} \int \dd\log(h_{i,1}) \cdots \int \dd\log(h_{i,j})}{g_i,h_{i,j} \in S \cup \setexp{\SP_e}{e\in E} \cup \kinematics}
	\end{equation}
	(with rational $f_i$) the functions \emph{with symbol in}\footnote{We always allow for $\dd \log \SP_e$ and $\dd \log s$ of kinematic invariants $s \in \kinematics$ and do not write these in $S$.} $S$, according to the \emph{symbol calculus} of \cite{DuhrGanglRhodes:PolygonsAndSymbols,Duhr:HopfAlgebrasCoproductsSymbols} and also following \cite{BognerBrown:SymbolicIntegration}. 
	So in particular $F \in \BarObjects \left( S_{\emptyset} \right)$ for $S_{\emptyset} \defas \set{\psipol,\phipol}$.
	In this language, the essence of the polynomial reduction algorithm of \cite{Brown:TwoPoint} can be stated as
	\begin{lemma}%
		\label{lemma:polynomial-reduction}
		If $h \in \BarObjects \left( S \right)$ and all $f\in S$ are linear $f = A_f \SP_e + B_f$ in $\SP_e$ and $ H = \int_0^{\infty} h\ \dd \SP_e$ converges, then $H \in \BarObjects \left( S_e \right)$ where $S_e$ is the set of irreducible factors\footnote{Here we drop pure constants $c$ (since $\dd \log c = 0$) and monomials.}
 of
		\begin{equation}%
			\label{eq:polynomial-reduction}
			\setexp{A_f,B_f}{f \in S}
			\ \text{and the resultants}\ 
			\setexp{\resultant[\SP_e]{f}{g} \defas A_f B_g - A_f B_g}{f,g \in S}.
		\end{equation}
	\end{lemma}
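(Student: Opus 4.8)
The plan is to run the argument of \cite{Brown:TwoPoint}: treat $\SP_e$ as the last integration variable, rewrite $h$ as a hyperlogarithm in $\SP_e$ with coefficients in the remaining variables, take a primitive, and evaluate at the two endpoints, tracking how the alphabet transforms throughout. \emph{Step 1 (fibration over $\SP_e$).} Since every $f = A_f\SP_e + B_f \in S$ is affine in $\SP_e$ it has a single root $\sigma_f \defas -B_f/A_f$, and $d\log f = d\log A_f + d\log(\SP_e - \sigma_f)$ splits into an $\SP_e$-free part and a part with a simple pole in $\SP_e$. I would use this to bring $h$ into the fibered form $h = \sum_i P_i \cdot \L_{w_i}(\SP_e)$, a finite sum in which every $\L_{w_i}(\SP_e)$ is a hyperlogarithm in $\SP_e$ with letters in $\set{0}\cup\setexp{\sigma_f}{f\in S}$, every $P_i$ is a rational function of $\SP_e$ with poles only in $\set{0}\cup\setexp{\sigma_f}{f\in S}$, and all the $\SP_e$-independent coefficients lie in $\BarObjects(S^{\flat})$, where $S^{\flat}$ collects the irreducible factors of the $A_f$ and $B_f$ together with the kinematic data $\kinematics$ (none of which involves $\SP_e$). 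A partial-fraction decomposition in $\SP_e$ then exhibits each $P_i$ as a $\Q$-linear combination of $1$, the $\SP_e^{-k}$ and the $(\SP_e - \sigma_f)^{-k}$.

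\emph{Step 2 (primitive).} The $\Q$-span of hyperlogarithms in $\SP_e$ with letters in $\set{0}\cup\setexp{\sigma_f}{f\in S}$ is a shuffle algebra closed under the primitive maps $g\mapsto\int^{\SP_e} g\,\frac{dt}{t}$ and $g\mapsto\int^{\SP_e} g\,\frac{dt}{t-\sigma_f}$, which handles the simple-pole parts of the $P_i$; the higher-order poles $(\SP_e - \sigma_f)^{-k}$ with $k\geq 2$ are disposed of by integration by parts, lowering $k$ at the cost of differentiating a hyperlogarithm, and the hyperlogarithms are closed under $\partial_{\SP_e}$. Hence $\int_0^{\SP_e} h\,dt = \sum_j \L_{u_j}(\SP_e)$ is again a sum of hyperlogarithms in $\SP_e$ with letters in $\set{0}\cup\setexp{\sigma_f}{f\in S}$ and coefficients in $\BarObjects(S^{\flat})$.

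\emph{Step 3 (endpoints and the new alphabet).} Now $H = \restrict{\int_0^{\SP_e} h\,dt}{\SP_e=0}^{\infty}$, both limits read as shuffle-regularised limits; convergence of $H$ forces the would-be $\log\SP_e$-divergences to cancel, so these regularised values are the honest ones. The regularised limit at $\SP_e\to 0$ of a hyperlogarithm with letters in $\set{0}\cup\setexp{\sigma_f}{f}$ is a $\Q$-linear combination of multiple polylogarithms in the $\sigma_f$, whose symbol letters in the surviving variables are generated by $d\log\sigma_f$ and $d\log(\sigma_f-\sigma_g)$; the limit at $\SP_e\to\infty$, obtained after $\SP_e\mapsto\SP_e^{-1}$, likewise produces letters $d\log\sigma_f$ and $d\log(\sigma_f^{-1}-\sigma_g^{-1})$. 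With $\sigma_f = -B_f/A_f$ we have
\begin{equation*}
	d\log\sigma_f = d\log B_f - d\log A_f,\qquad
	d\log(\sigma_f - \sigma_g) = d\log\resultant[\SP_e]{f}{g} - d\log A_f - d\log A_g,
\end{equation*}
and similarly $d\log(\sigma_f^{-1}-\sigma_g^{-1}) = d\log\resultant[\SP_e]{f}{g} - d\log B_f - d\log B_g$, so every letter that appears is a product of powers of the $A_f$, the $B_f$ and the resultants $\resultant[\SP_e]{f}{g}$, up to $\SP$-monomials and kinematic invariants (which $\BarObjects$ always permits). Dropping constants and monomials and passing to irreducible factors leaves exactly the set $S_e$ of the statement, so $H\in\BarObjects(S_e)$.

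I expect Step 1 to be the real work: making the fibered form precise uniformly in the length of the iterated integrals — i.e.\ showing that any element of $\BarObjects(S)$ whose symbol entries are linear in $\SP_e$ can indeed be written as above with exactly the stated vertical and horizontal alphabets — and keeping the shuffle-regularisation bookkeeping tight enough that the cancellation of the boundary divergences is genuinely underwritten by the convergence hypothesis. Steps 2 and 3 are then standard manipulations with hyperlogarithms, using only that they form a shuffle algebra closed under primitives, differentiation, and regularised limits.
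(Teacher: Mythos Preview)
Your proposal is correct and follows precisely the approach the paper has in mind. Note that the paper does not actually prove this lemma: it states the result as ``the essence of the polynomial reduction algorithm of \cite{Brown:TwoPoint}'' and defers entirely to that reference, supplementing only the remark in section~\ref{sec:hyperlogs} that primitives in $L(\Sigma)$ introduce the extra denominators $\sigma_i-\sigma_j$, which ``mirrors'' the lemma. Your three steps---fibration in $\SP_e$, taking primitives within $L(\Sigma)$, and identifying the endpoint alphabet via $d\log\sigma_f$ and $d\log(\sigma_f-\sigma_g)$ in terms of $A_f$, $B_f$ and the resultants---are exactly Brown's argument and match that remark, so there is nothing to distinguish your route from the paper's.
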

	\begin{example}\label{ex:diagbox-massless-polynomial-reduction}
		From the parametric integrand $F$ of \eqref{eq:double-triangle-2dim-representation} we read off the polynomials $S_{\emptyset} = \set{u + x + py + s x y, 1+x, 1+y}$ such that $F \in \BarObjects \left( S_{\emptyset} \right)$.
		Using \eqref{eq:polynomial-reduction} we first deduce $\int_0^{\infty} F \dd y\in \BarObjects \left( S_y \right)$ with $S_y = \set{1+x, u+x, p+sx, u+x - p -sx}$ and then apply \eqref{eq:polynomial-reduction} again to obtain 
		$	\iint_0^{\infty} F\ \dd y \dd x 
			\in
			\BarObjects \left( S_{y,x} \right)
		$.
		In fact we reproduce \eqref{eq:diagbox-massless-symbol} since
		\begin{equation*}
			S_{y,x}
			\defas \left( S_x \right)_y
			= \set{u-p, 1-s, 1-u, p-s, u-1-p+s, p-su}.
		\end{equation*}
	\end{example}
	Under the assumptions of this lemma, \cite{Brown:TwoPoint} describes an entirely combinatorial-algebraic algorithm to effectively compute the integral $\int_0^{\infty} h\ \dd\SP_e$. 
	Let us stress that in particular it does not need any numeric evaluations or separate input of boundary values to fix integration constants, which sometimes is a problem for example within the method of differential equations.
	Details of our implementation will be given in the forthcoming publication of our program.

	We therefore formulate the prerequisite for parametric integration as
	\begin{definition}
		$G$ is called \emph{linearly reducible} if for some ordering $e_1,\ldots,e_N$ of its edges there exist sets $S_n\subset \Q[\SP_{n+1},\ldots,\SP_N]$ of polynomials for all $0\leq n<N$ ($S_0 \defas S_{\emptyset}$) such that for any convergent parametric integrand $F \in \BarObjects \left( S_{\emptyset} \right)$ the partial integrals
		\begin{equation}
			f_n \defas \left[
				\prod_{e=1}^{n} \int_0^{\infty} \dd \SP_e
			\right]
			F
			\quad\text{lie in}\quad
			\BarObjects \left( S_n \right)
			\label{eq:partial-integrals}%
			\quad\text{for any $n + 1 < N = \abs{E}$}
		\end{equation}
		and all $g \in S_n $ are linear in $\SP_{n+1}$.
	\end{definition}
	As in example \ref{ex:diagbox-massless-polynomial-reduction}, repeated application of lemma \ref{lemma:polynomial-reduction} can suffice to prove linear reducibility in simple cases (c.f. the \emph{Fubini} algorithm in \cite{Brown:TwoPoint}), but for this article we employed the way more powerful method of \emph{compatibility graphs} that was developed in \cite{Brown:PeriodsFeynmanIntegrals}.
	This algorithm computes for each set $I \subset E$ of edges a set $S_I \subset \Q[\setexp{\SP_e}{e \notin I}]$ of irreducible polynomials such that the partial integrals $f_I \defas \prod_{e \in I} \int_0^\infty \dd\SP_e\ F $ are analytic outside the \emph{Landau variety} $L_I = \bigcup_{g \in S_I} \set{g=0}$ defined in \cite{Brown:PeriodsFeynmanIntegrals}. These sets $S_I$ are typically much smaller than the upper bounds obtained by lemma \ref{lemma:polynomial-reduction} alone.

	\subsection{Hyperlogarithms}%
	\label{sec:hyperlogs}
	The direct integration of iterated integrals of many variables is possible symbolically as shown in \cite{BognerBrown:SymbolicIntegration}, whereas our approach of \cite{Brown:TwoPoint} is to consider the dependence of the integrand on the next integration variable $z = \SP_n$ only, which reduces the function to the one-dimensional integrals of
	\begin{definition}
		For any word $w \in \Sigma^{\times}$ in letters $\setexp{\omega_{\sigma}}{\sigma\in\Sigma}$ over a finite set $0\in\Sigma\subset\C$, the associated \emph{hyperlogarithm} \cite{LappoDanilevsky} is the iterated integral defined by
		\begin{equation}
			\Hyper_{\omega_0^n} (z)
			\defas \frac{\log^n z}{n!}
			\quad\text{for any $n\in\N_0$ and}\quad
			\Hyper_{\omega_{\sigma}w} (z)
			\defas \int_0^z \frac{\dd z'}{z'-\sigma} L_w(z'). %
			\label{eq:Hyper-def}%
		\end{equation}
	\end{definition}
	\begin{remark}
			These functions are analytic and in general multi-valued on $\C \setminus \Sigma$, but uniquely defined upon restriction to $z \in \C \setminus (-\infty,0]$ and $\abs{z}< \min_{0 \neq \sigma\in \Sigma} \abs{\sigma}$. Also called \emph{Goncharov polylogarithms}, we write
			\begin{equation}\label{eq:hyperlog-def}
				G\left( \sigma_1,\ldots,\sigma_n; z \right)
				\defas
				\Hyper_{\sigma_1 \ldots \sigma_n} (z)
				\defas
				\Hyper_{\omega_{\sigma_1}\ldots\omega_{\sigma_n}} (z)
			\end{equation}%
			and can identify them with a special family of multiple polylogarithms:
			For arbitrary $\sigma_1,\ldots,\sigma_r \in \Sigma\setminus\set{0}$, $n_1,\ldots,n_r\in\N$ and $\abs{z}<\min_{1\leq j\leq r} \abs{\sigma_j}$,
		\begin{equation}
			\Hyper_{\omega_0^{n_r-1}\omega_{\sigma_r} \cdots\, \omega_0^{n_2-1}\omega_{\sigma_2}\omega_0^{n_1-1} \omega_{\sigma_1}} (z)
			= (-1)^r
			\Li_{n_1,\ldots,n_r}\left( \frac{\sigma_2}{\sigma_1}, \ldots, \frac{\sigma_r}{\sigma_{r-1}}, \frac{z}{\sigma_r} \right). %
		\end{equation}
	\end{remark}
	This construction and partial fractioning make it obvious that any function
	\begin{equation}\label{eq:hyperlog-algebra}
		f
		\in
		L(\Sigma) 
		\defas 
		\Q\left[z,
						\setexp{\frac{1}{z-\sigma}}{\sigma\in\Sigma},
						\setexp{L_w}{w\in\Sigma^{\times}}
		\right]
	\end{equation}
	in the algebra $L(\Sigma)$ spanned by the hyperlogarithms has a primitive $\partial_z F(z) = f (z)$,
	\begin{equation*}
		F
		\in
		\Q\left[\Sigma \cup \setexp{\frac{1}{\sigma_i-\sigma_j}}{\sigma_i\neq \sigma_j \ \text{from}\ \Sigma}\right]
		\tp 
		L(\Sigma)
	\end{equation*}
	possibly involving the additional denominators $\sigma_i-\sigma_j$. This mirrors lemma \ref{lemma:polynomial-reduction} since
	\begin{equation}
		\label{eq:hyperlog-alphabet-from-symbol}%
			f_I \in L\left( \Sigma_{I,e} \right) \left( \SP_e \right)
			\quad\text{with}\quad
			\Sigma_{I,e}
			\defas
			\set{0}
			\cup
			\bigcup_{f \in S_I} \set{\text{zeros of $f$ with respect to $\SP_e$}}
	\end{equation}
	whenever $f_I \in \BarObjects \left( S_I \right)$. The final answer $f=f_{\abs{E}-1}$ after integrating out all Schwinger variables (but $\SP_{\abs{E}}=1$) has a symbol in $f \in \BarObjects \left( S_{\abs{E}} \right)$.

	To represent a function $ f \in \BarObjects \left( S \right)$ in terms of one-dimensional iterated integrals we choose an order $s_1,\ldots,s_n$ of the remaining variables $\kinematics$ and express $f$ in the form\footnote{This corresponds to fixing the path of integration in the iterated integral to the piecewise linear $(0,\ldots,0) \rightarrow (0,\ldots,0,s_n) \rightarrow (0,\ldots,0,s_{n-1},s_n) \rightarrow \ldots \rightarrow (0,s_2,\ldots,s_n) \rightarrow (s_1,\ldots,s_n)$. The discussion in section 2.7 of \cite{HennSmirnov:Bhabha} might further clarify this process.}
	 \begin{equation}%
		 \label{eq:fibration-basis}
		 f
		 \in
		 L\left( \Sigma_{s_1} \right) ( s_1 )
		 \tp
		 \ldots
		 \tp
		 L\left( \Sigma_{s_n} \right) ( s_n )
	 \end{equation}
	 where $\Sigma_{s_i} = \set{0} \cup \set{\text{zeros of $S^{(i)}$ w.r.t. $s_i$}}$ and we set $S^{(1)} \defas S$ and recursively define $S^{(i+1)}$ as the irreducible factors\footnote{Again we omit constants and monomials as these are explicitly taken care of by $0\in \Sigma_{s_i}$.} of $\lim_{s_i \rightarrow 0} S^{(i)}$.
	\begin{example}\label{ex:diagbox-massless-fibration-basis}
		If we chose the order $0 < s \ll u \ll p$ of variables, we can write the integrals $f$ of the expansion coefficients $F$ of the integrand \eqref{eq:double-triangle-2dim-representation} in the form
		\begin{equation}%
			\label{eq:diagbox-massless-fibration-basis}
			f
			\in
			L\left( \set{0, 1, p, 1+p-u,\frac{p}{u}} \right) (s)
			\tp
			L\left( \set{0, 1, p, 1+p} \right) (u)
			\tp
			L\left( \set{0, -1} \right) (p)
		\end{equation}
		by taking $S^{(1)}$ from \eqref{eq:diagbox-massless-symbol} and deducing $S^{(2)} = \lim_{s\rightarrow 0} S^{(1)} = \set{u-p,1-u,u-1-p}$ and $S^{(3)} = \lim_{u\rightarrow 0} S^{(2)} = \set{1+p}$. 
		Indeed we find precisely the letters given in \eqref{eq:diagbox-massless-fibration-basis} in our results like \eqref{eq:diagbox-massless-2}, \eqref{eq:diagbox-massless-1}.
	\end{example}
	\begin{example}
		The final set $S_{\abs{E}} = \Sigma_{\Delta} \cup \set{z\bar{z}-1}$ in the polynomial reduction of the linearly reducible massless off-shell three-point graph $\Delta_{3,5}$ of figure \ref{fig:vertices-3loops} shows that
		\begin{equation}%
			\label{eq:v3_5-fibration-basis}
			\Phi\left( \Delta_{3,5} \right)
			=
			\frac{\Gamma(2+3\varepsilon)}{p_1^{4+6\varepsilon}} \sum_{n=0}^{\infty} f_n \varepsilon^n
			\quad\text{with}\quad
			f_n
			\in
			L\left( \set{ 0, 1, \bar{z}, \frac{1}{\bar{z}}} \right) (z)
			\tp
			L\left( \set{ 0, 1} \right) (\bar{z}).
		\end{equation}
		Results for $f_0$ and $f_1$ are supplied (in this form) in the attached file.
	\end{example}

\phantomsection
\pdfbookmark[1]{References}{final-bibliography}
\bibliographystyle{JHEP}
\bibliography{../qft}

\end{document}